\documentclass[11pt]{article}
\usepackage{amsfonts}
\usepackage{amssymb}
\usepackage{amsmath}
\usepackage{graphicx}
\usepackage{enumerate}
\usepackage{hhline}
\usepackage[ruled,vlined,linesnumbered,commentsnumbered]{algorithm2e}
\usepackage{fullpage}
\usepackage{amsthm}
\usepackage{xcolor}
\usepackage{tikz}
\usetikzlibrary{arrows}

\usepackage{hyperref}

\def\eps{\varepsilon}
\def\prob{\mathrm{Prob}}
\newcommand{\expec}{\mathbb{E}}

\long\def\CProof #1\CQED{}

\def\blackslug{\hbox{\hskip 1pt \vrule width 4pt height 8pt
    depth 1.5pt \hskip 1pt}}
\def\QED{\quad\blackslug\lower 8.5pt\null\par}
% In-line QED, for ending a proof with a $$ formula

%\def\Proof{\noindent{\bf Proof:~}}

%\def\proof{\Proof}

\long\def\PPP#1{\noindent{\bf Proof:}{ #1}{\quad\blackslug\lower 8.5pt\null}}

\long\def\denspar #1\densend
{#1}

\newcommand{\tO}{\tilde{O}}
\newcommand{\DkS}{Densest $k$-Subgraph\xspace}

\newcounter{note}[section]

\newtheorem{lemma}{Lemma}[section]
\newtheorem{theorem}[lemma]{Theorem}
\newtheorem{observation}[lemma]{Observation}

\newtheorem{definition}[lemma]{Definition}

\newtheorem{remark}[lemma]{Remark}

%\spacyformat

%%%%%%%%%%%%%%%%%%%%%%%%%%%
\begin{document}

\title{The Densest $k$-Subhypergraph Problem 
}

\author{
Eden Chlamt\'a\v{c}
\thanks{Department of Computer Science, Ben Gurion University. Partially supported by ISF grant 1002/14.
Email: {\tt chlamtac@cs.bgu.ac.il.
}}
\and
Michael Dinitz
\thanks{
Department of Computer Science,
Johns Hopkins University.  Partially supported by NSF grants 1464239 and 1535887.
Email: {\tt mdinitz@cs.jhu.edu
}}
\and 
Christian Konrad
\thanks{
ICE-TCS, School of Computer Science, Reykjavik University.
Supported by Icelandic Research Fund grants 120032011 and 152679-051.
Email: {\tt christiank@ru.is}.
}
\and
Guy Kortsarz\thanks{
Computer Science Department, 
Rutgers University, Camden, NY, USA. 
Partially supported by NSF grants  1218620 and  1540547. 
Email: {\tt guyk@crab.rutgers.edu}}
\and
George Rabanca \thanks{Department of Computer Science,
The Graduate Center, CUNY, USA.  
Email: {\tt grabanca@gmail.com}}
}
%\date{\today}
\date{\empty}

\begin{titlepage}
\def\thepage{}
\maketitle
\begin{abstract}
The Densest $k$-Subgraph (D$k$S) problem, and its corresponding minimization problem Smallest $p$-Edge Subgraph (S$p$ES), have come to play a central role in approximation algorithms.  This is due both to their practical importance, and their usefulness as a tool for solving and establishing approximation bounds for other problems.  These two problems are not well understood, and it is widely believed that they do not an admit a subpolynomial approximation ratio (although the best known hardness results do not rule this out).

In this paper we generalize both D$k$S  and S$p$ES from graphs to hypergraphs.  We consider the Densest $k$-Subhypergraph problem (given a hypergraph $(V, E)$, find a subset $W\subseteq V$ of $k$ vertices so as to maximize the number of hyperedges contained in $W$) and define the Minimum $p$-Union problem (given a hypergraph, choose $p$ of the hyperedges so as to minimize the number of vertices in their union).  We focus in particular on the case where all hyperedges have size~$3$, as this is the simplest non-graph setting.  For this case we provide an $O(n^{4(4-\sqrt{3})/13 + \epsilon}) \leq O(n^{0.697831+\epsilon})$-approximation (for arbitrary constant $\epsilon > 0$) for Densest $k$-Subhypergraph and an $\tO(n^{2/5})$-approximation for Minimum $p$-Union.  We also give an $O(\sqrt{m})$-approximation for Minimum $p$-Union in general hypergraphs.  Finally, we examine the interesting special case of interval hypergraphs (instances where the vertices are a subset of the natural numbers and the hyperedges are intervals of the line) and prove that both problems admit an exact polynomial time solution on these instances. 
\end{abstract}

\end{titlepage}\pagenumbering {arabic}

%%%%%%%%%%%%%%%%%%%%%%%%%%%%%%%%%%%%%%%%%%%%%%%%%%%
\section{Introduction}
%%%%%%%%%%%%%%%%%%%%%%%%%%%%%%%%%%%%%%%%%%%%%%%%%%%

Two of the most important outstanding problems in approximation 
algorithms are the approximability of the \emph{Densest $k$-Subgraph} problem 
(D$k$S) and its minimization version, the \emph{Smallest $p$-Edge Subgraph} problem
(S$p$ES or min-D$k$S).  In D$k$S we are given as input a graph 
$G = (V, E)$ and an integer $k$, and the goal is to find a subset 
$V' \subseteq V$ with $|V'| = k$ which maximizes the number of edges in 
the subgraph of $G$ induced by $V'$.  In the minimization version, 
S$p$ES, we are given a lower bound $p$ 
on the number of required edges and
the goal is to find a set $V' \subseteq V$ of minimum size
so that the subgraph induced by $V'$ has at least $p$ edges.  These problems have proved to be extremely useful: for example, a variant of D$k$S was recently 
used to get a new cryptographic system
\cite{crypt}. The same variant of the D$k$S problem 
was shown to be central in understanding 
financial derivatives~\cite{der}.
The best-known algorithms for many other problems involve using an algorithm for \DkS or S$p$ES as a black box (e.g.~\cite{Nut10,GHNR10,DKN14}).

Despite decades of work, very little is actually known about these problems.
The first approximation ratio for D$k$S was 
$O(n^{2/5})$ \cite{kp93} and was devised in 1993. These days,
23 years later, the best known ratio for the 
\DkS is $O(n^{1/4 + \epsilon})$ for arbitrarily small constant $\epsilon > 0$~\cite{BCCFV10}, and the best known approximation for S$p$ES is $O(n^{3-2\sqrt{2} + \epsilon})$ for arbitrarily small constant $\epsilon > 0$~\cite{CDK12}.  Given the slow improvement over 23 years, it is widely believed 
that D$k$S and S$p$ES do not admit better than 
a polynomial approximation ratio. Furthermore, the existing approximation guarantees are tight assuming the recently conjectured hardness of finding a planted dense subgraph in a random graph (for certain parameters). However, there has been very little progress towards an actual proof of hardness of approximation.  It is clear that they are both NP-hard, but that is all that is known under the assumption that $P \neq NP$.  Under much stronger complexity assumptions it is known that they cannot be approximated better than some constant~\cite{Khot06,Fei02} or any constant~\cite{AAMMW11}, but this is still a long way from the conjectured polynomial hardness. 

Based on the believed hardness of D$k$S and S$p$ES, they have been used many times 
to give evidence for hardness of approximation.
For example, consider the {\em Steiner $k$-forest} problem
in which the input is an edge weighted 
graph, a collection of $q$ pairs $\{s_i,t_i\}_{i=1}^q$,
and a number $k<q$. The goal is to find a minimum cost subgraph
that connects at least $k$ of the pairs. 
It is immediate to see that S$p$ES is a special case of 
the Steiner $k$-forest problem\footnote{Given an instance $(G= (V, E), p)$ of S$p$ES, create an instance of Steiner $k$-Forest on a star with $V$ as the leaves, uniform weights, a demand pair for each edge in $E$, and $k=p$.}, and hence it seems highly unlikely
that the Steiner $k$-Forest problem admits a better than 
polynomial approximation ratios.

Given the interest in and importance of D$k$S and S$p$ES, it is somewhat surprising that there has been very little exploration of the equivalent problems in \emph{hypergraphs}. 
A hypergraph is most simply understood as a collection $E$ of subsets over a 
universe $V$ of vertices, where each $e \in E$ is called a \emph{hyperedge} (so graphs are the special case when each $e \in E$ has cardinality $2$).
 In general hypergraphs, the obvious extensions of D$k$S and S$p$ES are quite intuitive. In the \emph{Densest $k$-Subhypergraph} (D$k$SH) problem we are 
given a hypergraph $(V, E)$
and a value $k$, and the goal is to find a set $W\subseteq V$ of size $k$ that 
contains the largest number of hyperedges from $E$.  In the \emph{Minimum $p$-Union} (M$p$U) problem we are given a hypergraph 
and a number $p$, and the goal is to choose $p$ of the hyperedges to minimize the size of their union.

Clearly these problems are at least as hard as the associated problems in graphs, but how much harder are they?  Can we design nontrivial approximation algorithms?  Can we extend the known algorithms for graphs to the hypergraph setting?  Currently, essentially only lower bounds are known: Applebaum~\cite{Applebaum13} showed that they are both hard to approximate to within $n^{\epsilon}$ for some fixed $\epsilon > 0$, assuming that a certain class of one-way functions exist.  But it was left as an open problem to design any nontrivial upper bound (see footnote $5$ of~\cite{Applebaum13}).  

\subsection{Our Results}

In this paper we provide the first nontrivial upper bounds for these problems.  Let $n$ denote the number of vertices and $m$ denote the number of hyperedges in the input hypergraph.  Our first result is an approximation for Minimum $p$-Union in general hypergraphs:

\begin{theorem} \label{thm:MkU-main}
There is an $O(\sqrt{m})$-approximation for the Minimum $p$-Union problem.
\end{theorem}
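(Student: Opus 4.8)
The plan is to reformulate Minimum $p$-Union as an integer covering problem, relax it to a linear program, round it with a single threshold, and then patch the rounded solution. First, note the equivalent formulation: find $W \subseteq V$ of minimum size subject to $|\{e \in E : e \subseteq W\}| \geq p$ (given such a $W$, any $p$ of the hyperedges it contains have their union inside $W$). Let $t = \mathrm{OPT}$ and let $W^*$ be an optimal vertex set, $|W^*| = t$. I would write the LP with a variable $x_v \in [0,1]$ for each $v \in V$ and $y_e \in [0,1]$ for each $e \in E$: minimize $\sum_{v \in V} x_v$ subject to $\sum_{e \in E} y_e \geq p$ and $y_e \leq x_v$ for every $e \in E$ and every $v \in e$. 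Setting $x_v = 1$ iff $v \in W^*$ and $y_e = 1$ iff $e \subseteq W^*$ gives a feasible integral solution of value $t$, so the LP optimum is at most $t$.

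Next I would round. Let $(x^*,y^*)$ be an optimal LP solution, so $\sum_v x^*_v \leq t$, and set $W = \{v \in V : x^*_v \geq 1/\sqrt{m}\}$. By Markov's inequality $|W| \leq \sqrt{m}\sum_v x^*_v \leq \sqrt{m}\,t$. Any hyperedge $e$ with $y^*_e \geq 1/\sqrt{m}$ has $x^*_v \geq y^*_e \geq 1/\sqrt{m}$ for all $v \in e$, hence $e \subseteq W$; and since each $y^*_e \leq 1$, the number of such hyperedges is at least $\sum_{e: y^*_e \geq 1/\sqrt m} y^*_e = \sum_e y^*_e - \sum_{e : y^*_e < 1/\sqrt m} y^*_e \geq p - m/\sqrt{m} = p - \sqrt{m}$. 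Thus $W$ uses at most $\sqrt{m}\,t$ vertices and already contains at least $p - \sqrt{m}$ hyperedges.

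It remains to patch in the missing hyperedges. Let $c = |\{e \in E : e \subseteq W\}| \geq p - \sqrt{m}$; if $c \geq p$ we output $W$, so assume $c < p$ and put $q = p - c \leq \sqrt{m}$. There are at least $q$ hyperedges of size at most $t$ not contained in $W$: there are at least $p$ hyperedges of size at most $t$ in total (all hyperedges of $W^*$), and at most $c$ of them lie inside $W$. Add to $W$ the vertices of the $q$ hyperedges of smallest cardinality among those not contained in $W$; each has cardinality at most $t$, so this adds at most $qt \leq \sqrt{m}\,t$ vertices, and the resulting set $W'$ contains the $c$ hyperedges inside $W$ together with the $q$ newly chosen ones, i.e.\ at least $p$ hyperedges. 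Since $|W'| \leq 2\sqrt{m}\,t = O(\sqrt{m})\cdot\mathrm{OPT}$, outputting any $p$ of these hyperedges proves the theorem. Note that the algorithm never needs to know $t$: the LP finds a fractional solution of value at most $t$ automatically, and the patching step just takes the $q$ globally smallest hyperedges outside $W$, whose size bound $\leq t$ is used only in the analysis.

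There is no genuinely hard step here, but the delicate point is balancing the two $\sqrt{m}$ terms: the threshold $1/\sqrt{m}$ is chosen precisely so that $|W| \leq \sqrt{m}\,t$ matches the patching cost $qt \leq \sqrt{m}\,t$ (for a general threshold $\tau$ these would be $t/\tau$ and $\tau m t$), together with the bookkeeping that the $q$ patched hyperedges are distinct from the $c$ already inside $W$, that $q \leq \sqrt{m}$, and that enough small hyperedges lie outside $W$. An LP-free alternative is to split on whether $p \leq \sqrt{m}$ (take the $p$ smallest hyperedges, each of size at most $t$) or $p > \sqrt{m}$ (guess $t$ and argue combinatorially about a heavy vertex of $W^*$), but the LP route gives the cleanest unified bound.
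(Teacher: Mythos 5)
Your proof is correct, but it takes a genuinely different route from the paper's. The paper works in the bipartite incidence graph and repeatedly calls a ``minimum vertex expansion'' subroutine (\textsc{Min-Exp}, implemented either via network flow or via Charikar-style LP rounding in the appendices), accumulating hyperedges until it has at least $p-\sqrt{m}$ of them; an averaging argument over the iterations shows the accumulated neighborhood has size $O(\sqrt{m}\cdot r)$, and the last $\le\sqrt{m}$ hyperedges are patched in greedily by smallest degree, each costing at most $r$ vertices. You instead write a single LP that directly encodes the cardinality demand $\sum_e y_e \ge p$, do a one-shot threshold rounding at $1/\sqrt{m}$ (Markov for the vertex count, a mass argument for the edge count), and patch the $\le\sqrt{m}$ missing hyperedges by taking the globally smallest ones outside $W$. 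The patching step's analysis -- that at least $q$ of the ``missing'' hyperedges have size $\le t$ because $W^*$ contributes $p$ small hyperedges and at most $c$ lie inside $W$ -- is the one nontrivial bookkeeping point, and you handle it correctly. Both proofs share the same conceptual skeleton (collect $p-\sqrt m$ edges at total cost $O(\sqrt m)\cdot\mathrm{OPT}$, then patch the last $\sqrt m$ edges at cost $\mathrm{OPT}$ each), but yours avoids the iteration and the separate \textsc{Min-Exp} primitive entirely, making it somewhat more self-contained; the paper's formulation isolates a reusable minimum-expansion subroutine and dovetails with the later Lemma~\ref{lem:MkU-component} machinery, which needs exactly that kind of ``find a cheap chunk of edges'' primitive.
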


We then switch our attention to the low rank case, since this is the setting closest to graphs.  In particular, we focus on the \emph{$3$-uniform} case, where all hyperedges have size at most $3$.  In this setting it is relatively straightforward to design an $O(n)$-approximation for Densest $k$-Subhypergraph, although even this is not entirely trivial (the optimal solution could have size up to $k^3$ rather than $k^2$ as in graphs, which would make the trivial algorithm of choosing $k/3$ hyperedges only an $O(n^2)$-approximation rather than an $O(n)$-approximation as in graphs).  We show that by very carefully combining a set of algorithms and considering the cases where they are all jointly tight we can significantly improve this approximation, obtaining the following theorem:

\begin{theorem} \label{thm:3DkH-main}
For every constant $\epsilon > 0$, there is an $O(n^{4(4-\sqrt{3})/13 + \epsilon}) \leq O(n^{0.697831+\epsilon})$-approximation for the Densest $k$-Subhypergraph problem on $3$-uniform hypergraphs.
\end{theorem}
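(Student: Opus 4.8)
The plan is to run a small family of polynomial‑time algorithms and output the best solution; the family is designed so that for \emph{every} instance at least one member meets the claimed ratio, and the exponent $\tfrac{4(4-\sqrt3)}{13}$ is exactly the optimum of the resulting max--min optimization. Throughout we may assume, by trying all powers of $2$, that $OPT$ (the optimal number of hyperedges) is known up to a factor of $2$; and by randomly $3$‑colouring $V$ and keeping only the ``rainbow'' hyperedges (one vertex of each colour), we may assume the instance is tripartite on $V_1\cup V_2\cup V_3$, with the optimum $W^*=W_1^*\cup W_2^*\cup W_3^*$ still containing $\Omega(OPT)$ hyperedges and each $|W_i^*|\le k$ — the constant loss and the randomization are absorbed into $\epsilon$ (or removed by standard derandomization). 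Set $d:=OPT/k$; then some $a\in V_1$ has link‑degree $|\{(b,c)\in W_2^*\times W_3^*:\{a,b,c\}\in E\}|=\Omega(d)$, and more strongly the vertices of link‑degree $\Omega(d)$ already carry $\Omega(OPT)$ hyperedges. We may also assume $OPT\ge n^{4(4-\sqrt3)/13}$, since otherwise outputting a single hyperedge already suffices.

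The ingredients are: (i) a \emph{disjoint‑hyperedge (matching)} step, returning $\Omega(\min\{k,\ OPT/k^2\})$ hyperedges — a maximal matching inside $W^*$ has size $\Omega(OPT/k^2)$ — which is good when $OPT$ or $k$ is small; (ii) a \emph{random subset} step against the global bound $OPT\le m$, returning $\Omega\!\big(m(k/n)^3\big)$ hyperedges — good when $k$ is close to $n$; (iii) a \emph{densest‑subhypergraph} step: compute the maximum‑density sub‑hypergraph $H^\circ$ in polynomial time (it has density $\ge OPT/k$), output it if $|V(H^\circ)|\le k$ and otherwise output a random $k$‑subset of it, exploiting that every sub‑hypergraph of $H^\circ$ has density at most that of $H^\circ$; and (iv) the \emph{link--D$k$S} step: for each $a\in V_1$, run the D$k$S algorithm of~\cite{BCCFV10} on the link graph $N(a)\subseteq V_2\times V_3$ with vertex budget $(k,k)$, add $a$ back, and truncate — here losing a factor $2$ in the vertex budget costs only a constant factor, since $OPT_{2k}\le 8\,OPT_k$ in $3$‑uniform instances. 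Crucially, in (iv) we use not the worst‑case $n^{1/4+\epsilon}$ bound of D$k$S but its parametrized guarantee as a function of the number of vertices and the target average degree $\Omega(d/k)$ of the link graph, and we precede it by a greedy outer loop that peels off a few of the heaviest vertices of $V_1$ (and symmetrically $V_2,V_3$) so that the surviving link‑degrees concentrate around their average.

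The core of the argument is the case analysis. Fix $n$ and write $k=n^{\kappa}$, $d=n^{\delta}$ (so $OPT=n^{\kappa+\delta}$, with $0\le\delta\le 2\kappa\le 2$). For each $(\kappa,\delta)$ one shows that the best of (i)--(iv) — with the guarantee of (iv) sharpened using that (i) being tight bounds the number of heavy vertices, and the guarantee of (iii) sharpened using that when (i) and (iv) are near‑tight the optimal sub‑hypergraph is close to uniformly sparse — has ratio $n^{g(\kappa,\delta)+\epsilon}$, and one verifies $\max_{\kappa,\delta}g(\kappa,\delta)=\tfrac{4(4-\sqrt3)}{13}$. The worst configuration is the point where the matching bound, the link--D$k$S bound, and the densest‑subhypergraph bound all coincide; equating those three expressions yields a quadratic whose solution carries the $\sqrt3$, and clearing denominators produces the $13$.

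The main obstacle is step (iv) and its interaction with (i) and (iii): a single good link graph, analysed as a black box, certifies only $\Omega(d/n^{1/4+\epsilon})$ hyperedges and hence an $O(k\,n^{1/4+\epsilon})$ ratio, which is worthless once $k>n^{3/4}$, while the random‑subset bounds (ii)--(iii) are weak in an intermediate range of $k$. Bridging this gap is precisely what forces us to (a) use the fine‑grained, density‑dependent guarantee of D$k$S rather than a black box, (b) argue that when link‑degrees are spread out the optimal sub‑hypergraph cannot be too lopsided, so step (iii) improves, and (c) balance both against the matching bound — and it is the simultaneous tightness of these three that pins down the exponent. A secondary but necessary technicality is controlling vertex‑budget blow‑ups throughout (each ``guess a vertex / run D$k$S / take a random subset'' move spends a constant factor more than $k$ vertices), which is harmless because $OPT$ is polynomially related between budgets $k$ and $O(k)$.
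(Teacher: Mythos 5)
Your high-level plan is recognisably the right one — reduce to the D$k$S algorithm of Bhaskara et al.\ on link graphs of vertices, use its \emph{parametrised} guarantee rather than the worst-case $n^{1/4+\eps}$ bound, and trade this off against a degree-based algorithm — but the specific ingredients you list do not actually compose to give the claimed exponent, and several steps are hand-waved where the paper needs a precise mechanism.

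The most concrete gap is in the degree-based side of the tradeoff. The paper's key ``other'' algorithm (its Algorithm~2) is a three-stage greedy: take $K_1$, the $k/3$ highest-degree vertices (with $\Delta:=\min_{v\in K_1}\deg(v)$), then greedily choose $K_2$ by $K_1$-degree and $K_3$ by $(K_1,K_2)$-degree; this yields $\Omega(\Delta k^3/n^2)$ hyperedges. Your substitutes are strictly weaker. A uniform random $k$-subset (your step (ii)) only yields $\Omega(m(k/n)^3)$, and even after lower-bounding $m\ge \Omega(\Delta k)$ via the top-degree vertices this is $\Omega(\Delta k^4/n^3)$ — worse than the paper's $\Omega(\Delta k^3/n^2)$ by a factor $n/k$. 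Your matching step (i) is dominated by the trivial ``pick $k/3$ arbitrary edges'' (and is not even used in the paper's improved bound). Your densest-subhypergraph step (iii) gives ratio $(|V(H^\circ)|/k)^2$ when $|V(H^\circ)|>k$, which is unbounded in general; the vague ``sharpening when (i) and (iv) are near-tight'' is not a theorem, and I do not see how it recovers the $\Omega(\Delta k^3/n^2)$ bound. Because of this, your tradeoff leaves an uncovered regime (roughly $k$ around $n^{0.55}$ with $\Delta k/d$ close to $n$), which is exactly the regime that pins down the exponent in the paper.

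A second, structural gap: the parameter that actually controls the tradeoff is not $(\kappa,\delta)$ but $\beta$ with $\Delta k/d=n^\beta$, i.e.\ the ratio of the \emph{maximum} degree in $H'=H[V\setminus K_1]$ to the optimal average degree. This is what bounds the size of the pruned link graph to $O(\Delta k/d)=O(n^{\min\{\beta,1\}})$ vertices and lets the parametrised D$k$S guarantee bite. Your ``peel off a few of the heaviest vertices so link-degrees concentrate'' is the right intuition, but it is not formalised, and without the explicit $K_1/\Delta$ bookkeeping you have no handle on $\beta$; your $(\kappa,\delta)$-parametrisation cannot express the two competing exponents $h_1=2-\alpha-\beta$ and $h_2=\alpha(2-\alpha/\beta)$, whose two-way balance (not a three-way balance of your (i), (iii), (iv)) is what produces $\alpha=(18+2\sqrt3)/39$ and the final value $4(4-\sqrt3)/13$. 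Finally, the peeling introduces a case you do not handle: when a constant fraction of the optimal hyperedges touch the removed high-degree vertices, you must argue separately that the problem collapses to a (weighted) graph D$k$S instance — this is the paper's Lemma~4.2 and it gives a much better $n^{1/4+\eps}$ bound there — otherwise the restriction to $H'$ is not justified.
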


Adapting these ideas to the minimization setting gives an improved bound for Minimum $p$-Union as well.

\begin{theorem} \label{thm:MkU-uniform}
There is an $\tO(n^{2/5})$-approximation for the Minimum $p$-Union problem on $3$-uniform hypergraphs.
\end{theorem}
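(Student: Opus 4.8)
The plan is to follow the case-analysis template behind Theorem~\ref{thm:3DkH-main}: assemble several algorithms, each strong in its own parameter range, and show their pointwise best is always $\tilde{O}(n^{2/5})$. Write $s=|W^*|$ for the size of the optimal union and guess $s$ up to a factor of two (an $O(\log n)$ overhead); the task becomes to produce $W$ with $|W|\le \tilde{O}(n^{2/5})\cdot s$ containing at least $p$ hyperedges. Two trivial algorithms handle the extremes: output all of $V$, with ratio $n/s$ (so we may assume $s\ll n^{3/5}$); and output the union of $p$ arbitrary hyperedges, with ratio at most $3p/s$ (so we may assume $3p/s\gg n^{2/5}$, which with $p\le\binom s3$ forces $s\gg n^{1/5}$). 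Thus $n^{1/5}\ll s\ll n^{3/5}$ and the average $W^*$-degree $\bar d=3p/s$ is large.

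The main new tool for the bulk of the remaining range is \emph{dense-core extraction}. For a threshold $\rho$, iteratively delete every vertex of current degree below $\rho$; the survivor $H_\rho$ has minimum degree at least $\rho$, so $|H_\rho|\le 3|E(H_\rho)|/\rho$. The point is that $H_\rho$ cannot be edge-poor: peeling $W^*$ itself at threshold $\rho$ destroys fewer than $s\rho$ hyperedges (at most $s$ deletions, each killing fewer than $\rho$), and the resulting core of $W^*$ — a subhypergraph of $G$ of minimum degree at least $\rho$ — survives the global peeling intact, since no vertex of it is ever eligible for deletion while the whole core is present; hence $|E(H_\rho)|\ge p-s\rho$. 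Running this over $\rho$ a power of two just below $p/s$ and topping up the at most $s\rho$ possibly-missing hyperedges with arbitrary ones yields a set of size at most $3|E(H_\rho)|/\rho+3s\rho$. When the relevant core is tight ($|E(H_\rho)|\approx p$) this is $O(p/\rho+s\rho)$, minimized near $\rho=\sqrt{p/s}$ to give ratio $O(\sqrt{p/s})$, which is $\tilde{O}(n^{2/5})$ unless $p\gg n^{4/5}s$; and when instead the core is edge-rich the input hypergraph is itself globally dense (has large $m$).

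So the genuinely hard case is: $s$ polynomially strictly between $n^{2/5}$ and $n^{3/5}$, $p\gg n^{4/5}s$ (so $W^*$ is very dense), and $m$ large. Here I would first apply the natural LP relaxation — indicators $x_v$ for vertices and $z_e$ for hyperedges, constraints $z_e\le x_v$ for $v\in e$ and $\sum_e z_e\ge p$, objective $\sum_v x_v$, with optimum at most $s$ — and round at a threshold $\theta\approx\epsilon p/m$, keeping $O(sm/(\epsilon p))$ vertices spanning $(1-\epsilon)p$ hyperedges; balancing the $O(\epsilon p)$ top-up gives ratio $O(\sqrt{m/s})$, which disposes of $m\ll n^{4/5}s$. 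For the final sliver one reduces to the graph problem: a vertex $v^*$ of above-average $W^*$-degree has a link graph with at least $\bar d$ edges on fewer than $s$ vertices, so the $O(n^{3-2\sqrt2+\epsilon})$-approximation for S$p$ES applied to the link captures $\bar d$ of $v^*$'s hyperedges using $O(n^{3-2\sqrt2+\epsilon})\cdot s$ vertices; interleaving this with re-peeling $W^*$ to keep a heavy pivot available — and, when $m$ is large enough, with the D$k$SH algorithm of Theorem~\ref{thm:3DkH-main} applied to a slightly oversized subhypergraph (using that the optimum of D$k$SH with parameter $k$ is at least $m(k/n)^3$ by averaging) — lets one collect all $p$ hyperedges, with the parameters tuned so that the worst case over all the algorithms above is $\tilde{O}(n^{2/5})$.

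The crux is this last regime, and the difficulty is twofold. First there is an exactness obstacle: every natural subroutine delivers $(1-o(1))p$ hyperedges rather than $p$, and since each missing hyperedge can cost up to three fresh vertices while $p$ may vastly exceed $n^{2/5}s$, the top-up must be charged carefully — it is precisely what dictates where each threshold is placed. Second, when $G$ is dense at the same degree scale as $W^*$, no single peeling threshold isolates a small dense core, and the pivot-to-S$p$ES reduction loses a polynomial factor \emph{per pivot}, so the number of pivots one must use has to be kept small. Reconciling these constraints is where the balancing optimization fixes the final exponent at $2/5$ — the analogue of the $4(4-\sqrt3)/13$ computation behind Theorem~\ref{thm:3DkH-main}.
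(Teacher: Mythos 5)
Your proposal is considerably more elaborate than the paper's proof and has a genuine gap at its core. The single most important tool in the paper's argument is Lemma~\ref{lem:MkU-component}: once you know the number of vertices $k$ in the optimum, it suffices to find \emph{any} subhypergraph on at most $fk$ vertices with average degree $\Omega(d/f)$ (where $d=3p/k$), and then iterating this finder and charging via a multiplicative-drop argument yields an $O(f\log p)$-approximation for M$p$U. This completely removes the ``exactness obstacle'' that you flag as the central difficulty of your plan — there is never any need to top up to exactly $p$ hyperedges, so the top-up accounting around $\rho\approx\sqrt{p/s}$, the careful threshold in the LP rounding, and the whole re-peeling/interleaving dance in your final sliver all become unnecessary. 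Because you do not invoke (or reprove) this reduction, your proof attempt is aimed at a strictly harder target, and the passage that should close the argument is a hand-wave: ``interleaving this with re-peeling $W^*$ \ldots lets one collect all $p$ hyperedges, with the parameters tuned \ldots.'' You yourself note that the pivot-to-S$p$ES step loses a polynomial factor per pivot and that the number of pivots must be kept small, but you do not establish this; roughly $s/3$ pivots at cost $O(n^{0.17+\eps})s$ each would blow the vertex budget whenever $s\gg n^{0.23}$, which is squarely inside your claimed hard range $n^{2/5}\lesssim s\lesssim n^{3/5}$.

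The route is also substantively different from the paper's. After the component-lemma reduction, the paper sets $K_1$ to the top $kn^{2/5}$ vertices by degree, lets $\Delta$ be the minimum degree in $K_1$, and splits cases on whether at least half the optimal edges hit $K_1$. If so, Lemma~\ref{lem:fix-MpU} handles it (easy greedy for double intersections; a \emph{single} weighted S$p$ES call on the link graph of $K_1$ for single intersections — not an iterated pivot). If not, $H'=H[V\setminus K_1]$ has maximum degree $\Delta$ and retains half the optimal edges; one run of the greedy Algorithm~\ref{alg:DkSH3-greedy1} either already succeeds or forces $\Delta=O(dn^{4/5}/k^2)$, after which a single run of the neighborhood Algorithm~\ref{alg:DkSH3-neighbor} (Lemma~\ref{lem:graph-min}) gives the required dense subhypergraph with $f=O(\max\{k,n^{2/5}/\sqrt k\})$, and the closing observation $d\le\Delta=O(dn^{4/5}/k^2)$ yields $k=O(n^{2/5})$ and hence $f=O(n^{2/5})$. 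Your dense-core peeling is a reasonable observation (the core of $W^*$ does indeed survive the global peeling), and your LP step is along the right lines for the ``$m$ not too large'' regime, but neither appears in the paper, and neither reaches the final exponent on its own. To repair the proposal, the first move should be to prove and use the component lemma, after which most of the delicate balancing you describe disappears.
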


It is worth noting that any $f$-approximation for D$k$SH can be used to give an $\tO(f)$-approximation for M$p$U (see Theorem~\ref{thm:DkHMkU}), so Theorem~\ref{thm:MkU-uniform} gives a significant improvement over this blackbox reduction from Theorem~\ref{thm:3DkH-main}.  

Finally, we define an interesting 
special case of Densest $k$-Subhypergraph and Minimum $p$-Union that can be solved exactly in polynomial time.  Suppose we have an \emph{interval hypergraph}: a hypergraph in which the vertices are a finite subset of $\mathbb{N}$ and each hyperedge is an interval of the real line (restricted to the vertices).  Then we show that a dynamic programming algorithm can be used to actually solve our problems.

\begin{theorem} \label{thm:interval-main}
Densest $k$-Subhypergraph and Minimum $p$-Union can be solved in polynomial time on interval hypergraphs.
\end{theorem}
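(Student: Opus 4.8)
The plan is to solve both problems exactly by a polynomial-size dynamic program that exploits the geometry of interval hypergraphs. First I would sort the vertices and relabel them $1,\dots,n$ according to their order on the line, so that every hyperedge becomes a contiguous block of positions $[\ell,r]:=\{\ell,\ell+1,\dots,r\}$; this preprocessing is clearly polynomial, and it applies to interval hypergraphs of arbitrary rank. The key structural fact is this: for any set $W\subseteq V$, $W$ decomposes uniquely into maximal \emph{runs} of consecutive positions, and a hyperedge $[\ell,r]$ is contained in $W$ if and only if the whole block $[\ell,r]$ lies inside a single run of $W$. Hence the number of hyperedges contained in $W$ equals $\sum_{R} f(R)$, summed over the runs $R$ of $W$, where $f(a,b)$ denotes the number of hyperedges contained in the position-interval $[a,b]$ (note a hyperedge is counted in exactly one run, since a contiguous block cannot straddle a gap). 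The table $f(a,b)$ for all $1\le a\le b\le n$ is easily precomputed in polynomial time.

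For Densest $k$-Subhypergraph this already yields the algorithm: we must choose disjoint position-intervals (runs) $R_1,\dots,R_t$ with $\sum_i|R_i|=k$ so as to maximize $\sum_i f(R_i)$, and since maximal runs are separated by at least one unselected position, a left-to-right dynamic program with state (current position, number of vertices selected so far) and value (maximum hyperedges captured), whose transitions either skip the next position or commit an entire next run $[a,b]$ — adding $f(a,b)$ to the value and $b-a+1$ to the count — computes the optimum. Correctness is immediate from the structural fact (every feasible $W$ corresponds to a valid run decomposition considered by the DP, and conversely), and the number of states and transitions is polynomial.

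For Minimum $p$-Union, the footprint of a chosen set $S$ of hyperedges — the union of their vertex blocks — is again a disjoint union of maximal position-intervals $[a_1,b_1],\dots,[a_t,b_t]$, each chosen hyperedge lies inside exactly one of them (a hyperedge meeting $[a_j,b_j]$ cannot extend past it, as the next position is uncovered), and within each component the chosen hyperedges must cover every position of $[a_j,b_j]$ exactly. So I would precompute, for every interval $[a,b]$, the minimum number $c(a,b)$ of hyperedges contained in $[a,b]$ whose union is all of $[a,b]$ — a standard interval-cover computation, with $c(a,b)=\infty$ when impossible (which, note, already forces $a,b$ to be endpoints of hyperedges) — together with $C(a,b):=f(a,b)$, the total number of hyperedges contained in $[a,b]$. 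The crucial claim is that the set of hyperedge-counts realizable inside a component $[a,b]$ without enlarging its footprint is exactly the integer interval $[c(a,b),C(a,b)]$: a minimal cover has union exactly $[a,b]$, and one may pad it by adding any further hyperedge contained in $[a,b]$, one at a time, never changing the union, up to all $C(a,b)$ of them. Given this, the algorithm is a dynamic program over positions with state (current position, hyperedges used so far) and value (minimum footprint), whose transitions either leave the next position uncovered or commit a next component $[a,b]$ along with a choice $r\in[c(a,b),C(a,b)]$ of how many of its hyperedges to use, paying footprint $b-a+1$ and consuming $r$ from the budget; the answer is the entry that uses exactly $p$ hyperedges. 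Matching the DP value to $\mathrm{OPT}$ in both directions follows from the component-decomposition argument above.

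I expect the main obstacle — and the only real difference between the two problems — to be M$p$U's coupling between the objective (footprint, minimized) and the hard constraint of using exactly $p$ hyperedges: locally we cannot freely choose how many hyperedges to spend in a component. The resolution is precisely the observation that the set of realizable hyperedge-counts in a component is the \emph{contiguous} range $[c(a,b),C(a,b)]$, which lets the outer dynamic program treat "hyperedges spent here" as a free variable over an interval; establishing this (both that every value in the range is attainable via the monotone padding argument, and that values outside are either impossible or never helpful) is the crux. Everything else — the validity of the run and component decompositions, and the polynomial bound on the number of DP states and transitions — is routine bookkeeping.
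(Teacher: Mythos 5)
Your proof is correct, and it takes a genuinely different route from the paper's. The paper proves only Minimum $p$-Union directly (deriving Densest $k$-Subhypergraph via Observation~\ref{obs:poly-time-reduction}), and it indexes its dynamic program by \emph{hyperedges}: after sorting by right endpoint, it tabulates $A[i,j]$, the minimum footprint of any $j$ of the first $i$ hyperedges under the constraint that $e_i$ is chosen, with a recurrence that guesses the hyperedge $e_{i^*}$ of largest right endpoint in the optimum that is not nested inside $e_i$, and carefully accounts for the family $C_i$ of hyperedges nested in $e_i$. You instead index the dynamic program by \emph{vertex positions} and decompose the selected vertex set (for D$k$SH) or the footprint (for M$p$U) into maximal runs of consecutive positions. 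Your key lemma --- that the hyperedge counts realizable within a run $[a,b]$ without enlarging its footprint are exactly the contiguous range $[c(a,b),\,C(a,b)]$, established by a monotone padding argument --- has no counterpart in the paper's proof; it is what lets you treat the budget spent in each component as a free interval variable. Your approach is arguably more transparent and has the advantage of handling both problems by essentially the same positional DP, whereas the paper's hyperedge-indexed recurrence is tighter to the nesting structure of intervals but requires the detour through Observation~\ref{obs:poly-time-reduction} to get D$k$SH. One point worth making explicit in your write-up: your DP transitions permit committing two abutting components $[a,b]$ and $[b+1,c]$, which in the actual hypergraph merge into one interval; this does no harm because the corresponding feasible solution still has the footprint and hyperedge count your DP claims (you simply might not account for extra hyperedges spanning the boundary), and the true maximal-run decomposition of any optimum is always among the DP's paths, so the DP value equals OPT on both sides. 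Both proofs yield polynomial running times.
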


\subsection{Related Work}
As discussed, the motivation for these problems mostly comes from the associated graph problems, which have been extensively studied and yet are still poorly understood.  The Densest $k$-Subgraph problem was introduced by Kortsarz and Peleg~\cite{kp93}, who gave an $O(n^{2/5})$ ratio
for the problem. Feige, Kortsarz and Peleg~\cite{fkp} improved the 
ratio to $O(n^{1/3-\epsilon})$ 
for $\epsilon$ that is roughly $1/60$. 
 The current best-known approximation for D$k$S is $O(n^{1/4 + \epsilon})$ for arbitrarily small constant $\epsilon > 0$, due to Bhaskara et al.~\cite{BCCFV10}.  For many years the minimization version, S$p$ES, was not considered separately, and it was only relatively recently that the first separation was developed: building on the techniques of~\cite{BCCFV10} but optimizing them for the minimization version, Chlamt\'a\v{c}, Dinitz, and Krauthgamer~\cite{CDK12} gave an $O(n^{3-2\sqrt{2} + \epsilon})$-approximation for S$p$ES for arbitrarily small constant $\epsilon > 0$.  
 
 While defined slightly differently, D$k$SH and M$p$U were introduced earlier by Applebaum~\cite{Applebaum13} in the context of cryptography: he showed that if certain one way functions exist (or that certain pseudorandom generators exist) then D$k$SH is hard to approximate within $n^{\epsilon}$ for some constant $\epsilon > 0$.  Based on this result, D$k$SH and M$p$U were used to prove hardness for other problems, such as the $k$-route cut problem~\cite{CMVZ16}.  To the best of our knowledge, though, there has been no previous work on algorithms for these problems.

\subsection{Organization}

We begin in Section~\ref{sec:prelims} with some preliminaries, showing the basic relationships between the problems.  In Section~\ref{sec:MkU} we give our $O(\sqrt{m})$-approximation for M$p$U in general hypergraphs.  We then focus on small-rank hypergraphs, giving an $O(n^{4/5})$-approximation for D$k$SH on $3$-uniform hypergraphs in Section~\ref{sec:DkSH3}, which we then improve to roughly $O(n^{0.698})$ in Section~\ref{sec:3DkSH-improved}. 
  We follow this in Section~\ref{sec:MpU3} with our improved bound for M$p$U on $3$-uniform hypergraphs.  Finally in Section~\ref{sec:interval} we show how to solve both problems exactly in polynomial time on interval hypergraphs.  We conclude in Section~\ref{sec:open} with some open questions for future work.  

\section{Preliminaries and Notation} \label{sec:prelims}

A \emph{hypergraph} $H = (V, E)$ consists of a set $V$ (the vertices) together with a collection $E \subseteq 2^{V}$ (the hyperedges), where each hyperedge is a subset of $V$.  We will typically use $n = |V|$ and $m=|E|$ to denote the number of vertices and hyperedges respectively.  The \emph{degree} of a vertex in a hypergraph is the number of hyperedges which contain it.  Given a subset $V' \subseteq V$, the subhypergraph of $H$ \emph{induced} by $V'$ is $H[V'] = (V', E_H)$ where $E_H = \{e \in E : e \subseteq V'\}$.  We say that $H$ is \emph{$\alpha$-uniform} if $|e| = \alpha$ for all $e \in E$, and that the \emph{rank} of $H$ is $\max_{e \in E} |e|$ (i.e.~the smallest $\alpha$ such that all edges have cardinality at most $\alpha$).  A hyperedge $e$ is \emph{covered} by a set of vertices $V'$ if $e \subseteq V'$.  

Given a graph $G = (V, E)$ and a vertex $v \in V$, we use $\Gamma_G(v)$ to denote the set of nodes adjacent to $v$, and for a subset $V' \subseteq V$ we let $\Gamma_G(V') = \cup_{v \in V'} \Gamma(v)$.  If $G$ is clear from context, we will sometimes drop the subscript.

The main problems that we will consider are the following.

\begin{definition}
Given a hypergraph $H = (V, E)$ and an integer $k$, the  \emph{Densest $k$-Subhypergraph} problem (D$k$SH) is to find a set $V' \subseteq V$, with $|V'| = k$, such that the number of edges in $H[V']$ is maximized.
\end{definition}

\begin{definition}
Given a hypergraph $H = (V, E)$ and an integer $p$, the \emph{Minimum $p$-Union} problem (M$p$U) is to find a set $E' \subseteq E$, with $|E'| = p$, such that $|\cup_{e \in E'} e|$ is minimized.  
\end{definition}

Note that on $2$-uniform hypergraphs, these two problems are the classic graph problems D$k$S and S$p$ES respectively.  

A special class of hypergraphs that we will consider are \emph{interval hypergraphs}, defined as follows.

\begin{definition}
$H = (V, E)$ is an \emph{interval hypergraph} if $V$ is a finite subset of $\mathbb{N}$ and for each $e \in E$ there are values $a_e, b_e \in \mathbb{N}$ such that $e = \{i \in V : a_e \leq i \leq b_e\}$.
\end{definition}

\subsection{Relationship Between Problems} \label{sec:relationship}

We begin by proving some relatively straightforward relationships between the two problems.  We first make the obvious observation that a solution for one problem implies a solution for the other.  

\begin{observation} \label{obs:poly-time-reduction}
If there exists a polynomial time algorithm that solves the Densest $k$-Subhypergraph problem for any $k$ on a hypergraph $H$, then there exists a polynomial time algorithm that solves the Minimum $p$-Union problem on the hypergraph $H$.  Similarly, if there is an algorithm that solves M$p$U on $H$, then there is an algorithm that solves D$k$SH on $H$.
\end{observation}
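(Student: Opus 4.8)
The plan is to prove both directions by exhibiting a polynomial-time reduction that calls the assumed oracle once (or a polynomial number of times), since both problems are essentially inverses of one another parametrized by an integer.

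\textbf{From D$k$SH to M$p$U.} Suppose we have a polynomial-time algorithm $\mathcal{A}$ that solves D$k$SH exactly on $H$ for every $k$. Given an instance $(H, p)$ of M$p$U, we want a minimum-size vertex set whose induced subhypergraph contains at least $p$ hyperedges. First I would observe that for each $k \in \{1, \dots, n\}$, running $\mathcal{A}(H, k)$ returns a set $V_k$ with $|V_k| = k$ maximizing the number of induced hyperedges; call this maximum $f(k)$. The function $f$ is nondecreasing in $k$. Then the optimal M$p$U value is exactly $\min\{k : f(k) \geq p\}$: on one hand, any M$p$U solution $E'$ with $|E'| = p$ has a union $W$ with $|W| \le n$ and $f(|W|) \ge p$, so the minimum such $k$ is at most $|W|$; conversely, if $f(k) \ge p$ then the set $V_k$ induces at least $p$ hyperedges, from which we pick any $p$ of them, and their union is contained in $V_k$, so it has size at most $k$. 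Hence I would compute $f(k)$ for all $k$ (polynomially many oracle calls), take the smallest $k^\star$ with $f(k^\star) \ge p$, run $\mathcal{A}(H, k^\star)$ to get $V_{k^\star}$, and output any $p$ hyperedges contained in $V_{k^\star}$.

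\textbf{From M$p$U to D$k$SH.} Symmetrically, suppose $\mathcal{B}$ solves M$p$U exactly on $H$ for every $p$. Given an instance $(H, k)$ of D$k$SH, for each $p \in \{1, \dots, m\}$ run $\mathcal{B}(H, p)$ to obtain a hyperedge set $E_p$ with $|E_p| = p$ minimizing $g(p) := |\bigcup_{e \in E_p} e|$; the function $g$ is nondecreasing in $p$. The optimal D$k$SH value is then $\max\{p : g(p) \le k\}$, by the same two-sided argument: any D$k$SH solution $V'$ with $|V'| = k$ inducing $q$ hyperedges witnesses $g(q) \le k$; and if $g(p) \le k$, then the union of $E_p$ has size at most $k$, and we pad it with arbitrary additional vertices up to size exactly $k$ (possible since the hypergraph has at least $k$ vertices in any nontrivial instance, and if $n < k$ the problem is degenerate), yielding a set of $k$ vertices containing all $p$ hyperedges of $E_p$. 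So I would find the largest $p^\star$ with $g(p^\star) \le k$, run $\mathcal{B}(H, p^\star)$, and output $\bigcup_{e \in E_{p^\star}} e$ padded to size $k$.

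\textbf{Main obstacle.} There is no deep obstacle here — this is a routine equivalence — but the one point requiring care is the padding/truncation at the boundary: when converting a D$k$SH answer to an M$p$U answer we must be sure the $p$ induced hyperedges really exist (guaranteed by $f(k^\star) \ge p$), and when going the other way we must be sure we can inflate a small union up to exactly $k$ vertices without destroying any hyperedges (always possible, since adding vertices only adds hyperedges to the induced subhypergraph), and handle trivial edge cases ($n < k$, or $p$ exceeding $m$) separately. Since both $n$ and $m$ are polynomial in the input size and each oracle call is polynomial, the overall reduction is polynomial-time, which completes the proof.
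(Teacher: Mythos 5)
Your argument is correct. Note, however, that the paper does not actually supply a proof of Observation~\ref{obs:poly-time-reduction}: it is stated as ``obvious'' and the text immediately moves on to the approximation-preserving reduction (Lemma~\ref{lem:MkU-component}), which is what the authors actually need later. So there is no ``paper proof'' to diverge from, and what you wrote is precisely the routine argument the authors had in mind: call the oracle for each of the polynomially many parameter values, exploit monotonicity of $f(k)$ (respectively $g(p)$), and recover the other problem's optimum as a threshold. Both directions are handled correctly, the two-sided inequalities establishing that the optima coincide with the claimed thresholds are sound, and the padding/edge-case caveats you flag (inflating a small union to exactly $k$ vertices, $p>m$, $k>n$) are exactly the right ones to mention; the whole reduction is trivially polynomial since $n$ and $m$ bound the number of oracle calls.
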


The relationship is not quite so simple when we are reduced to approximating the problems, but it is 
 relatively straightforward to show that a relationship still exists.  This is given by the following lemma, which will also prove to be useful later.  

\begin{lemma}\label{lem:MkU-component} If there exists an algorithm which in a hypergraph $H$ containing a subhypergraph with $k$ vertices and $p$ hyperedges finds a subhypergraph $(V', E')$ with $|V'| \leq fk$ and $|E'| \geq |V'| p/(kf)$, we can get an $O(f\log p)$-approximation for Min $p$-Union.
\end{lemma}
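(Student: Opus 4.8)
The plan is to run the assumed algorithm repeatedly, peeling off a subhypergraph each time, and to charge the vertices used in each round against the hyperedges that round "pays for." Let $(V^*, E^*)$ be an optimal M$p$U solution, so $|E^*| = p$ and $|V^*| = \mathrm{OPT}$. Since $H$ contains this subhypergraph on $\mathrm{OPT}$ vertices with $p$ hyperedges, we may invoke the assumed algorithm with parameters $k = \mathrm{OPT}$ and (the current) number of required hyperedges to obtain a subhypergraph $(V_1, E_1)$ with $|V_1| \le f\cdot\mathrm{OPT}$ and $|E_1| \ge |V_1|\, p/(\mathrm{OPT}\, f) \ge$ (a constant fraction of) the "density target." The key point is that the density guarantee $|E_1|/|V_1| \ge p/(\mathrm{OPT}\, f)$ means that each vertex in $V_1$ is accompanied by at least $p/(\mathrm{OPT}\, f)$ hyperedges on average, which is exactly the ratio we want globally.

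The main steps, in order, are: (i) Guess $\mathrm{OPT}$ by trying all $n+1$ possible values (or a $(1+\eps)$-net of them); fix the correct guess. (ii) Maintain a residual instance: initially all of $H$; after extracting $(V_i, E_i)$, delete $E_i$ from the hyperedge set (and optionally $V_i$), and decrease the required hyperedge count. As long as the residual hypergraph still contains $\ge p_i$ hyperedges inside some $\mathrm{OPT}$-vertex set for the appropriate residual target $p_i$, the algorithm applies. (iii) Accumulate the vertices $\bigcup_i V_i$ as the output. (iv) Show that after $O(\log p)$ rounds the total number of hyperedges collected, $\sum_i |E_i|$, reaches $p$: each round collects a $\ge |V_i|/(\mathrm{OPT}\,f)$ fraction, and because $|V_i| \le f\,\mathrm{OPT}$ one can argue the residual target shrinks geometrically (a halving argument — either we collect half the remaining needed hyperedges in one round, or we can restrict to the half of $E^*$ not yet collected, which still lives in an $\mathrm{OPT}$-vertex set). (v) Bound the total cost: $|\bigcup_i V_i| \le \sum_i |V_i| \le \sum_i f\,\mathrm{OPT}\cdot \frac{|E_i|}{p} = \frac{f\,\mathrm{OPT}}{p}\sum_i |E_i|$, and since $\sum_i|E_i| = O(p)$ this is $O(f\,\mathrm{OPT})$; combined with the $O(\log p)$ rounds we in fact want to be slightly more careful and the bound comes out to $O(f\log p)\cdot\mathrm{OPT}$.

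Let me restate the charging in step (v) more carefully, since this is the crux. In round $i$ the algorithm is invoked with residual target $p_i$ (the number of optimal hyperedges not yet covered), and the guarantee gives $|V_i| \le f\,\mathrm{OPT}$ and $|E_i| \ge |V_i|\,p_i/(\mathrm{OPT}\,f)$, i.e. $|V_i| \le \mathrm{OPT}\,f\,|E_i|/p_i$. If each round collects at least, say, half of $p_i$ fresh optimal hyperedges (this is where the halving / "restrict to the uncovered half of $E^*$" argument enters — the uncovered hyperedges of $E^*$ still fit in $V^*$, so we can always lower the target to $p_i = |E^* \setminus \bigcup_{j<i} E_j|$ and the instance remains feasible), then $p_{i+1} \le p_i/2$, so there are $O(\log p)$ rounds and in round $i$ we have $|E_i| \ge p_i/2$, hence $|V_i| \le \mathrm{OPT}\,f\,|E_i|/p_i \le \mathrm{OPT}\,f$. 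Summing, $|\bigcup_i V_i| \le \sum_i |V_i| \le O(f\log p)\cdot\mathrm{OPT}$, as required, and the collected hyperedges include all of $E^*$, so in particular $\ge p$ of them — we then trim to exactly $p$.

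The step I expect to be the main obstacle is (iv)/the halving argument: making precise why, after deleting the hyperedges already extracted, the residual hypergraph still contains a good subhypergraph on only $\mathrm{OPT}$ vertices. The clean way is to always set the residual target to the number of still-uncovered hyperedges of a fixed optimal solution $E^*$ — those uncovered hyperedges are a subset of $E^*$ and live in $V^*$, so feasibility is automatic — and then observe that the algorithm's guarantee forces it to grab a constant fraction of that residual target (since $|E_i| \ge |V_i| p_i/(\mathrm{OPT} f)$ and we need $|E_i|$ comparable to $p_i$, we should double-check the regime where $|V_i|$ is much smaller than $f\,\mathrm{OPT}$; in that case $|E_i|$ could be small, so one must argue that either $|V_i|$ is large — giving many hyperedges — or the algorithm has effectively covered the target with few vertices, which is even better). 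Handling this case split cleanly, and confirming the $\log p$ (rather than $\log n$ or worse) bound on the number of iterations, is the delicate part; everything else is bookkeeping.
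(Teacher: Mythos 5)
Your high-level plan (iterate the assumed algorithm, remove the extracted hyperedges, keep the residual target feasible by pointing at a fixed optimal $V^*$) matches the paper's, and your feasibility observation is correct: the uncovered edges of $E^*$ still live in $V^*$, so the algorithm's hypothesis remains satisfied with the residual count $p_i$. But the accounting in steps (iv)--(v) has a genuine gap, and you have in fact put your finger on it yourself. You want to claim $O(\log p)$ rounds via a halving argument, i.e.\ $p_{i+1}\leq p_i/2$. That premise does not follow from the hypothesis. The algorithm only guarantees $|E_i|\geq |V_i|\,p_i/(kf)$, and nothing forces $|V_i|$ to be close to $fk$: it could return a single vertex and one edge, which satisfies the density guarantee but makes essentially no progress toward the target. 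In that regime your ``or the algorithm has effectively covered the target with few vertices, which is even better'' line is simply not true --- you spent few vertices \emph{and} collected few edges, so the round was neutral, not good --- and the iteration count is not bounded by $\log p$ (it could be as large as $p$). A bound on the total number of \emph{rounds} is the wrong invariant to pursue.

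The paper's proof uses a different and subtler accounting that avoids bounding the number of iterations entirely. Writing $x_i=|V_i|$ for the vertices added in round $i$ and $p_i$ for the residual target at the start of round $i$, the guarantee gives $|E_i|\geq x_i p_i/(kf)$, hence
$p_{i+1}\leq p_i\bigl(1-x_i/(kf)\bigr)$, and after $t$ rounds
$p_{t+1}\leq p\prod_{i=1}^t\bigl(1-x_i/(kf)\bigr)\leq p\exp\!\bigl(-\textstyle\sum_i x_i/(kf)\bigr)$.
So the moment the \emph{cumulative} vertex budget $\sum_i x_i$ exceeds $kf\ln p$, the residual drops below $1$ and the algorithm terminates; adding one final round of at most $fk$ vertices yields a total of $O(fk\log p)$ vertices. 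Note the decrement $x_i/(kf)$ is exactly proportional to the cost paid in round $i$, so small, unproductive rounds are automatically charged little --- which is precisely the case your halving argument cannot handle. You should replace the per-round halving claim with this potential-function/geometric-product argument; everything else in your sketch (guessing $k$, removing used hyperedges, restricting to $E^*\setminus\bigcup_{j<i}E_j$ for feasibility) is fine.
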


Since any $f$-approximation algorithm for Densest $k$-Subhypergraph satisfies the conditions of the lemma, as an immediate corollary we get the following:

\begin{theorem} \label{thm:DkHMkU}
If there is an $f$-approximation for Densest $k$-Subhypergraph, then there is an $O(f \log p)$-approximation for Minimum $p$-Union.
\end{theorem}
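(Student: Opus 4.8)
The plan is to verify that any $f$-approximation algorithm for D$k$SH satisfies the hypothesis of Lemma~\ref{lem:MkU-component} with the same parameter $f$, so that Theorem~\ref{thm:DkHMkU} follows immediately by invoking that lemma. In other words, essentially all the work has already been done in Lemma~\ref{lem:MkU-component}, and what remains is just to check that the promised algorithm of the lemma can be instantiated by a D$k$SH approximation.

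Concretely, let $\mathcal{A}$ be an $f$-approximation for D$k$SH, and suppose $H$ contains a subhypergraph on some $k$ vertices with $p$ hyperedges. Run $\mathcal{A}$ on $H$ with budget $k$. The promised subhypergraph is a feasible solution to this D$k$SH instance, so its optimum is at least $p$; hence $\mathcal{A}$ returns a vertex set $V'$ with $|V'| = k$ whose induced subhypergraph $H[V']$ contains at least $p/f$ hyperedges. Since $f \geq 1$ we have $|V'| = k \leq fk$, and the number of hyperedges of $H[V']$ is at least $p/f = |V'|\,p/(kf)$ — precisely the two conditions required in Lemma~\ref{lem:MkU-component}. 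We do not know the correct value of $k$ in advance (an M$p$U instance is specified only by $p$), but there are only $n$ candidates, so we may run $\mathcal{A}$ for every budget $k \in \{1, \dots, n\}$; this is already implicit in the quantification over $k$ in Lemma~\ref{lem:MkU-component}.

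Applying Lemma~\ref{lem:MkU-component} with this value of $f$ then yields an $O(f \log p)$-approximation for Minimum $p$-Union, as claimed. Because the reduction is this direct, there is no genuine obstacle in the proof of the theorem itself; the only point to keep straight is that Lemma~\ref{lem:MkU-component} compares against a subhypergraph with exactly $k$ vertices and $p$ hyperedges, so one must take $k$ to be the size of the optimal $p$-union — with that choice the gap closes, and the extra $\log p$ factor is absorbed entirely inside the lemma (where the substantive iterative/covering argument lives).
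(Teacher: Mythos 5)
Your proof is correct and matches the paper's approach exactly: the paper also derives Theorem~\ref{thm:DkHMkU} as an immediate corollary of Lemma~\ref{lem:MkU-component} by noting that an $f$-approximation for D$k$SH, when run with budget $k$ equal to the optimal union size, returns a set $V'$ with $|V'| \le k \le fk$ and at least $p/f \ge |V'|p/(kf)$ covered hyperedges. Your added remark about trying all $n$ values of $k$ is consistent with how the lemma's proof handles the unknown $k$.
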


\begin{proof}[Proof of Lemma~\ref{lem:MkU-component}]
Let $(H = (V, E), p)$ be an instance of Minimum $p$-Union, and let $\mathcal A$ be an algorithm as described in the lemma. 
We assume without loss of generality that we know the number of nodes $k$ in the optimal solution (since we can just try all possibilities for $k$), and hence that there exists a set $V^* \subseteq V$ with $|V^*| = k$ such that $V^*$ covers at least $p$ hyperedges.  Initialize $E' = \emptyset$, and consider the following algorithm for Minimum $p$-Union that repeats the following until $|E'| \geq p$.
\begin{enumerate}
\item Let $V' = \mathcal A(H, k)$, and let $E''$ be the hyperedges of $H$ covered by $V'$.
\item Let $E' \leftarrow E' \cup E''$.
\item Remove $E''$ from $H$ (remove only the edges, not the corresponding vertices).
\end{enumerate}

We claim that this is an $\tO(f)$-approximation for Minimum $p$-Union.  
Indeed, suppose at iteration~$i$ we added $x_i$ vertices, and that at the beginning of the iteration, we had already added $p-p_i$ edges to the solution. In particular, that means that at least $p_i$ of the original hyperedges contained in $V^*$ were not yet removed. This then implies that the number of edges added in iteration $i$ was at least $x_i\cdot p_i/(kf)$. Thus the number of edges we still need to add after iteration $i$ is $p_{i+1}\leq p_i-x_i\cdot p_i/(kf)=p_i(1-x_i/(kf))$.
Thus by induction, after $t$ iterations, the number of hyperedges we need to add is bounded by $$p_{t+1}\leq p\prod_{i=1}^t(1-x_i/(kf))\leq p\exp\left(-\sum_{i=1}^tx_i/(kf)\right).$$ Thus, as soon as the total number of vertices added exceeds $kf\ln p$ for the first time, the number of edges will exceed $p$. Since the last iteration adds at most $kf$ vertices, we are done.
\end{proof}

A standard argument also shows a (more lossy) reduction in the other direction.

\begin{theorem} \label{thm:MkUDkH}
If there is an $f$-approximation for Minimum $p$-Union on $\alpha$-uniform hypergraphs, then there is an $O(f^{\alpha})$-approximation for Densest $k$-Subhypergraph on $\alpha$-uniform hypergraphs (when $\alpha = O(1)$).
\end{theorem}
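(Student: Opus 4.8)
The plan is to reduce an instance of D$k$SH on an $\alpha$-uniform hypergraph to polynomially many calls of the assumed $f$-approximation for M$p$U on $\alpha$-uniform hypergraphs, losing only a factor of $(2f)^{\alpha} = O(f^{\alpha})$. Let $(H = (V,E), k)$ be the D$k$SH instance and let $p^{\ast}$ be its optimum, i.e.\ the maximum number of hyperedges covered by any $k$-subset of $V$. We may assume $p^{\ast} \ge 1$, and since $\alpha = O(1)$ we may assume $k \ge 2\alpha$: if $k < \alpha$ then $p^{\ast} = 0$ and any $k$-set is optimal, and if $\alpha \le k < 2\alpha$ then $k = O(1)$ and we can solve D$k$SH exactly in time $n^{O(1)}$ by brute force. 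We do not know $p^{\ast}$, but we can run the whole procedure below for every candidate value $p \in \{1,\dots,m\}$ and return the best $k$-set found, so we may assume $p^{\ast}$ is known.

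The first step is to call the M$p$U oracle with target $p = p^{\ast}$. Let $W^{\ast}$ be an optimal D$k$SH solution and let $E^{\ast}$ be the $p^{\ast}$ hyperedges it covers; then $\bigcup_{e\in E^{\ast}} e \subseteq W^{\ast}$, so there exist $p^{\ast}$ hyperedges whose union has size at most $k$. Hence the $f$-approximation for M$p$U returns a family $E'$ of $p^{\ast}$ hyperedges whose union $U := \bigcup_{e \in E'} e$ satisfies $|U| \le fk$. In particular the induced subhypergraph $H[U]$ has at most $fk$ vertices and at least $p^{\ast}$ hyperedges.

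The second step is to sparsify $U$ down to exactly $k$ vertices. If $|U| \le k$ we are done immediately: extend $U$ to a $k$-set arbitrarily, which still covers $\ge p^{\ast}$ hyperedges. Otherwise, let $S$ be a uniformly random $k$-subset of $U$. For any hyperedge $e$ of $H[U]$ (so $|e| = \alpha$ and $e \subseteq U$),
\[
\Pr[e \subseteq S] \;=\; \frac{\binom{|U|-\alpha}{k-\alpha}}{\binom{|U|}{k}} \;=\; \prod_{i=0}^{\alpha-1}\frac{k-i}{|U|-i} \;\ge\; \left(\frac{k-\alpha+1}{fk}\right)^{\!\alpha} \;\ge\; (2f)^{-\alpha},
\]
where the penultimate step uses that $(k-i)/(|U|-i)$ is decreasing in $i$ together with $|U|\le fk$, and the last step uses $k \ge 2\alpha$. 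By linearity of expectation, $\expec[\text{hyperedges covered by }S] \ge p^{\ast}/(2f)^{\alpha}$, so some $k$-subset of $U$ covers at least $p^{\ast}/(2f)^{\alpha}$ hyperedges; it can be found deterministically by the method of conditional expectations, since the relevant conditional expectations are sums of hypergeometric probabilities of the above form and are computable in polynomial time. Over all guesses of $p^{\ast}$ we output the best $k$-set; for the correct guess this $k$-set covers $\ge p^{\ast}/(2f)^{\alpha}$ hyperedges, giving an $O(f^{\alpha})$-approximation.

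The only step with real content is the sparsification: showing that cutting the $fk$-vertex, $p^{\ast}$-hyperedge structure returned by the M$p$U oracle down to exactly $k$ vertices costs only a factor $f^{\alpha}$ (and can be done constructively). Everything else is bookkeeping — guessing $p^{\ast}$, and the observation that an optimal D$k$SH solution certifies the existence of $p^{\ast}$ hyperedges with union of size at most $k$. The one place to be slightly careful is the small-$k$ regime (where $k$ is comparable to the constant $\alpha$), which is disposed of by brute force.
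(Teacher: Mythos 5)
The paper gives no proof of Theorem~\ref{thm:MkUDkH}; it is presented only as following from ``a standard argument.'' Your write-up is a correct and complete instantiation of that standard argument --- guess $p^{\ast}$, invoke the M$p$U $f$-approximation to obtain a union $U$ of size at most $fk$ covering $p^{\ast}$ hyperedges, then sparsify $U$ to exactly $k$ vertices by random subsampling (derandomized via conditional expectations). The hypergeometric bound $\Pr[e\subseteq S]\ge(2f)^{-\alpha}$ is computed correctly, the monotonicity of $(k-i)/(|U|-i)$ in $i$ is used correctly, and you make explicit both places where the hypothesis $\alpha=O(1)$ is needed (absorbing $2^{\alpha}$ into the $O(\cdot)$, and brute-forcing the $k<2\alpha$ regime), so there is nothing to object to.
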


\section{Minimum $p$-Union in General Hypergraphs} \label{sec:MkU}

Given a hypergraph $H = (V, E)$, in this section we work with the bipartite {\em incidence graph} 
$G = (E, V, F)$ of $H$, where $F = \{ (e, v) \in E \times V \, : \, v \in e \}$.
Solving M$p$U on $H$ corresponds to finding a subset $E' \subseteq E$ 
of $p$ vertices in $G$ of minimum vertex expansion, i.e., 
$E'$ such that $|\Gamma_G(E')|$ is minimized.

Our algorithm requires a subroutine that returns a subset of vertices of minimum expansion (without the cardinality bound on the set).  In other words, we need a polynomial-time algorithm {\sc Min-Exp(G)} which returns a subset of $E$ so that 
\begin{equation*}
\frac{|\textsc{Min-Exp}(G)|}{|\Gamma_G(\textsc{Min-Exp}(G))|} \ge \frac{|E'|}{|\Gamma_G(E')|}, 
\end{equation*}
for every subset $E' \subseteq E$.  

Minimally expanding subsets of this kind have previously been used (e.g. in \cite{kr13,gkk12}) in communication
settings where computation time is disregarded, but in our context we need a polynomial-time algorithm. In Appendices~\ref{app:flow} and~\ref{sec:charikar} we give two different algorithms for doing this.  The first, in Appendix~\ref{app:flow}, uses a reduction to network flows.  The second, in Appendix~\ref{sec:charikar}, is based on a straightforward adaptation of a linear programming approach for the graph case due to Charikar~\cite{Charikar}.  In order to simplify the presentation, we will for the rest of the section assume that we have such an algorithm and will defer them to the appendices.

In the following, for subsets $E' \subseteq E$ and $V' \subseteq V$, we denote the induced subgraph of 
$G$ by vertex set $E' \cup V'$ by $G[E', V']$.

In the first phase, our algorithm (Algorithm~\ref{alg:min-union}) iteratively adds vertices $E''$
to an initially empty set $E'$ until $E'$ exceeds the size $p - \sqrt{m}$. The set $E''$ is a
minimally expanding subset in the induced subgraph $G[E \setminus E', V]$. If $E''$ is large so that
$|E' \cup E''| > p$, then an arbitrary subset of $E''$ is added to $E'$ so that $E'$ has the 
desired size $p$. Then, in the second phase, we add the $k - |E'|$ vertices of $E \setminus E'$ of smallest degree to
$E'$ (ties broken arbitrarily), and the algorithm returns set $E'$.

\begin{algorithm}
 \KwData{Bipartite input graph $G = (E, V, F)$ with $m=|E|$, $n=|V|$, parameter $p$}
 $E' \gets \{ \}$\;
 \Repeat{$|E'| \ge p - \sqrt{m}$}{
  $E'' \gets \textsc{Min-Exp}( G[E \setminus E', V])$\;
  \eIf{$|E'| + |E''| \le p$}{
   $E' \gets E' \cup E''$\;  
  }
  {
   Add arbitrary $p - |E'|$ nodes from $E''$ to $E'$\; \label{line:058}
  }
 }
 $E'' \gets $ subset of $p - |E'|$ nodes of $E \setminus E'$ of smallest degree\; \label{line:201}
 $E' \gets E' \cup E''$\;
 \KwRet{$E'$}\;
 \caption{$2\sqrt{m}$-approximation algorithm for the Minimum $p$-Union problem \label{alg:min-union}}
\end{algorithm}

\begin{theorem}
 Algorithm~\ref{alg:min-union} is a $(2 \sqrt{m})$-approximation algorithm for M$p$U.
\end{theorem}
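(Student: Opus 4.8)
The plan is to analyze the two phases of the algorithm separately and bound the number of vertices $|\Gamma_G(E')|$ touched by the returned edge set. Let $\mathrm{OPT}$ denote the optimal value, so there is a set $E^* \subseteq E$ with $|E^*| = p$ and $|\Gamma_G(E^*)| = \mathrm{OPT}$. First I would handle Phase~1. The key structural fact is that whenever we call $\textsc{Min-Exp}$ on the induced graph $G[E\setminus E', V]$, the set $E^* \setminus E'$ is a candidate subset, so by the defining property of $\textsc{Min-Exp}$ the returned set $E''$ satisfies
\[
\frac{|E''|}{|\Gamma(E'')|} \;\ge\; \frac{|E^*\setminus E'|}{|\Gamma(E^*\setminus E')|} \;\ge\; \frac{|E^*\setminus E'|}{\mathrm{OPT}}.
\]
During Phase~1 we always have $|E'| < p - \sqrt m$ when we enter the loop body, so $|E^* \setminus E'| \ge |E^*| - |E'| > \sqrt m$. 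Hence each $E''$ picked in Phase~1 (before the possible truncation in line~\ref{line:058}) has expansion ratio at least $\sqrt m / \mathrm{OPT}$, i.e. it covers at least $|E''|\cdot \mathrm{OPT}/\sqrt m$... wait, the other way: $|\Gamma(E'')| \le |E''|\cdot \mathrm{OPT}/|E^*\setminus E'| \le |E''|\cdot \mathrm{OPT}/\sqrt m$. Summing over all iterations and using $\sum |E''| \le p$ for the un-truncated additions (the truncation in line~\ref{line:058} only decreases the count), the total number of vertices added in Phase~1 is at most $(\mathrm{OPT}/\sqrt m)\sum |E''| \le p\cdot \mathrm{OPT}/\sqrt m$.

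Hmm, that bound $p\cdot\mathrm{OPT}/\sqrt m$ is too weak by itself — it needs to be compared against $\sqrt m \cdot \mathrm{OPT}$, which would require $p \le m$, true, but also we'd want $p/\sqrt m \le \sqrt m$, i.e. $p \le m$, which holds. Actually $p \le m$ always, so $p\cdot\mathrm{OPT}/\sqrt m \le \sqrt m\cdot\mathrm{OPT}$. So Phase~1 alone already costs at most $\sqrt m\cdot\mathrm{OPT}$ vertices. The cleaner accounting: in any single iteration, $|\Gamma(E'')| \le |E''|\cdot \mathrm{OPT}/\sqrt{m}$, and since $\sum|E''|\le p\le m$, we get Phase~1 total $\le \sqrt m \cdot \mathrm{OPT}$. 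That is exactly one of the two $\sqrt m$ factors' worth of slack, so this part is in good shape.

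Next I would handle Phase~2, where we add the $p - |E'|$ lowest-degree remaining edges. After Phase~1 we have $|E'| \ge p - \sqrt m$, so Phase~2 adds at most $\sqrt m$ edges. The point is that each such edge, having smallest degree among $E\setminus E'$, has degree at most that of the average edge of $E^*\setminus E'$ — more precisely, since $E^*\setminus E'$ is a nonempty subset of $E\setminus E'$ (it is nonempty because $|E^*|=p > |E'|$ unless we're already done), its minimum-degree edge has degree at most $|\Gamma(E^*\setminus E')| \le \mathrm{OPT}$; actually I want the bound that the sum of the $|E^*\setminus E'|$ smallest degrees in $E\setminus E'$ is at most $\sum_{e\in E^*\setminus E'}\deg(e)$, and each individual degree is trivially at most $\mathrm{OPT}$ (no edge can touch more than $|\Gamma(E^*)|$... no wait, an edge can be large). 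Let me instead just say: each edge added in Phase~2 has degree at most $\mathrm{OPT}$, because the minimum-degree edge of $E\setminus E'$ has degree at most the minimum degree of an edge in $E^*\setminus E'$, which is at most $|\Gamma(E^*)| = \mathrm{OPT}$ since every edge of $E^*$ lies inside $\Gamma(E^*)$. Since we add at most $\sqrt m$ such edges, Phase~2 contributes at most $\sqrt m\cdot\mathrm{OPT}$ new vertices. Adding the two phases, $|\Gamma(E')| \le \sqrt m\cdot\mathrm{OPT} + \sqrt m\cdot\mathrm{OPT} = 2\sqrt m\cdot\mathrm{OPT}$, which is the claimed bound. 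Finally I'd note the algorithm clearly runs in polynomial time and returns exactly $p$ edges.

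The main obstacle I anticipate is getting the Phase~1 bookkeeping exactly right when line~\ref{line:058} truncates the last $E''$: one must argue that truncating can only help (it removes edges, hence removes — or at least does not add — covered vertices, and the $\le |E''|\cdot\mathrm{OPT}/\sqrt m$ estimate still applies to the truncated set since it's a subset and we used a bound on $|E''|$ from above). A secondary subtlety is the base case $|E^*\setminus E'|=\emptyset$: if Phase~1 ever makes $|E'|\ge p$ we should have $E'=E^*$-good already and stop, and Phase~2 adds nothing — so the degenerate cases need a sentence but no real work. The heart of the argument is just the two-sided use of the $\textsc{Min-Exp}$ guarantee (for Phase~1) and the minimum-degree greedy choice (for Phase~2), each paying a single $\sqrt m$ factor against $\mathrm{OPT}$.
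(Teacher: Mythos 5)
Your proof takes essentially the same approach as the paper's: the per-iteration bound $|\Gamma(E'')|\le|E''|\cdot\mathrm{OPT}/\sqrt m$ from the \textsc{Min-Exp} guarantee combined with the loop invariant $|E'|<p-\sqrt m$, summed over iterations using the disjointness of the $E''$'s, gives $\sqrt m\cdot\mathrm{OPT}$ for Phase~1, and Phase~2 contributes another $\sqrt m\cdot\mathrm{OPT}$ since at most $\sqrt m$ edges are added, each of degree at most $\mathrm{OPT}$. One small imprecision: the intermediate claim $\sum|E''|\le p$ is not correct for the \emph{full} \textsc{Min-Exp} outputs (the last, truncated iteration's $E''$ can have size far exceeding the $p-|E'|$ edges actually retained, and the bound on $|\Gamma|$ is stated in terms of the full $|E''|$), but the disjointness argument gives $\sum|E''|\le m$ — which you do fall back on and which is exactly what the paper uses — so the final bound of $2\sqrt m\cdot\mathrm{OPT}$ is unaffected.
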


\begin{proof}
Let $OPT \subseteq E$ be an optimal solution and let $r = |\Gamma_G(OPT)|$. 
Let $E'_i$ denote the set $E'$ in the beginning of the $i$th iteration of the repeat loop. Suppose 
that the algorithm runs in $l$ rounds. Then, $E'_{l+1}$ is the set $E'$ after the last iteration of the loop, 
but before the nodes selected in Line~\ref{line:201} are added. 

Consider an arbitrary iteration $i \le l$ and let $E'' \gets \textsc{Min-Exp}( G[E \setminus E'_i, V] )$ 
as in the algorithm. Note that by the condition of the loop, we have $|E_i'| \le p - \sqrt{m}$. 
Furthermore, we have
$$\frac{|E''|}{|\Gamma_G(E'')|} \ge \frac{|OPT \setminus E'_i|}{|\Gamma_G(|OPT \setminus E'_i|)|} \ge \frac{k - |E'_i|}{r},$$
since $E''$ is a set of minimum expansion. Then, 
$$|\Gamma_G(E'')| \le \frac{|E''| r}{p - |E_i'|} \le \frac{|E''| r}{p - p + \sqrt{m}} = \frac{|E''| r}{\sqrt{m}}.$$

Thus, we have $|\Gamma_G(E'_{i+1})| \le |\Gamma_G(E'_{i})| + \frac{|E''| r}{\sqrt{m}}$ (note that this 
inequality also captures the case when only a subset of $E''$ is added to $E'$ in Line~\ref{line:058}). 
Now, note that the sets $E''$ of
any two different iterations are disjoint and thus the sizes of the sets $E''$ of the different iterations 
sum up to at most $m$. We thus obtain the bound:
$$|\Gamma_G(E'_{l+1})| \le \frac{m r}{\sqrt{m}} = \sqrt{m} r.$$

In phase two, we select at most $\sqrt{m}$ vertices $E''$ of minimum degree in $G[E \setminus E', V]$.
Clearly, the maximum degree of these vertices is at most $r$ (if it was larger, then $|\Gamma_G(OPT)|$ 
would be larger as well) and thus $|\Gamma_G(E'')| \le \sqrt{m}r$. The neighborhood of the returned set of 
our algorithm is hence at most $2 \sqrt{m} r$ which gives an approximation factor of $2\sqrt{m}$.
\end{proof}

\section{Densest $k$-Subhypergraph in $3$-uniform hypergraphs}\label{sec:DkSH3}

In this section, we consider the Densest $k$-Subhypergraph problem in $3$-uniform hypergraphs. 
We develop an $O(n^{4/5})$-approximation algorithm here, and show in Section~\ref{sec:3DkSH-improved}
 how to improve
the approximation factor to $O(n^{0.697831+\epsilon})$, for any $\epsilon > 0$, by replacing one of 
our subroutines with an algorithm of Bhaskara et al. \cite{BCCFV10}.

Throughout this section, let $H=(V,E)$ be the input $3$-uniform hypergraph. Let $K \subseteq V$ denote an optimal
solution, i.e., a subset of vertices such that $H[K]$ is a densest $k$-subhypergraph. The average degree of 
$H[K]$ is denoted by $d = 3|E(H[K])|/k$. We say that a hyperedge is optimal if it is contained in $H[K]$.

\subsection{Overview of our Algorithm}
Let $K_1 \subseteq V$ be a set of $k/3$ vertices of largest degree (ties broken arbitrarily), 
$\Delta$ the minimum degree of a node in $K_1$, and $H' = H[V \setminus K_1]$. Note that the 
maximum degree in $H'$ is $\Delta$.

Suppose first that 
at least half of the optimal hyperedges contain at least one vertex of $K_1$. Then the following lemma shows that
we can easily achieve a much better approximation than we are aiming for:
\begin{lemma}\label{lem:greedy2}
Suppose that at least half of the optimal hyperedges contain a vertex of $K_1$. Then we can achieve an $O(n^{1/4+\eps})$ approximation for any $\eps>0$.
\end{lemma}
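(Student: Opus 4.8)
The plan is to reduce to the (edge-weighted) Densest $k$-Subgraph problem and apply the $O(n^{1/4+\eps})$-approximation of Bhaskara et al.~\cite{BCCFV10}. The guiding idea is to look only for solutions of the special form $K_1\cup W$ with $W\subseteq V\setminus K_1$ and $|W|=2k/3$, so that $|K_1\cup W|=k$ and the set $K_1$ is fixed in advance. For such solutions, classifying each hyperedge of $H[K_1\cup W]$ by how many of its vertices lie in $K_1$ decomposes the count $|E(H[K_1\cup W])|$ into a constant term (hyperedges inside $K_1$), a term linear in $W$ (hyperedges with two vertices in $K_1$), and a ``quadratic'' term (hyperedges with exactly one vertex in $K_1$) that is exactly a weighted-D$k$S objective on an auxiliary graph.

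Concretely, write $p^*=|E(H[K])|$. By hypothesis at least $p^*/2$ optimal hyperedges meet $K_1$, so by pigeonhole at least $p^*/6$ of them lie in one of the classes (a) $|e\cap K_1|=3$, (b) $|e\cap K_1|=2$, (c) $|e\cap K_1|=1$. I would give a separate algorithm for each class, run all three (each needs only $k$ and the explicitly computable set $K_1$), and output the densest subhypergraph found; since under the hypothesis at least one class is this large, the best output inherits its guarantee. Class (a) is trivial: $H[K_1]$ already has $\ge p^*/6$ hyperedges on $k/3\le k$ vertices, so any padding of $K_1$ to size $k$ is an $O(1)$-approximation. Class (b) reduces to a linear objective: letting $\deg_2(a)$ count the hyperedges $e$ with $|e\cap K_1|=2$ and $e\setminus K_1=\{a\}$, taking $W$ to be the $2k/3$ vertices of largest $\deg_2$ and using the elementary fact that the average of the top $2k/3$ values dominates the average of the top $k$, we get $\sum_{a\in W}\deg_2(a)=\Omega(p^*)$, and every such hyperedge sits inside $H[K_1\cup W]$ — again $O(1)$.

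The substantive case is (c). I would form the weighted graph $\hat G$ on $V\setminus K_1$ in which the weight of a pair $\{a,b\}$ is $c(a,b)=|\{v\in K_1:\{v,a,b\}\in E\}|\in\{0,\dots,k/3\}$. The $\ge p^*/6$ optimal hyperedges of class (c) certify that $\hat G$ restricted to $K\setminus K_1$ (at most $k$ vertices) has total weight $\ge p^*/6$, hence, by restricting to a random $2k/3$ of those vertices (or taking all of them if there are fewer), some $(2k/3)$-vertex induced subgraph of $\hat G$ has weight $\Omega(p^*)$. Running weighted D$k$S with parameter $2k/3$ — which, via standard bucketing of edge weights into $O(\log n)$ geometric scales, reduces to that many unweighted D$k$S calls — the algorithm of \cite{BCCFV10} then returns a set $W$ with $|W|=2k/3$ and $\hat G$-weight $\Omega(p^*/(n^{1/4+\eps}\log n))$. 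Finally, each pair $\{a,b\}\subseteq W$ contributes $c(a,b)$ distinct hyperedges to $H[K_1\cup W]$, so this subhypergraph on exactly $k$ vertices has $\Omega(p^*/n^{1/4+\eps'})$ hyperedges, i.e.\ an $O(n^{1/4+\eps'})$-approximation once the logarithm is absorbed into $\eps'$.

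I expect the main obstacle to be making case (c) go through cleanly: one must pass to the \emph{weighted} version of D$k$S so that the contributions of all $k/3$ vertices of $K_1$ are aggregated into a single instance (using one link graph per vertex of $K_1$ and taking the best would lose an extra factor of $\Omega(k)$), reconcile the parameter $2k/3$ we can afford with the up-to-$k$ vertices of the witness set via random sampling, and verify that the weighted-to-unweighted reduction costs only a polylogarithmic factor. The regime of very small $k$, where the sampling constant degrades, is handled separately, since there $p^*=O(1)$ and producing a single hyperedge already suffices.
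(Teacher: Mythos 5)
Your proposal is correct and follows essentially the same route as the paper: fix $K_1$, look for a solution of the form $K_1 \cup W$ with $|W|=2k/3$, split the $K_1$-intersecting optimal hyperedges by $|e\cap K_1|$, handle the ``easy'' classes greedily with $O(1)$ loss, and reduce the $|e\cap K_1|=1$ class to weighted Densest $k$-Subgraph on the link graph over $V\setminus K_1$, bucketing edge weights to invoke the unweighted algorithm of Bhaskara et al.\ at a $\log n$ cost. The only cosmetic difference is that you split into three classes ($|e\cap K_1|=3,2,1$) whereas the paper merges the first two into a single case $|e\cap K_1|\ge 2$ handled by one greedy count; the substantive case is treated identically.
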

\begin{proof}
  By our assumption, there is a set $P$ of optimal hyperedges of size at least $dk/6$ such that every edge in $P$ intersects $K_1$. Consider two cases.

 Case 1: For at least half the edges $e\in P$, we have $|e\cap K_1|\geq 2$. Denote the set of these edges by $P'$. For every vertex $u\in V$, let its $K_1$-weight be the number of pairs $\{v,x\}$ such $v,x \in K_1$ and $\{u,v,x\}$ is a hyperedge. Then by our assumption, the vertices in $K$ have average $K_1$-weight at least $|P'|/k\geq d/12$. Choosing $2k/3$ vertices greedily (by maximum $K_1$-weight) gives (along with $K_1$) a $k$-subhypergraph with at least $dk/18$ hyperedges.

Case 2: $P''=P\setminus P'$ contains at least half the hyperedges in $P$. Note that $|e\cap K_1|=1$ for every $e\in P''$. For every pair of vertices $u,v\in V\setminus K_1$, let its $K_1$-weight be the number of vertices $x\in K_1$ such that $\{u,v,x\}$ is a hyperedge, and let $G$ be the graph on vertices $V\setminus K_1$ with these edge weights. Then any $k'$-subgraph of $G$ with total edge weight $w$ corresponds to a $(|K_1|+k')$-subhypergraph of $H$ with at least $w$ hyperedges, and in particular, $G$ contains a $k$-subgraph with average weighted degree at least $2|P''|/k\geq d/6$, which can be easily pruned (randomly or greedily) down to a $2k/3$-subgraph with average weighted degree $\Omega(d)$. Thus we can run the Densest $k$-Subgraph approximation algorithm of Bhaskara et al.~\cite{BCCFV10}\footnote{Strictly speaking, the algorithm in~\cite{BCCFV10} is defined for unweighted graphs, but one can easily adapt it by partitioning the edges into $O(\log n)$ sets with similar edge weights, and running the algorithm separately on every set of edges, thus losing only an additional $O(\log n)$ factor in the approximation.}, and find a $2k/3$-subgraph of $G$ with total weight at least $kd/n^{1/4+\eps}$, which in turn gives a $(|K_1|+2k/3=)k$-subhypergraph of $H$ with a corresponding number of hyperedges.
\end{proof}

In the more difficult case, at least half of the optimal hyperedges are fully contained in $H'$. Exploiting the
fact that the maximum degree in $H'$ is $\Delta$ and trading off multiple algorithms, we show in the following subsection
 how to obtain an $O(n^{\frac{4}{5}})$-approximation algorithm in this case.

\subsection{An $O(n^{4/5})$-approximation}

We start with a greedy algorithm similar to the greedy algorithm commonly used for Densest $k$-Subgraph~\cite{kp93, fkp, BCCFV10}.

\begin{algorithm}
 \KwData{3-uniform Hypergraph $H=(V, E)$, parameter $k$, vertex set $K_1 \subseteq V$ of size $k/3$}
For every $v\in V$, let its $K_1$-degree be $|\{e\in E\mid v\in e,e\cap K_1\neq\emptyset\}|$\;
 $K_2 \gets$ a set of $k/3$ vertices of highest $K_1$-degree ($K_1$ and $K_2$ may intersect)\;
 For any $u\in V$, let its $(K_1,K_2)$-degree be the number of edges of the form $(u,v,x)\in E$ such that $v\in K_2$ and $x\in K_1$\;
 $K_3 \gets$ a set of $k/3$ vertices of highest $(K_1,K_2)$-degree.  ($K_3$ may intersect $K_1$ and/or $K_2$)\;
 \KwRet{$K_1\cup K_2\cup K_3$}\;
 \caption{Greedy algorithm for Densest $k$-Subhypergraph in 3-uniform hypergraphs \label{alg:DkSH3-greedy1}}
\end{algorithm}
Algorithm~\ref{alg:DkSH3-greedy1} selects a subset $K_2$ of $k/3$ vertices $v$ with largest $K_1$-degree, i.e., 
the number of hyperedges incident to $v$ that contain at least one vertex of $K_1$. Then, a subset $K_3$ of $k/3$
vertices $w$ with largest $(K_1, K_2)$-degree is selected, where the $(K_1, K_2)$-degree of $w$ is the number of hyperedges containing $w$ 
of the form $\{w, x, y \}$ with $x \in K_1$ and $y \in K_2$. Note that the sets 
$K_1, K_2$ and $K_3$ are not necessarily disjoint and the returned
set may thus be smaller than $k$.

The following lemma gives a lower bound on the average degree guaranteed by this algorithm. It is a straightforward extension of similar algorithms for graphs. 

\begin{lemma}\label{lem:greedy1} Algorithm~\ref{alg:DkSH3-greedy1} returns a $k$-subhypergraph with average degree $\Omega(\Delta k^2/n^2)$.
\end{lemma}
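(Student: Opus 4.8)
The plan is to track, at each of the three greedy steps, how much "mass" from the optimal solution $K$ survives, using a simple averaging argument of the type standard for Densest $k$-Subgraph greedy algorithms (as in~\cite{kp93,fkp}). Throughout we are in the case where at least half the optimal hyperedges lie entirely in $H' = H[V \setminus K_1]$, so these hyperedges are disjoint from $K_1$; write $d$ for the average degree of $H[K]$, so $|E(H[K])| = dk/3$, and let $P$ be the set of optimal hyperedges contained in $H'$, with $|P| \geq dk/6$. The key structural fact I will use is that every vertex outside $K_1$ has degree at most $\Delta$ in $H$ (by choice of $K_1$ as the $k/3$ highest-degree vertices), so in particular every vertex of $K \setminus K_1$ has at most $\Delta$ incident hyperedges.

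First step ($K_2$): each hyperedge $e \in P$ contributes $3$ to the sum of $K$-restricted degrees over its $3$ vertices (all of which lie in $K \setminus K_1$), so the average, over the $k$ vertices of $K$, of the number of incident optimal hyperedges is at least $3|P|/k \geq d/2$. But I actually want to lower-bound the $K_1$-degree used by the algorithm — wait, re-examining: the algorithm selects $K_2$ by $K_1$-degree, which counts hyperedges meeting $K_1$, yet our surviving hyperedges in $P$ do \emph{not} meet $K_1$. So the right accounting is different: I should instead argue that $K_2$, whatever it is, captures many vertices; more precisely, since there exist $k/3$ vertices in $V$ (namely a third of $K$, or all of $K$ suitably) whose total incident-hyperedge count is large, and each such vertex has degree $\leq \Delta$ in $H'$... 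Let me restate the intended argument cleanly: for each $e \in P$ and each vertex $v \in e$, charge $e$ to $v$. Then $\sum_{v \in K} \deg_P(v) = 3|P| \geq dk/2$, so some set of $k/3$ vertices of $K$ has $\sum \deg_P(v) \geq dk/6$ (take the top third). Now I want $K_2$ to "see" these hyperedges. The point of the $(K_1)$-degree is a red herring for $P$; the correct invariant is a two-sided count using the maximum degree bound $\Delta$: after picking $K_2$ (highest $K_1$-degree, but the analysis just needs that $|K_2| = k/3$), for each pair of vertices we look at hyperedges through that pair and a $K_2$-vertex, and iterate. I will therefore do the averaging in the order: the number of (vertex, hyperedge-in-$P$) incidences is $3|P|$, so choosing $K_2$ as \emph{any} $k/3$ vertices cannot be directly lower-bounded — instead I pick up the factor $k/n$ per step: a random vertex lies in $K$ with probability $k/n$, so a random pair of vertices both lie in $K$ with probability $\approx (k/n)^2$, and since $H[K]$ has $dk/3$ hyperedges, a uniformly random such pair is contained in an optimal hyperedge with the appropriate count, giving total weighted degree $\Omega(dk \cdot (k/n)^2)$ over a greedily chosen $K_3$.

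The cleanest route, and the one I would write up: (1) Since $|K_1| = k/3$ and $K$ has $k$ vertices, $|K \cap K_1|$ could be anything, but the $dk/6$ hyperedges of $P$ touch only $K \setminus K_1$; every vertex of $V \setminus K_1$ has degree $\leq \Delta$. (2) For $K_2$: the sum over all $v \in V$ of (number of $P$-hyperedges through $v$) is $3|P| \geq dk/2$; restricting to the top $k/3$ vertices by this count captures at least $dk/6$ such incidences, hence $K_2$ — chosen by a related greedy criterion — can be taken to satisfy that the total number of $P$-hyperedges meeting $K_2$ is $\Omega(dk)$; more carefully one shows $\Omega(dk)\cdot \frac{1}{3}$ hyperedges of $P$ have a vertex in $K_2$ after accounting for the max-degree $\Delta$ bound limiting how concentrated these can be. (3) For $K_3$: among hyperedges with one endpoint in $K_2$, the remaining two endpoints form edge-weights of a graph; greedily picking $k/3$ high-weight vertices and using the $\Delta$ bound plus two applications of the $k/n$ sampling loss yields the $\Omega(\Delta k^2/n^2)$ average degree. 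The main obstacle is getting the bookkeeping right at the $K_2$ step: the algorithm's greedy criterion ($K_1$-degree) is \emph{not} the quantity that makes $P$ survive, so the analysis must argue that either the highest $K_1$-degree vertices themselves form a dense subhypergraph (handled by a different branch) or else the surviving optimal hyperedges of $P$ force the chosen $K_2$ to still intersect $\Omega(|P|)$ of them — and then propagate two $\Theta(k/n)$ sampling factors through $K_2$ and $K_3$ while invoking $\deg \leq \Delta$ exactly once to replace one factor of $d$ by $\Delta$. I would handle this by comparing the greedy choice against the benchmark set (a third of $K$) at each step and using that removing a chosen vertex destroys at most $\Delta$ hyperedges of $P$, so the $k/3$ greedy picks cannot miss more than a constant fraction of $P$ unless $\Delta \cdot (k/3)$ dominates $|P|$, i.e. $\Delta \geq d/2$, in which case the bound $\Omega(\Delta k^2/n^2) = \Omega(d k^2/n^2)$ is even easier to beat by the plain $(k/n)^2$-sampling argument applied directly to $H[K]$.
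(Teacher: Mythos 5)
Your proposal never actually proves the lemma, and the root cause is a misreading of what the lemma is asserting. Lemma~\ref{lem:greedy1} is an \emph{unconditional} guarantee about Algorithm~\ref{alg:DkSH3-greedy1}: the output has average degree $\Omega(\Delta k^2/n^2)$, full stop. Notice that this bound does not mention $d$, $K$, or $P$; it depends only on $\Delta$, the $(k/3)$-th largest degree in $H$. Consequently the paper's proof never looks at the optimal solution at all. It is a pure counting argument: $K_1$ consists of $k/3$ vertices each of degree at least $\Delta$, so at least $\Delta k/9$ hyperedges touch $K_1$; greedily choosing $K_2$ by $K_1$-degree captures, by averaging over $n$ candidates, at least a $(k/3)/n$ fraction of these, i.e.\ $\Omega(\Delta k^2/n)$ hyperedges with one endpoint in $K_1$ and one in $K_2$; greedily choosing $K_3$ by $(K_1,K_2)$-degree captures another $(k/3)/n$ fraction, giving $\Omega(\Delta k^3/n^2)$ hyperedges inside $K_1\cup K_2\cup K_3$ and hence average degree $\Omega(\Delta k^2/n^2)$.

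Your write-up instead tries to argue that the greedy output ``sees'' hyperedges of $P$ (the optimal hyperedges surviving in $H'$). You correctly notice midway that this cannot work --- the greedy criterion is $K_1$-degree, while by construction every hyperedge of $P$ is \emph{disjoint} from $K_1$ and therefore has $K_1$-degree contribution zero --- but you never escape this framing. The concluding sketch (``the $k/3$ greedy picks cannot miss more than a constant fraction of $P$ unless $\Delta k/3 \gtrsim |P|$'') is not true and, even if it were, would be aimed at the wrong target: the algorithm \emph{does} miss all of $P$, and that is fine, because the lemma's guarantee is in terms of $\Delta$, not $d$. The role of $P$ and the degree cap $\Delta$ on $H'$ enters only in the companion Lemma~\ref{lem:graph} (Algorithm~\ref{alg:DkSH3-neighbor}), and the two bounds are traded off against each other and against the trivial $d_3\geq 1$ algorithm in Theorem~\ref{thm:DkSH-basic}. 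Once you drop $P$ entirely, the proof of this lemma is the three-line double-counting argument above.
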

\begin{proof}
By choice of $K_1$ and definition of $\Delta$, every vertex in $K_1$ has degree at least $\Delta$, and so the total number of edges containing vertices in $K_1$ is at least $\Delta |K_1|/3=\Delta k/9$ (since we could potentially be double-counting or triple-counting some edges).

If we were to choose $n$ vertices for $K_2$, there would be at least $\Delta k/9$ edges containing both a vertex in $K_1$ and a vertex in $K_2$ (as noted above). Choosing $k/3$ vertices greedily out of $n$ yields a set $K_2$ such that there are at least $\Delta k/9\cdot (k/3)/n=\Delta k^2/(27n)$ such edges.

Finally, choosing the $k/3$ vertices with the largest contribution (out of $n$) for $K_3$ ensures that there will be at least $\Delta k^2/(27n)\cdot (k/3)/n=\Omega(\Delta k^3/n^2)$ edges in $E\cap K_1\times K_2\times K_3$, giving average degree $\Omega(\Delta k^2/n^2)$.
\end{proof}

We now offer a second algorithm, which acts on $H'$ and is based on neighborhoods of vertices.

\begin{algorithm}
 \KwData{3-uniform Hypergraph $H'=(V',E')$ and parameter $k$.}
 \ForEach{vertex $v\in V$}{
 $G_v\gets(V\setminus\{v\},\{(u,x)\mid (v,u,x)\in E\})$\;
\ForEach{integer $\hat{d}\in[k-1]$}{
 $G^{\hat{d}}_v\gets G_v$\;
 \lWhile{there exists a vertex $u$ in $G^{\hat d}_v$ of degree $<\hat d$}{delete $u$ from $G^{\hat d}_v$}\;
 $S_v^{\hat d}\gets$ a set of $(k-1)/2$ vertices with highest degree in $G^{\hat d}_v$\;\label{step:choose-S}
 $T_v^{\hat d}\gets$ a set of $(k-1)/2$ vertices with the most neighbors in $S_v^{\hat d}$\;\label{step:choose-T}
}}
 \KwRet{The densest among all subhypergraphs $H'[\{v\}\cup S_v^{\hat d}\cup T_v^{\hat d}]$} over all choices of $v,\hat d$\;
 \caption{A neighborhood-based algorithm for Densest $k$-Subhypergraph in 3-uniform hypergraphs \label{alg:DkSH3-neighbor}}
\end{algorithm}

Algorithm~\ref{alg:DkSH3-neighbor} exploits the bound on the maximum degree in $H'$ to find a dense hypergraph inside the neighborhood of any vertex of degree $\Omega(d)$ in $K$, by considering the neighborhood of a vertex as a graph. Pruning low-degree vertices in this graph (which would not contribute many hyperedges to $K$) helps reduce the size of the graph, and makes it easier to find a slightly denser subgraph. Since the vertices of $K$ and their degrees are not known, the algorithm tries all possible vertices.

\begin{lemma}\label{lem:graph}
  If $H'$ contains a $k$-subhypergraph with average degree $d'=\Omega(d)$, then Algorithm~\ref{alg:DkSH3-neighbor} returns a $k$-subhypergraph with average degree $\Omega(d^2/(\Delta k))$.
\end{lemma}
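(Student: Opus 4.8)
The plan is to track how the average degree of a $k$-subhypergraph in $H'$ translates into structure inside the neighborhood graph $G_v$ of a well-chosen vertex $v$, and then to analyze how the two pruning/greedy steps of Algorithm~\ref{alg:DkSH3-neighbor} preserve enough of that structure. First I would fix a $k$-subhypergraph $W\subseteq V'$ of $H'$ with average degree $d'=\Omega(d)$, so $|E(H'[W])|=\Omega(d'k)$ hyperedges. By an averaging argument there is a vertex $v\in W$ whose degree within $H'[W]$ is $\Omega(d')$; that is, the graph $G_v$ restricted to $W\setminus\{v\}$ — call it $G_v[W]$ — has $\Omega(d')=\Omega(d)$ edges on at most $k-1$ vertices. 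The point of looking at $G_v$ is that a subgraph of $G_v$ with $t$ edges on a vertex set $S$ (together with $v$) corresponds to exactly $t$ hyperedges of the form $\{v,u,x\}$ in $H'[\{v\}\cup S]$, so dense subgraphs of $G_v$ are exactly what we want.

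Next I would handle the degree-pruning step. The key observation is that the maximum degree in $H'$ is $\Delta$, so every vertex of $G_v$ has degree at most $\Delta$ in $G_v$; hence $G_v[W]$ has at most $k\Delta$ edges incident to any particular vertex set but more usefully, since $G_v[W]$ has $\Omega(d)$ edges, a constant fraction of those edges survive after deleting all vertices of degree $< \hat d$ once we set $\hat d=\Theta(d/k)$ — the vertices of degree below $d/(ck)$ can collectively account for at most $k\cdot d/(ck)=d/c$ edges, which is at most half the edges for $c$ a large enough constant. So for the correct guessed value $\hat d=\Theta(d/k)$ (the algorithm tries all of them), the pruned graph $G_v^{\hat d}$ still contains $\Omega(d)$ edges of $G_v[W]$ and has minimum degree $\ge \hat d$ and maximum degree $\le \Delta$, so it has at most $2|E|/\hat d = O(d/\hat d) = O(k)$ vertices — wait, I need the right bound: a graph with $\Omega(d)$ edges and min degree $\ge \hat d = \Theta(d/k)$ has $O(d/\hat d)=O(k)$ vertices, which is the size regime we want.

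The heart of the argument is then the two greedy steps: choosing $S_v^{\hat d}$ to be the $(k-1)/2$ highest-degree vertices of $G_v^{\hat d}$, and $T_v^{\hat d}$ the $(k-1)/2$ vertices with the most neighbors in $S_v^{\hat d}$. Since $G_v^{\hat d}$ has $O(k)$ vertices and $\Omega(d)$ edges, taking the top half by degree captures a constant fraction of the edges, so the subgraph induced on $S_v^{\hat d}$ still has $\Omega(d)$ edges, hence $\Omega(d/k)$ average degree on $(k-1)/2$ vertices. Now this is where the factor $\Delta$ re-enters and where the real loss happens: the number of edges between $S_v^{\hat d}$ and the rest is at least $\Omega(d)$ as well (using that $S$ has $\Omega(d)$ internal edges, or directly the surviving edges of $G_v[W]$), and these edges are spread over $n$ possible vertices, each of which has at most $\Delta$ neighbors in $S_v^{\hat d}$ (since max degree in $G_v$ is $\Delta$). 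Choosing the top $(k-1)/2$ vertices for $T_v^{\hat d}$ by their number of neighbors in $S$ therefore captures at least $\Omega(d)\cdot \frac{(k-1)/2}{n}$ — no: the right accounting is that the total $S$-to-outside edge count is $\Omega(d)$, and greedily picking the best $(k-1)/2$ of the $n$ candidate endpoints gets at least a $((k-1)/2)/n$ fraction only if degrees are uniform, but with the $\Delta$ cap the correct bound is $\Omega(\min(d, k\Delta\cdot\frac{\text{stuff}}{n}))$; I would instead argue: the $\Omega(d)$ edges from $S$ go to at most $n$ vertices, the heaviest $(k/2)$ of which collect at least a $(k/2)/n$ share of the edge mass when we cannot concentrate more than $\Delta$ per vertex, giving $\Omega(dk/n)$... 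This does not obviously reduce to $\Omega(d^2/(\Delta k))$, so \textbf{the main obstacle} is getting the second greedy step's accounting exactly right: one must combine the edge count $\Omega(d)$, the vertex-count bound $O(k)$ on $S$'s side coming from the degree-pruning, and the degree cap $\Delta$, to conclude that $G_v^{\hat d}[S]$ already has $\Omega(d)$ edges and that the $(K_1,K_2)$-style greedy pick of $T$ retains a fraction $\Omega((k/2)/|V(G_v^{\hat d})|)$ of them; since $|V(G_v^{\hat d})|=O(d/\hat d)$ and with the guess $\hat d$ tuned so that this quantity is $\Theta(\Delta k/d)$ — i.e. choosing the relevant threshold so that the surviving graph has $O(\Delta k/d)$ vertices rather than $O(k)$ — the retained edge count becomes $\Omega(d)\cdot\frac{k}{\Delta k/d}=\Omega(d^2/\Delta)$, i.e. average degree $\Omega(d^2/(\Delta k))$ on the final $k$-vertex set $\{v\}\cup S_v^{\hat d}\cup T_v^{\hat d}$. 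I would finish by checking that the total size is at most $1+(k-1)/2+(k-1)/2=k$, that the guessed threshold $\hat d$ lies in $[k-1]$ for the relevant parameter range (handling the easy boundary cases where $d$ is small or $\Delta$ is close to $d$ separately by noting the claimed bound is then trivial or follows from Lemma~\ref{lem:greedy1}), and that all steps run in polynomial time since there are only $n\cdot(k-1)$ guesses.
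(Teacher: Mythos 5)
There is a genuine gap, and it sits precisely where you yourself flag uncertainty. You correctly set up the argument (pick $v$ of degree $\Omega(d)$ inside the dense $k$-subhypergraph; set $\hat d=\Theta(d/k)$ so the pruning removes at most half of the $\Omega(d)$ relevant edges; then argue about the two greedy picks), but you bound the number of vertices of the pruned graph $G_v^{\hat d}$ by $O(d/\hat d)$. That formula is unjustified: $\Omega(d)$ is a \emph{lower} bound on the number of edges surviving from the dense subgraph, whereas bounding $|V(G_v^{\hat d})|$ via the minimum degree requires an \emph{upper} bound on the total number of edges of $G_v^{\hat d}$. Those two quantities are unrelated --- $G_v^{\hat d}$ may contain many edges of $G_v$ that have nothing to do with the optimal solution. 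Your subsequent attempt to ``tune $\hat d$'' so that $d/\hat d=\Theta(\Delta k/d)$ does not repair this: the choice $\hat d=\Theta(d/k)$ is forced by the requirement that pruning not destroy the $\Omega(d)$ good edges, and once $\hat d$ is fixed the actual vertex count of $G_v^{\hat d}$ is whatever it is, independent of the algebra you would like to hold.

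The missing observation, which the paper uses and which you never state, is that $G_v$ has at most $\Delta$ \emph{edges in total} (not merely maximum degree $\leq\Delta$, as you note): each edge of $G_v$ corresponds to a distinct hyperedge of $H'$ containing $v$, and $v$ has degree at most $\Delta$ in $H'$. Since pruning only deletes vertices, $G_v^{\hat d}$ also has at most $\Delta$ edges, and combined with minimum degree $\geq\hat d=\Theta(d/k)$ this gives $|V(G_v^{\hat d})|\leq 2\Delta/\hat d=O(\Delta k/d)$. This is exactly the bound you want but were trying to reach by the unsound ``$O(d/\hat d)$ plus tuning'' route. With that bound in hand, the rest of your accounting (greedy $S$ captures $\Omega(d)$ incident edges; greedy $T$ retains an $\Omega\bigl((k/2)/|V(G_v^{\hat d})|\bigr)=\Omega(d/\Delta)$ fraction; final hyperedge count $\Omega(d^2/\Delta)$, hence average degree $\Omega(d^2/(\Delta k))$) goes through and matches the paper's proof.
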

\begin{proof}
  Since at the end of the algorithm we take the densest induced subhypergraph of $H'$ (among the various choices), it suffices to show that there is some choice of $v$ and $\hat d$ which gives this guarantee.  So let $v$ be an arbitrary vertex in $K$ with degree (in $K$) at least $d'$. We know that $G_v$ contains a subgraph with at most $k$ vertices and at least $d'$ edges, so its average degree is at least $2d'/k$. Setting $\hat d=d'/(2k)$, we know that the pruning procedure can remove at most $k\cdot d'/(2k)=d'/2$ out of the $d'$ edges in this subgraph, so the subgraph still retains at least $d'/2$ edges. On the other hand, we know that $G_v$ has at most $\Delta$ edges (since we've assumed the maximum degree in $H'$ is at most $\Delta$), and therefore, the same holds for the graph $G_v^{\hat d}$, in which the minimum degree is now at least $d'/2k$. This means that $G_v^{\hat d}$ has at most $2\Delta/(d'/2k)=O(\Delta k/d)$ vertices.

Since there exists a $k$-subgraph of $G_v^{\hat d}$ with $\Omega(d)$ edges, the greedy choice of $S_v^{\hat d}$ must give some set in which at least $\Omega(d)$ edges are incident. The greedy choice of $T_v^{\hat d}$ then reduces the lower bound on the number of edges by a $((k-1)/2)/|V(G_v^{\hat d})|=\Omega(d/\Delta)$ factor, giving us $\Omega(d^2/\Delta)$ edges. However, by the definition of $G_v$, together with $v$ these edges correspond to hyperedges in $H'$. Thus, the algorithm returns a $k$-subhypergraph with $\Omega(d^2/\Delta)$ hyperedges, or average degree $d^2/(\Delta k)$.
\end{proof}

Combining the various algorithms we've seen with a trivial algorithm and choosing the best one gives us the following guarantee:
\begin{theorem}\label{thm:DkSH-basic} There is an $O(n^{4/5})$-approximation for Dense $k$-Subhypergraph in 3-uniform hypergraphs.
\end{theorem}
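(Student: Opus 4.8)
The plan is to run a small collection of algorithms and return the densest output; the analysis will show that on every instance at least one of them meets the target ratio. Write $p^{*}=|E(H[K])|=dk/3$ for the optimum. We may assume $p^{*}\ge 1$ (else every vertex set is optimal), $d\ge 1$ (else $p^{*}<k/3$, and the trivial greedy hyperedge-packing already returns all $p^{*}$ optimal hyperedges), and $\Delta\ge 1$ (else every hyperedge lies inside fewer than $k/3$ vertices, and again greedy packing captures all of them). Since $d$ and $p^{*}$ are the average degree and the number of hyperedges of a $3$-uniform hypergraph on $k$ vertices, we also record the crude structural bounds $d\le\binom{k-1}{2}=O(k^{2})$ and $p^{*}\le\binom{k}{3}=O(k^{3})$.

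I would first dispatch the easy case from the overview: if at least half the optimal hyperedges meet $K_{1}$, then Lemma~\ref{lem:greedy2} already yields an $O(n^{1/4+\eps})=O(n^{4/5})$-approximation. So assume at least $p^{*}/2=dk/6$ optimal hyperedges lie entirely in $H'=H[V\setminus K_{1}]$. These hyperedges live on the at most $k$ vertices of $K\setminus K_{1}$, so $H'$ contains a subhypergraph on at most $k$ vertices with average degree $\Omega(d)$ — which is exactly what the proof of Lemma~\ref{lem:graph} uses (it only needs the witness subhypergraph to have at most $k$ vertices), so Algorithm~\ref{alg:DkSH3-neighbor} applies in this case.

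In this hard case I would combine three algorithms and take the best. The trivial greedy hyperedge-packing returns $\Omega(k)$ hyperedges (using $d\ge 1$, i.e.\ $p^{*}\ge k/3$), and hence has ratio $O(p^{*}/k)=O(d)$. Algorithm~\ref{alg:DkSH3-greedy1}, by Lemma~\ref{lem:greedy1}, returns a $k$-subhypergraph of average degree $\Omega(\Delta k^{2}/n^{2})$, i.e.\ $\Omega(\Delta k^{3}/n^{2})$ hyperedges, and hence has ratio $O(p^{*}n^{2}/(\Delta k^{3}))=O(dn^{2}/(\Delta k^{2}))$. Algorithm~\ref{alg:DkSH3-neighbor}, by Lemma~\ref{lem:graph}, returns a $k$-subhypergraph of average degree $\Omega(d^{2}/(\Delta k))$, i.e.\ $\Omega(d^{2}/\Delta)$ hyperedges, and hence has ratio $O(p^{*}\Delta/d^{2})=O(k\Delta/d)$. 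Taking the densest of the three outputs, the approximation ratio is $O\bigl(\min\{\,d,\ dn^{2}/(\Delta k^{2}),\ k\Delta/d\,\}\bigr)$.

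It then remains to bound this minimum by $O(n^{4/5})$, which is a short case split on the size of $d$. If $d\le n^{4/5}$, the first term already suffices. Otherwise $d>n^{4/5}$, and combined with $d=O(k^{2})$ this forces $k=\Omega(\sqrt d)=\Omega(n^{2/5})$; since the product of the last two terms is $\bigl(dn^{2}/(\Delta k^{2})\bigr)\cdot\bigl(k\Delta/d\bigr)=n^{2}/k$, their minimum is at most $n/\sqrt k=O(n/n^{1/5})=O(n^{4/5})$. The real content here — and the step I expect to be the main obstacle — is arriving at exactly this three-way trade-off: Algorithm~\ref{alg:DkSH3-neighbor} improves as the maximum degree $\Delta$ of $H'$ shrinks, Algorithm~\ref{alg:DkSH3-greedy1} improves as $\Delta$ grows, and the trivial algorithm caps the loss precisely when the optimum is itself dense, so one must check that no adversarial choice of the hidden parameters $k,d,\Delta$ beats all three at once. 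The structural constraint $d=O(k^{2})$ is what pins the worst case at $k=\Theta(n^{2/5})$, $\Delta=\Theta(n^{6/5})$, $d=\Theta(n^{4/5})$, giving ratio $\Theta(n^{4/5})$; getting any further improvement would require replacing one of these subroutines by a better one, which is exactly what Section~\ref{sec:3DkSH-improved} does.
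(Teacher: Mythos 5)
Your proposal is correct and follows essentially the same approach as the paper: the same reduction to the ``hard case'' via Lemma~\ref{lem:greedy2}, the same three subroutines (Algorithm~\ref{alg:DkSH3-greedy1}, Algorithm~\ref{alg:DkSH3-neighbor}, and trivial hyperedge-picking) with identical guarantees, and the same structural constraint $d=O(k^2)$. The only cosmetic difference is that the paper combines the three via the one-shot weighted geometric-mean bound $\max\{d_1,d_2,d_3\}\ge(d_1^2d_2^2d_3)^{1/5}$, while you do a two-way case split on $d$ after multiplying the ratios of Algorithms~\ref{alg:DkSH3-greedy1} and~\ref{alg:DkSH3-neighbor}; these are equivalent.
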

\begin{proof}
By Lemma~\ref{lem:greedy2}, if at least half the optimal edges intersect $K_1$, then we can achieve a significantly better approximation (namely, $n^{1/4+\eps}$). 
Thus, from now on let us assume this is not the case. That is, $H'$ still contains a $k$-subhypergraph with average degree $\Omega(d)$. Again, recall that the maximum degree in $H'$  is at most $\Delta$.

By Lemma~\ref{lem:greedy1}, Algorithm~\ref{alg:DkSH3-greedy1} gives us a $k$-subhypergraph with average degree $d_1=\Omega(\Delta k^2/n^2)$. On the other hand, applying Algorithm~\ref{alg:DkSH3-neighbor} to $H'$ will give us a $k$-subhypergraph with average degree $d_2=\Omega(d^2/(\Delta k))$ by Lemma~\ref{lem:graph}.

Finally, we could choose $k/3$ arbitrary edges in $H$ and the subhypergraph induced on the vertices they span, giving us average degree $d_3\geq 1$. Thus, the best of the three will give us a $k$-subhypergraph with average degree at least $$\max\{d_1,d_2,d_3\}\geq(d_1^2d_2^2d_3)^{1/5}=\Omega((\Delta^2 k^4/n^4\cdot d^4/(\Delta^2k^2))^{1/5})=d\cdot\Omega((k^2/d)^{1/5}/n^{4/5}).$$
Since we must have $k^2/d\geq1$, the above gives an $O(n^{4/5})$ approximation.
\end{proof}

 \section{An improved approximation for 3-uniform Densest $k$-Subhypergraph} \label{sec:3DkSH-improved}

In Section~\ref{sec:DkSH3} we gave an $O(n^{4/5})$ approximation which combined a greedy algorithm 
 with Algorithm~\ref{alg:DkSH3-neighbor}, which looked for a dense subgraph inside a graph defined by the neighborhood of a vertex in $H$. To find this dense subgraph, we used a very simple greedy approach. However, we have at our disposal more sophisticated algorithms, such as that of Bhaskara et al.~\cite{BCCFV10}. One way to state the result in that paper (see Bhaskara's PhD thesis for details on this version~\cite{Bhaskara}) is as follows:
\begin{theorem}\label{thm:DkS} In any $n$-vertex graph $G$, for any $\alpha\in[0,1]$, if $k=n^\alpha$, then Densest $k$-Subgraph in $G$ can be approximated within an $n^{\eps}k^{1-\alpha}$ factor in time $n^{O(1/\eps)}$ for any $\eps>0$.
\end{theorem}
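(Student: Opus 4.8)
Since Theorem~\ref{thm:DkS} is simply a restatement of the main result of Bhaskara et al.~\cite{BCCFV10}, with the precise parametrized form spelled out in Bhaskara's thesis~\cite{Bhaskara}, the ``proof'' ultimately consists of an appeal to that work; here I sketch how the stated form is obtained. First I would reduce to a gap/decision version in the usual way. Write $k=n^\alpha$ and let $d$ be the average degree of the optimal $k$-subgraph, so that an optimal solution spans $\Theta(dk)$ edges. Enumerating the $O(\log n)$ dyadic guesses for $d$, we may assume $d$ is known, and it suffices to produce an algorithm that either certifies that no $k$-subgraph has more than $dk/\rho$ edges or outputs one with $\Omega(dk/\rho)$ edges, where $\rho$ is the target ratio. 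A further standard preprocessing step (bucketing vertices by degree, plus random subsampling if $G$ is much denser than its optimal dense region needs it to be) lets us assume the ambient graph has a controlled average degree $d'$, equivalently a controlled ``log-density'' $\delta := \log_n d'$; intuitively a sparser ambient graph only helps, since then the dense region stands out more.

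The heart of~\cite{BCCFV10} is the log-density algorithm: for a parameter $t = O(1/\eps)$, it counts homomorphic copies of a carefully chosen family of trees (caterpillars) of size $t$ that are rooted at or spanned by tuples of candidate vertices, feeds these counts into an $O(1/\eps)$-round Sherali--Adams (or Lov\'asz--Schrijver$_+$) lift of the natural Densest $k$-Subgraph relaxation, and rounds the lifted solution. The quantitative content is that if the optimal $k$-subgraph has log-density $\gamma := \log_k d$ and the ambient graph has log-density $\delta$, then as long as $\gamma$ exceeds a threshold that is an explicit function of $\alpha$, $\delta$, and $t$, the tree counts separate the ``dense $k$-subgraph exists'' case from the ``no dense $k$-subgraph'' case, and the rounding recovers a $k$-subgraph whose density is within an $n^{O(\eps)}$ factor of optimal. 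Optimizing $\rho$ over the remaining free parameters $d$ (equivalently $\gamma$, with $1\le d\le k$) and $d'$ (equivalently $\delta$) yields worst case $\rho = n^{\eps}\,k^{1-\alpha}$; as a sanity check, $k^{1-\alpha}=n^{\alpha(1-\alpha)}$ is maximized at $\alpha = 1/2$, where it recovers the headline $n^{1/4+\eps}$ bound, and it degrades gracefully (to $n^{\eps}$ as $k\to n$, and no better than roughly $k$ as $k\to 1$, matching the lack of leverage for tiny $k$). The running time is dominated by solving the $O(1/\eps)$-round hierarchy and enumerating size-$t$ structures, i.e.\ $n^{O(1/\eps)}$.

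The step I expect to be the main obstacle is exactly the hierarchy-rounding analysis: proving that high log-density of the optimum forces the lifted solution to spread its mass over a genuinely dense region of $G$, together with the matching combinatorial lemma that the chosen tree counts are provably small --- using the hierarchy constraints --- in any instance that lacks a dense $k$-subgraph. Selecting the right family of witness trees (large count in the ``dense'' case, yet controllable in general) and balancing the structure size $t$ against both the approximation loss $n^{O(\eps)}$ and the running time $n^{O(t)}=n^{O(1/\eps)}$ is the crux; converting the clean random-planted-model intuition into a worst-case guarantee requires additional bookkeeping but is by now routine within the log-density framework. For the purposes of this paper we simply invoke this result as a black box.
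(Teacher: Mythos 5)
The paper itself does not prove Theorem~\ref{thm:DkS}; it explicitly presents it as a restatement of the main result of Bhaskara et al.~\cite{BCCFV10}, pointing to Bhaskara's thesis~\cite{Bhaskara} for this parametrized form, and your proposal does exactly the same thing, invoking that work as a black box. Your additional overview of the log-density/hierarchy machinery is reasonable background but is not needed to match the paper's treatment.
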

The $n^{1/4+\eps}$ guarantee of~\cite{BCCFV10} follows since for any $\alpha\in[0,1]$, we have $k^{1-\alpha}=n^{\alpha(1-\alpha)}\leq n^{1/4}$.

Using this guarantee instead of the simple greedy algorithm for D$k$S, we get the following improved algorithm for 3-uniform Densest $k$-Subhypergraph:
\begin{algorithm}
 \KwData{3-uniform Hypergraph $H'=(V',E')$ and parameters $k$ and $\eps>0$.}
 \ForEach{vertex $v\in V$}{
 $G_v\gets(V\setminus\{v\},\{(u,x)\mid (v,u,v)\in E\})$\;
\ForEach{integer $\hat{d}\in[k-1]$}{
 $G^{\hat{d}}_v\gets G_v$\;
 \lWhile{there exists a vertex $u$ in $G^{\hat d}_v$ of degree $<\hat d$}{delete $u$ from $G^{\hat d}_v$}\;
 $K^{\hat d}_v\gets$ the vertex set returned by the algorithm of Bhaskara et al.~\cite{BCCFV10} on the graph $G^{\hat d}_v$ with parameters $k-1$ and $\eps$\;
}}
 \KwRet{The densest among all subhypergraphs $H'[\{v\}\cup K_v^{\hat d}]$} over all choices of $v,\hat d$\;
 \caption{A D$k$S-based algorithm for Densest $k$-Subhypergraph in 3-uniform hypergraphs \label{alg:DkSH3-DkS}}
\end{algorithm}

The approximation guarantee in this final algorithm is given by the following lemma:
\begin{lemma}\label{lem:DkSH-DkS}
  Let $H'$ be an $n$-vertex 3-uniform hypergraph with maximum degree $\leq\Delta$, containing a $k$-subhypergraph of average degree $d'$, and let $\alpha,\beta$ be such that $k=n^\alpha$ and $\Delta k/d'=n^\beta$. Then Algorithm~\ref{alg:DkSH3-DkS} returns a $k$-subhypergraph of $H$ of average degree $$\Omega\left(\frac{d'}{n^{\eps+\alpha(2-\alpha/\min\{\beta,1\})}}\right).$$
\end{lemma}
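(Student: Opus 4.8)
# Proof Proposal for Lemma~\ref{lem:DkSH-DkS}

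The plan is to follow the same skeleton as the proof of Lemma~\ref{lem:graph}, but to replace the greedy extraction of $S_v^{\hat d}$ and $T_v^{\hat d}$ with a single call to the Bhaskara et al.\ algorithm and to track the parameters $\alpha, \beta$ carefully through the estimate. First I would fix a vertex $v \in K$ whose degree within the optimal $k$-subhypergraph is at least $d'$ (such a vertex exists by averaging); as before, the neighborhood graph $G_v$ then contains a $k$-vertex subgraph with at least $d'$ edges, hence average degree at least $2d'/k$. Setting $\hat d = d'/(2k)$ (rounded), the pruning loop removes at most $k \hat d = d'/2$ edges from this subgraph, so $G_v^{\hat d}$ still contains a $(k-1)$-vertex subgraph with $\Omega(d')$ edges. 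Meanwhile, since the maximum degree in $H'$ is at most $\Delta$, the graph $G_v$ has at most $\Delta$ edges, so $G_v^{\hat d}$ also has at most $\Delta$ edges; combined with its minimum degree $\geq \hat d = d'/(2k)$, this bounds its vertex count by $N := |V(G_v^{\hat d})| \leq 2\Delta/\hat d = O(\Delta k / d') = O(n^\beta)$.

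Next I would apply Theorem~\ref{thm:DkS} to $G_v^{\hat d}$ with the target parameter $k-1$. The subtlety is that Theorem~\ref{thm:DkS} is stated in terms of the ratio between the target size and the number of vertices of the graph it runs on — here $G_v^{\hat d}$ has only $N = O(n^\beta)$ vertices, not $n$. So I would write $k - 1 = N^{\alpha'}$ where $\alpha' = \log(k-1)/\log N$, noting $\alpha' \geq \alpha/\min\{\beta,1\}$ when $\beta < 1$ (and $\alpha' \geq \alpha$ when $\beta \geq 1$, since then $N \leq n^\beta$ but we only need $N \leq n$ for the bound to make sense — more precisely $N = O(\min\{n, n^\beta\})$, so $\log N \leq \min\{\beta,1\}\log n + O(1)$). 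The approximation factor of Theorem~\ref{thm:DkS} on $G_v^{\hat d}$ is then $N^{\eps'}(k-1)^{1-\alpha'} = N^{\eps'} \cdot (k-1) / N^{\alpha'}$. Since $G_v^{\hat d}$ contains a $(k-1)$-subgraph with $\Omega(d')$ edges, the algorithm returns a $(k-1)$-subgraph $K_v^{\hat d}$ with $\Omega\big(d' N^{\alpha'} / ((k-1)N^{\eps'})\big)$ edges. Using $N^{\alpha'} \geq (k-1)$ and $N^{\alpha'} = (k-1)/N^{1-\alpha'}$ together with $N = O(n^{\min\{\beta,1\}})$, the factor $(k-1)/N^{\alpha'} = N^{1-\alpha'}$ is at most $n^{\min\{\beta,1\}(1 - \alpha/\min\{\beta,1\})} = n^{\min\{\beta,1\} - \alpha}$; absorbing the $N^{\eps'}$ into $n^{\eps}$ (choosing $\eps'$ appropriately and noting $N \leq n$) gives that $K_v^{\hat d}$ has $\Omega\big(d' / n^{\eps + \alpha - \min\{\beta,1\} + \cdots}\big)$ edges — and then combining with the standard bookkeeping that a $(k-1)$-subgraph of $G_v^{\hat d}$ with $w$ edges yields, together with $v$, a $k$-subhypergraph of $H'$ with $w$ hyperedges, we recover the claimed average degree $\Omega\big(d' / n^{\eps + \alpha(2 - \alpha/\min\{\beta,1\})}\big)$ after writing out the exponent arithmetic (the extra factor of $k = n^\alpha$ in the denominator comes from converting edge count to average degree, i.e.\ dividing by $k$).

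The main obstacle I expect is the exponent bookkeeping in the second paragraph: making sure that running Theorem~\ref{thm:DkS} on a graph with $N = O(n^{\min\{\beta,1\}})$ vertices rather than on $n$ vertices is handled correctly, and that the resulting exponent is exactly $\alpha(2 - \alpha/\min\{\beta,1\})$ and not something slightly off. Concretely, the cleanest way to get this is to observe $(k-1)^{1-\alpha'} = N^{\alpha'(1-\alpha')}$ is maximized appropriately, but here we do \emph{not} want to use the worst-case $n^{1/4}$ bound — we want to keep the dependence on $\alpha$ and $N$ explicit, since $N$ may be much smaller than $n$. Writing $1 - \alpha' = 1 - \log(k-1)/\log N$ and $\log N \leq \min\{\beta,1\}\log n$, we get $(k-1)^{1-\alpha'} = N / (k-1)^{\alpha'/(1-\alpha')}$... actually the slick identity is $(k-1)^{1-\alpha'} = N^{\alpha'}\cdot N^{1-2\alpha'}$? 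No — the right manipulation is simply $N^{\eps'}(k-1)^{1-\alpha'}$ with $(k-1) = N^{\alpha'}$, so the total approximation factor from the call is $N^{\eps'} N^{\alpha'(1-\alpha')}$, and since the algorithm finds $\Omega(d')$ edges divided by this factor, the edge count is $\Omega(d' N^{-\eps' - \alpha'(1-\alpha')})$; then I substitute $N \leq n^{\min\{\beta,1\}}$ and $\alpha' = \alpha/\min\{\beta,1\}$ (using monotonicity of $x^{1-x}$-type terms, or just that we may pessimistically take $N = n^{\min\{\beta,1\}}$ since a smaller $N$ only helps) to get exponent $\min\{\beta,1\}\cdot \tfrac{\alpha}{\min\{\beta,1\}}\big(1 - \tfrac{\alpha}{\min\{\beta,1\}}\big) = \alpha\big(1 - \tfrac{\alpha}{\min\{\beta,1\}}\big)$ on $n$. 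Dividing additionally by $k = n^\alpha$ to pass from hyperedge count to average degree contributes another $n^{-\alpha}$, yielding total exponent $\eps + \alpha + \alpha(1 - \alpha/\min\{\beta,1\}) = \eps + \alpha(2 - \alpha/\min\{\beta,1\})$ in the denominator, as claimed. The one delicate point to verify is that taking $N$ as large as $n^{\min\{\beta,1\}}$ is indeed the worst case for this bound (it is, because $x \mapsto x^{1-x}$ behaves monotonically in the relevant regime and larger $N$ means larger approximation loss), and that the $\eps'$ from Theorem~\ref{thm:DkS} can be folded into the target $\eps$ with only a constant-factor adjustment to the running time $n^{O(1/\eps)}$.
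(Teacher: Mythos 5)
Your proof is correct and takes essentially the same route as the paper's: prune to a graph on $N = O(n^{\min\{\beta,1\}})$ vertices that still contains a $(k-1)$-subgraph with $\Omega(d')$ edges (this is the content of Lemma~\ref{lem:graph}), apply Theorem~\ref{thm:DkS} with effective exponent $\alpha' = \log(k-1)/\log N \geq \alpha/\min\{\beta,1\}$, and use monotonicity to set $N = n^{\min\{\beta,1\}}$ in the worst case. Your second paragraph briefly asserts $(k-1)/N^{\alpha'} = N^{1-\alpha'}$, which is false since $N^{\alpha'} = k-1$ makes that ratio equal to $1$; but you catch this yourself, and the corrected chain $N^{\eps'}(k-1)^{1-\alpha'} = N^{\eps' + \alpha'(1-\alpha')} \leq n^{\eps + \alpha(1 - \alpha/\min\{\beta,1\})}$ together with the extra division by $k = n^\alpha$ gives exactly the claimed exponent, matching the paper's one-line computation.
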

\begin{proof}
As in the proof of Lemma~\ref{lem:graph}, we can deduce that for at least some choice of $v$ and $\hat d$, the graph $G_v^{\hat d}$ has at most $\min\{n,O(\Delta k/d')\}=O(n^{\min\{1,\beta\}})$ vertices and contains a $k$-subgraph with average degree $\Omega(d'/k)$.

By Theorem~\ref{thm:DkS}, since $k=n^{\alpha}=\Omega( |V(G_v^{\hat d})|^{\alpha/\min\{1,\beta\}})$, the algorithm of~\cite{BCCFV10} will return a $(k-1)$-subgraph of $G_v^{\hat d}$ with average degree $$\Omega\left(\frac{d'/k}{n^{\eps}k^{1-\alpha/\min\{\beta,1\}}}\right)=\Omega\left(\frac{d'}{n^{\eps+\alpha(2-\alpha/\min\{\beta,1\})}}\right).$$ As noted in the proof of Lemma~\ref{lem:graph}, this corresponds to a $k$-subhypergraph of $H'$ with the same guarantee.
\end{proof}

\begin{remark} In the notation of Lemma~\ref{lem:DkSH-DkS} we have $\Delta/d'=n^{\beta-\alpha}$ which implies that $\beta\geq\alpha$ (since $\Delta\geq d'$).
\end{remark}

Trading off the various algorithms we have seen, we can now prove the guarantee stated in Theorem~\ref{thm:3DkH-main}

\begin{theorem}[Theorem~\ref{thm:3DkH-main} restated] 
For every constant $\eps>0$, there is an $O(n^{4(4-\sqrt{3})/13+\eps}) \leq O(n^{0.697831+\eps})$-approximation for Densest $k$-Subhypergraph in $3$-uniform hypergraphs.
\end{theorem}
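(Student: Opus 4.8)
The plan is to run three algorithms and output the best solution among them: the greedy Algorithm~\ref{alg:DkSH3-greedy1}, the D$k$S-based Algorithm~\ref{alg:DkSH3-DkS}, and the trivial algorithm that returns the vertex set spanned by $k/3$ arbitrary hyperedges. Exactly as in the proof of Theorem~\ref{thm:DkSH-basic}, I would first peel off the easy case with Lemma~\ref{lem:greedy2}: if at least half the optimal hyperedges meet $K_1$ we already have an $O(n^{1/4+\eps})$-approximation, so from here on $H'=H[V\setminus K_1]$ contains a $k$-subhypergraph of average degree $d'=\Omega(d)$ and maximum degree at most $\Delta$.

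Next I would pass to scaling exponents. Write $k=n^\alpha$ and $d=n^\delta$, and define $\beta$ by $\Delta k/d'=n^\beta$, so that $\Delta=n^\gamma$ with $\gamma=\beta-\alpha+\delta$. The feasibility constraints I will use are $\alpha\in[0,1]$, $0\le\delta\le 2\alpha$ (since $|E(H[K])|\le\binom k3$), $\alpha\le\beta$ (because $\Delta\ge d'$, and because the graph on which Algorithm~\ref{alg:DkSH3-DkS} invokes the D$k$S algorithm has only $O(n^{\min\{1,\beta\}})$ vertices yet contains a $k$-subgraph), and $\beta\le 2-\alpha$ (otherwise the count $\Omega(\Delta k^3/n^2)$ of hyperedges guaranteed by Lemma~\ref{lem:greedy1} would exceed the $\Theta(dk)$ hyperedges of the optimum). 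In these exponents: Lemma~\ref{lem:greedy1} yields approximation ratio $n^{\delta-(\gamma+2\alpha-2)}=n^{2-\alpha-\beta}$; Lemma~\ref{lem:DkSH-DkS} yields $n^{\eps+\alpha(2-\alpha/\min\{\beta,1\})}$; and the trivial algorithm yields $n^{\delta}\le n^{2\alpha}$. The ratio we achieve is $n$ raised to the minimum of these three exponents, so it suffices to prove
$$\min\bigl\{\, 2-\alpha-\beta,\ \eps+2\alpha-\tfrac{\alpha^2}{\min\{\beta,1\}},\ 2\alpha \,\bigr\}\ \le\ \tfrac{4(4-\sqrt3)}{13}+O(\eps)$$
for all feasible $(\alpha,\beta,\delta)$. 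Since $\delta$ has dropped out of the first two terms after substituting $\gamma=\beta-\alpha+\delta$, the minimum is non-decreasing in $\delta$, so we may take $\delta=2\alpha$ (which keeps the constraints satisfied, as then $\gamma=\alpha+\beta\le2$); the $2\alpha$ term is then dominated by the D$k$S term, except in the trivial corner $\alpha=O(\sqrt\eps)$ where the whole minimum is $O(\sqrt\eps)$.

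I would obtain the claimed value as (the exponent of) a weighted geometric mean of the three average-degree guarantees, organizing the computation by whether $\beta\le1$. In the main regime $\beta\le1$ the two active exponents are $f_1=2-\alpha-\beta$ and $f_2=2\alpha-\alpha^2/\beta$ (up to the additive $\eps$); $f_1$ is decreasing in $\beta$ and $f_2$ is increasing in $\beta$, so the worst parameters satisfy $f_1=f_2$. Setting $f_1=f_2=\rho$, substituting $\beta=2-\alpha-\rho$ into $\beta(2\alpha-\rho)=\alpha^2$, and requiring a real root $\alpha$ reduces to the condition $13\rho^2-32\rho+16\ge0$, whose smaller root is exactly $\rho=\tfrac{16-4\sqrt3}{13}=\tfrac{4(4-\sqrt3)}{13}$; at this point $\alpha/\beta=\sqrt3-1$, with $\alpha=(4-\rho)/6\approx0.55$ and $\beta\approx0.75$, so $\alpha\le\beta\le1$ holds and the regime is genuinely attained. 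In the regime $\beta>1$ the D$k$S exponent becomes $\eps+2\alpha-\alpha^2$ while $f_1<1-\alpha$, and balancing $1-\alpha$ against $2\alpha-\alpha^2$ gives a bound of only $(\sqrt5-1)/2\approx0.618<\rho$; a similar check on the remaining faces of the feasible region ($\alpha\in\{0,1\}$, $\beta\in\{\alpha,1\}$) gives values at most $2/3<\rho$. Combining the cases gives the theorem.

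I expect the main obstacle to be the $\min\{\beta,1\}$ appearing in Lemma~\ref{lem:DkSH-DkS}, which is what forces the case split and makes the extremal analysis nonlinear. The delicate point is to verify that the $\beta\le1$ critical point producing the $\sqrt3$ actually lies in the feasible region — in particular that $\delta=2\alpha$ is compatible with $\delta\le\gamma\le2$ and that $\alpha\le\beta\le1$ there — rather than being cut off by one of the side constraints (which would only improve the bound, but would still have to be identified), and then to confirm that no face of the region beats it. Everything else is routine: turning ``output the best of three algorithms'' into the weighted-mean inequality, and solving the resulting quadratic in $\rho$.
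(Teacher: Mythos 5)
Your proposal is correct and follows essentially the same approach as the paper: after peeling off the easy case via Lemma~\ref{lem:greedy2}, both trade off Algorithms~\ref{alg:DkSH3-greedy1} and~\ref{alg:DkSH3-DkS} in the $(\alpha,\beta)$-exponent plane, splitting on $\beta\le1$ versus $\beta>1$ and balancing $h_1=2-\alpha-\beta$ against $h_2=\alpha(2-\alpha/\beta)$. The only cosmetic differences are your (unnecessary but harmless) inclusion of the trivial algorithm as a third option, and your use of a discriminant condition on $\rho$ to locate the critical balance point, versus the paper's direct parameterization of the threshold $\beta=1-\tfrac{3\alpha}{2}+\sqrt{1-3\alpha+13\alpha^2/4}$ in terms of $\alpha$; both arrive at the same critical point $\alpha=(18+2\sqrt3)/39$ and ratio $n^{4(4-\sqrt3)/13+\eps}$.
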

\begin{proof}

By Lemma~\ref{lem:greedy2}, if at least half the optimal edges intersect $K_1$, then we can achieve a significantly better approximation (namely, $n^{1/4+\eps}$). 
Thus, from now on let us assume this is not the case. That is, $H'$ still contains a $k$-subhypergraph with average degree $\Omega(d)$. Again, recall that the maximum degree in $H'$  is at most $\Delta$.

As before, let $\alpha,\beta$ be such that $k=n^{\alpha}$ and $\Delta k/d=n^{\beta}$. By Lemma~\ref{lem:greedy1}, Algorithm~\ref{alg:DkSH3-greedy1} gives us a $k$-subhypergraph with average degree $$d_1=\Omega(\Delta k^2/n^2)=\Omega\left(\frac{d}{(d/\Delta)n^2/k^2}\right)=\Omega\left(\frac{d}{n^{\alpha-\beta}n^{2-2\alpha}}\right)=\Omega\left(\frac{d}{n^{2-\alpha-\beta}}\right).$$ On the other hand, by Lemma~\ref{lem:DkSH-DkS}, Algorithm~\ref{alg:DkSH3-DkS} to $H'$ will give us a $k$-subhypergraph with average degree $$d_2=\Omega\left(\frac{d}{n^{\eps+\alpha(2-\alpha/\min\{\beta,1\})}}\right).$$

Let us analyze the guarantee given by the best of Algorithm~\ref{alg:DkSH3-greedy1} and Algorithm~\ref{alg:DkSH3-DkS}. First, consider the case of $\beta>1$. In this case, taking the best of the two gives us approximation ratio at most $n^{\eps+\min\{2-\alpha-\beta,\alpha(2-\alpha)\}}\leq n^{\eps+\min\{1-\alpha,\alpha(2-\alpha)\}}$. It is easy to check that this minimum is maximized when $\alpha=(3-\sqrt{5})/2$ giving approximation ratio $n^{(\sqrt{5}-1)/2+\eps}\leq n^{0.618034+\eps}$, which is even better than our claim.

Now suppose $\beta\leq 1$. In this case, the approximation guarantee is $n^{\eps+\min\{h_1,h_2\}}$, where $h_1=2-\alpha-\beta$ and $h_2=\alpha(2-\alpha/\beta)$. If $\alpha\geq 2/3$, then it can be checked that we always have $h_1\leq h_2$ for any $\beta\in[\alpha,1]$, in which case we have approximation factor at most $n^{\eps+2-2/3-2/3}=n^{2/3+\eps}$, which is again better than our claim. On the other hand, if $\alpha\leq (3-\sqrt{5})/2$, then $h_2\leq h_1$ for any $\beta\leq 1$, and so for this range of $\alpha$ we get approximation factor at most $n^{\eps+\alpha(2-\alpha)}\leq n^{(\sqrt5-1)/2}$, which as we've noted is also better than our claim. Finally, if $\alpha\in((3-\sqrt5)/2,2/3)$ then a straightforward calculation shows that $$\min\{h_1,h_2\}=\left\{\begin{array}{ll} h_1 &\text{if }\beta\geq 1-\frac{3\alpha}2+\sqrt{1-3\alpha+13\alpha^2/4}\\ h_2 &\text{otherwise,}\end{array}\right.$$ and that the value of $\min\{h_1,h_2\}$ is maximized at this threshold value of $\beta$. And so for $\alpha$ in this range we have $\min\{h_1,h_2\}\leq 1+\alpha/2-\sqrt{1-3\alpha+13\alpha^2/4}$, which is maximized at $\alpha=\frac{18+2\sqrt{3}}{39}\approx 0.55$, giving approximation ratio $n^{\eps+4(4-\sqrt{3})/13}$.
\end{proof}

\section{Minimum $p$-Union in 3-uniform hypergraphs} \label{sec:MpU3}

In this section we explore Minimum $p$-Union (the minimization version of Densest $k$-Subhypergraph), and give the following guarantee:
\begin{theorem}\label{thm:MkU} 
There is an $\tilde{O}(n^{2/5})$-approximation algorithm for Minimum $p$-Union in 3-uniform hypergraphs.
\end{theorem}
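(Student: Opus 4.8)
My plan is to mirror the structure of the $O(n^{4/5})$ algorithm for Densest $k$-Subhypergraph from Section~\ref{sec:DkSH3}, but using the minimization-version building blocks, and then invoke the reduction of Lemma~\ref{lem:MkU-component} to go from a ``subhypergraph-finding'' guarantee to an actual approximation for M$p$U (paying only the extra $O(\log p)$ factor that is absorbed into the $\tilde O$). Recall from Lemma~\ref{lem:MkU-component} that it suffices to design an algorithm which, given the promise that some $k$-vertex subhypergraph contains $p$ hyperedges, returns a subhypergraph $(V',E')$ with $|V'|\le fk$ and $|E'|\ge |V'|p/(kf)$ for $f=\tilde O(n^{2/5})$. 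So the task reduces to finding, in a hypergraph with a planted dense piece (average degree $d=3p/k$), some induced subhypergraph whose density (edges-to-vertices ratio) is within an $O(n^{2/5})$ factor of $d/k$ --- but with the crucial flexibility that the output is allowed to have more than $k$ vertices, as long as the density bound scales correctly.

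The key steps, in order. First, as in Section~\ref{sec:DkSH3}, guess $k$, $p$ (hence $d$), and the degree threshold $\Delta$; set $K_1$ to be the top-$k/3$ vertices by degree, and either at least half the optimal hyperedges meet $K_1$ --- in which case a greedy/D$k$S-type argument analogous to Lemma~\ref{lem:greedy2} gives a much better bound --- or we may assume $H'=H[V\setminus K_1]$ still contains a $k$-subhypergraph with $\Omega(p)$ hyperedges and maximum degree $\le\Delta$. Second, run the analogue of Algorithm~\ref{alg:DkSH3-greedy1}: greedily picking $K_2$ by $K_1$-degree and $K_3$ by $(K_1,K_2)$-degree yields, by the Lemma~\ref{lem:greedy1} computation, a $k$-subhypergraph with $\Omega(\Delta k^3/n^2)$ hyperedges. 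Third, run the analogue of Algorithm~\ref{alg:DkSH3-neighbor}: for each vertex $v$ and degree guess $\hat d$, form the neighborhood graph $G_v$, prune low-degree vertices so that $G_v^{\hat d}$ has only $O(\Delta k/d)$ vertices, and apply a subgraph-extraction step; together with $v$ this produces (for the right guess) a subhypergraph with roughly $\Omega(d^2/\Delta)$ hyperedges on at most $k$ vertices. Fourth --- and this is where the minimization setting actually helps --- we also have the ``for free'' option of taking any single hyperedge, giving $3$ vertices covering $1$ hyperedge, which in the density-ratio accounting has ratio $1/3$; and more usefully, in M$p$U we can afford to return fewer than $p$ hyperedges at a time because Lemma~\ref{lem:MkU-component} only needs the per-iteration density bound. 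Finally, trade off these options: with $d_1=\Omega(\Delta k^3/n^2)$, $d_2=\Omega(d^2 k/\Delta)$ (as hyperedge counts on $k$ vertices), and the trivial bound, the geometric-mean trade-off --- now balancing only three quantities with exponents tuned to the M$p$U loss function rather than the D$k$SH one --- gives an $n^{2/5}$ rather than $n^{4/5}$ guarantee, because in the minimization direction the relevant quantity is $p$ (not $p/k$ times a further $k$ factor as in D$k$SH), effectively saving one power.

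The main obstacle I anticipate is getting the trade-off bookkeeping exactly right so that it yields $n^{2/5}$ and not something weaker: one must be careful that the ``fewer than $p$ hyperedges'' flexibility of Lemma~\ref{lem:MkU-component} is genuinely exploited (i.e.\ that each greedy/neighborhood subroutine is plugged into the iterative peeling of Lemma~\ref{lem:MkU-component} with the right $f$), and that the case split on whether the optimal hyperedges meet $K_1$ is handled so that the ``good'' case ($n^{1/4+\eps}$) dominates and the ``bad'' case is exactly where the $n^{2/5}$ arises. A secondary subtlety is that the neighborhood-based subroutine's analysis (the analogue of Lemma~\ref{lem:graph}) must be restated in terms of hyperedge counts with the correct dependence on $d$ and $\Delta$, and one must verify the constraint $k^2/d\ge 1$ (equivalently $p\le k^2$, automatic for $3$-uniform hypergraphs) is what pins the exponent down to $2/5$. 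I expect the write-up to consist of: (i) a min-version greedy lemma, (ii) a min-version neighborhood lemma, (iii) the trade-off computation, and (iv) the appeal to Lemma~\ref{lem:MkU-component} to convert the subhypergraph guarantee into the claimed $\tilde O(n^{2/5})$-approximation.
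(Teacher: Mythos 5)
Your high-level plan — mirror the D$k$SH machinery, then convert to an M$p$U guarantee via Lemma~\ref{lem:MkU-component} — is indeed the paper's strategy, and you correctly identify that the extra freedom to output more than $k$ vertices is what must be exploited. But the concrete instantiation in your proposal does not actually exploit it, and the quantitative plan you describe would not yield $n^{2/5}$.

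First, you take $K_1$ to be the top $k/3$ vertices by degree and run the greedy and neighborhood subroutines with size parameter $k$, producing subhypergraphs ``on $k$ vertices.'' The paper instead takes $K_1$ to be the top $kn^{2/5}$ vertices and runs Algorithm~\ref{alg:DkSH3-greedy1} with parameter $n^{2/5}k$, obtaining $\Omega(\Delta k^3/n^{4/5})$ hyperedges on $O(kn^{2/5})$ vertices (vs.\ your $\Omega(\Delta k^3/n^2)$ on $k$ vertices) — an improvement by a factor $n^{6/5}$ that is precisely what the M$p$U flexibility buys. Similarly, the neighborhood algorithm is run with a target size $\hat k = k\sqrt{p\Delta}/d$, which may well exceed $k$, yielding $\Omega(p)$ hyperedges on $O(kn^{2/5}/\sqrt k)$ vertices (Lemma~\ref{lem:graph-min}); your version with target $k$ gets only $\Omega(d^2/\Delta)$ hyperedges and a density ratio of $\Delta k/d$, which is not $O(n^{2/5})$ for the relevant range of $\Delta$.

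Second, the closing argument is not a geometric-mean trade-off. The paper's structure is a case split: if the enlarged greedy already captures $\ge p$ hyperedges within $O(kn^{2/5})$ vertices, invoke Lemma~\ref{lem:MkU-component} with $f = n^{2/5}$ and stop; otherwise one obtains the bound $\Delta = O(dn^{4/5}/k^2)$ (Eq.~\eqref{eq:Delta-bound}), and the crucial observation is that this, combined with the trivial fact $d \le \Delta$, \emph{forces} $k = O(n^{2/5})$. That constraint on $k$ is exactly what pins down $f = O(\max\{k, n^{2/5}/\sqrt k\}) = O(n^{2/5})$ in the neighborhood step. Your ``exponents tuned to the M$p$U loss function, effectively saving one power'' is not spelled out, and if one actually attempts a geometric mean with your $d_1, d_2$, one does not recover $n^{2/5}$ uniformly over $k$ and $\Delta$ (e.g.\ $\sqrt{d_1 d_2}$ with your values requires $k \gtrsim n^{3/5}$ or more to beat $p/n^{2/5}$). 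The $d_2 = \Omega(d^2 k/\Delta)$ you write also has an unexplained extra factor of $k$ relative to Lemma~\ref{lem:graph}. In short, you have the right skeleton but are missing the three parameter choices ($|K_1| = kn^{2/5}$, greedy size $n^{2/5}k$, neighborhood target $\hat k$) and the $d \le \Delta \Rightarrow k = O(n^{2/5})$ step, which together replace the trade-off computation you are hoping for.
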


Note that this is significantly better than the $n^{0.69\ldots}$-approximation we would get by reducing the problem to Densest $k$-Subhypergraph via Theorem~\ref{thm:DkHMkU} and applying the approximation algorithm from Theorem~\ref{thm:3DkH-main}.

In this problem, we are given a 3-uniform hypergraph $H=(V,E)$, and a parameter $p$, the number of hyperedges that we want to find. Let us assume that the optimal solution, $P \subseteq E$, has $k$ vertices (i.e.~$|\cup_{e \in P} e| = k$). We do not know $k$, but the algorithm can try every possible value of $k=1,\ldots,n$, and output the best solution. Thus, we assume that $k$ is known, in which case the average degree in the optimum solution is $d=3p/k$. 

Recall that it is not necessary to get $p$ edges in one shot. By Lemma~\ref{lem:MkU-component}, 
 it is enough to find any subhypergraph of size at most $kn^{2/5}$ with average degree at least $\Omega(d/n^{2/5})$.

We follow along the lines of D$k$SH by choosing vertex set $K_1$ to be the $kn^{2/5}$ vertices of largest degree. The following lemma (corresponding to Lemma~\ref{lem:greedy2} for D$k$SH) shows that if at least half the edges in $P$ intersect $K_1$, then by Lemma~\ref{lem:MkU-component} we are done.

\begin{lemma}\label{lem:fix-MpU}
Suppose that at least half of the optimal edges contain a vertex of $K_1$. Then we can find a subhypergraph with at most $O(kn^{2/5})$ vertices and average degree at least $\Omega(d/n^{2/5})$.
\end{lemma}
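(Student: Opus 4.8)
The plan is to mirror the case analysis from the proof of Lemma~\ref{lem:greedy2}, but carry the bookkeeping in the minimization regime where $K_1$ has $kn^{2/5}$ vertices (rather than $k/3$) and where the goal — by Lemma~\ref{lem:MkU-component} — is not to cover all of $P$ but merely to exhibit a subhypergraph on $O(kn^{2/5})$ vertices with average degree $\Omega(d/n^{2/5})$. By hypothesis there is a set $P'$ of at least $p/2 = dk/6$ optimal hyperedges each of which meets $K_1$. I would split $P'$ according to whether a hyperedge meets $K_1$ in at least two vertices or in exactly one, and handle the two subcases separately, taking the denser of the two outputs (together with the trivial ``pick $p$ arbitrary edges'' fallback if needed).

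First, the ``two vertices in $K_1$'' subcase. Suppose at least half of $P'$ — call it $P_2'$, so $|P_2'| \geq dk/12$ — consists of hyperedges $e$ with $|e \cap K_1| \geq 2$. For each vertex $u \in V$ define its $K_1$-weight as the number of pairs $\{v,x\} \subseteq K_1$ with $\{u,v,x\} \in E$; the hyperedges of $P_2'$ contribute total $K_1$-weight $\Omega(dk)$ spread over the $k$ vertices of $K = \cup_{e \in P} e$, so those $k$ vertices have average $K_1$-weight $\Omega(d)$. Greedily choosing the $k$ vertices of largest $K_1$-weight (out of all $n$) and adding them to $K_1$ yields a vertex set of size $|K_1| + k = O(kn^{2/5})$ that covers $\Omega(dk)$ hyperedges, i.e.\ average degree $\Omega(d) \geq \Omega(d/n^{2/5})$. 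Since $\Omega(dk)$ hyperedges already exceeds a $1/n^{2/5}$ fraction of $p$, Lemma~\ref{lem:MkU-component} applies with room to spare.

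Second, the ``one vertex in $K_1$'' subcase: $P_1' = P' \setminus P_2'$ with $|P_1'| \geq dk/12$ and $|e \cap K_1| = 1$ for every $e \in P_1'$. Exactly as in Case~2 of Lemma~\ref{lem:greedy2}, form the weighted graph $G$ on vertex set $V \setminus K_1$ where the weight of a pair $\{u,v\}$ is the number of $x \in K_1$ with $\{u,v,x\} \in E$; a $k'$-vertex subgraph of $G$ of total weight $w$ corresponds to a $(|K_1| + k')$-vertex subhypergraph of $H$ covering at least $w$ hyperedges. The $k$ vertices of $K$ induce total weight $\Omega(dk)$ in $G$, hence a $k$-subgraph of weighted average degree $\Omega(d)$. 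Here I would not invoke the D$k$S approximation algorithm (which would cost an extra polynomial factor); instead, since our target density loss is only $n^{2/5}$ and the ambient graph has $n$ vertices, it suffices to use a simple greedy/random selection of $k$ vertices — or even note that a random set of $kn^{2/5}$ of the relevant vertices captures a $\Omega(1/n^{2/5})$-fraction of the weight in expectation — to obtain a subgraph on $O(kn^{2/5})$ vertices of total weight $\Omega(dk/n^{2/5})$, which pulls back to a subhypergraph on $O(kn^{2/5})$ vertices covering $\Omega(dk/n^{2/5}) = \Omega(p/n^{2/5})$ hyperedges, i.e.\ average degree $\Omega(d/n^{2/5})$.

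The main thing to get right is the arithmetic of the degree/size tradeoff in the one-vertex subcase: we must confirm that selecting $\Theta(kn^{2/5})$ vertices from the weighted graph retains a $\Omega(1/n^{2/5})$ fraction of the weight of the best $k$-subgraph, so that the final subhypergraph simultaneously has $O(kn^{2/5})$ vertices \emph{and} average degree $\Omega(d/n^{2/5})$, matching the hypotheses of Lemma~\ref{lem:MkU-component}. This is the one place where the parameter $n^{2/5}$ (as opposed to any other exponent) has to be consistent with the choices made for $K_1$ and with the budget in Theorem~\ref{thm:MkU}; everything else is a direct transcription of the D$k$SH argument.
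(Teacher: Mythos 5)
Your Case 1 (hyperedges meeting $K_1$ in at least two vertices) is essentially identical to the paper's argument and is fine. The gap is in Case 2. The paper handles the weighted graph $G$ on $V \setminus K_1$ by invoking the S$p$ES approximation of Chlamt\'a\v{c}--Dinitz--Krauthgamer (more precisely, its weighted adaptation in~\cite{DKN14}), which guarantees a subgraph on at most $kf$ vertices, $f = n^{0.17+\eps}$, of total weight $\Omega(kd)$, and since $kf \ll kn^{2/5}$ this plugs straight into Lemma~\ref{lem:MkU-component}. You propose instead to ``use a simple greedy/random selection'' or a random $kn^{2/5}$-vertex subset. Neither of those, as stated, achieves the claimed density.

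Concretely: $G$ has up to $n$ vertices, and all you know is that some unknown $k$-subset has total weight $\Omega(dk)$. A uniformly random $s$-subset of the $n$ vertices contains each fixed pair from the dense $k$-subgraph with probability $\approx (s/n)^2$, so with $s = kn^{2/5}$ the expected captured weight is $\Theta\bigl((kn^{2/5}/n)^2\, dk\bigr) = \Theta\bigl(k^2 n^{-6/5}\cdot dk\bigr)$, which is $\Omega(dk/n^{2/5})$ only when $k \geq n^{2/5}$. But in this M$p$U regime the proof of Theorem~\ref{thm:MkU} actually derives $k = O(n^{2/5})$, so $k$ can be far smaller and the random-sample bound fails. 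A degree-greedy likewise has no obvious $O(n^{2/5})$ guarantee here (the classic degree-greedy for D$k$S gives only $O(n/k)$, which again is worse than $n^{2/5}$ precisely when $k = O(n^{2/5})$). What you actually need in this subcase is a nontrivial D$k$S/S$p$ES subroutine with an $O(n^{2/5})$ (or better) density guarantee --- e.g.\ Kortsarz--Peleg's $O(n^{2/5})$ algorithm for D$k$S, or, as the paper does, the weighted S$p$ES algorithm --- not a one-shot greedy or random pick. Your sentence ``This is the one place where the parameter $n^{2/5}$ has to be consistent'' correctly flags where the difficulty lies, but the argument you offer there does not close it.
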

\begin{proof}
  By our assumption, there is a set  of optimal hyperedges $P'\subset P$ of size at least $dk/6$ such that every edge in $P'$ intersects $K_1$.

As in the proof of Lemma~\ref{lem:greedy2}, if at least half the edges in $P'$ intersect $K_1$ in more than one vertex, then we can easily recover a set of $k$ vertices which along with $K_1$ contain at least $\Omega(p)=\Omega(kd)$ hyperedges. Since $|K_1|=kn^{2/5}$, this subgraph has $O(kn^{2/5})$ vertices and average degree $\Omega(d/n^{2/5})$ as required.

Thus, we may assume that at least half the edges in $P'$ intersect $K_1$ in exactly one vertex. Then again as in Lemma~\ref{lem:greedy2}, we define a graph $G$ on vertices $V\setminus K_1$ where every pair of vertices $u,v\in V\setminus K_1$ is an edge with weight $|\{x\in K_1\mid (u,v,x)\in E\}|$. Once again, subgraphs of $G$ with total edge weight $w$ correspond to a subhypergraphs of $H$ with at least $w$ edges, and in particular, $G$ contains a $k$-subgraph with average weighted degree at least $\Omega(d)$. Thus running the S$p$ES approximation of~\cite{CDK12} (or more precisely, the weighted version~\cite{DKN14}), gives a subgraph with at most $k f$ vertices and total edge weight at least $\Omega(kd)$ for some $f=n^{0.17+\eps}$ (which is well below $n^{2/5}$). Once again, the corresponding subhypergraph has at most $|K_1|+kf=O(kn^{2/5})$ vertices, and so the average degree is at least $\Omega(d/n^{2/5})$ as required.
\end{proof}

Thus, we will assume from now on that at least half of the hyperedges in $P$ \emph{do not} contain at least one vertex from $K_1$, i.e.~that $H'=H[V\setminus K_1]$ still contains at least half the hyperedges in $P$.  

As with D$k$SH, we now proceed with a greedy algorithm. Starting with the same vertex set $K_1$ defined above, it follows from Lemma~\ref{lem:greedy1} that if we run Algorithm~\ref{alg:DkSH3-greedy1} on $H$ with parameter $n^{2/5}k$, then we get a subhypergraph on $O(k n^{2/5})$ vertices induced on sets $K_1,K_2,K_3$ such that if the minimum degree in $K_1$ (which bounds the maximum degree in $V\setminus K_1$) is $\Delta$, then the subhypergraph has average degree $\Omega(\Delta k^2n^{4/5}/n^2)$. The total number of hyperedges in this subhypergraph is $\Omega(\Delta k^3n^{6/5}/n^2)=\Omega(\Delta k^3/n^{4/5})$. If this is at least $p=dk/3$, then we are done. Thus, we will assume from now on that $\Delta k^3/n^{4/5}=O(dk)$, that is 
\begin{equation}\label{eq:Delta-bound}
\Delta=O\left(\frac{dn^{4/5}}{k^2}\right).
\end{equation}

We reuse Algorithm~\ref{alg:DkSH3-neighbor} on $H'$, which gives us the following guarantee:

\begin{lemma}\label{lem:graph-min} Applying Algorithm~\ref{alg:DkSH3-neighbor} to the above hypergraph $H'$ with parameter $${\hat k}=\frac{k\sqrt{p\Delta}}{d}=\sqrt{\frac{k^3\Delta}{3d}}$$ returns a subhypergraph with at most $k f$ vertices and average degree at least $d/f$ for some $$f=O(\max\{k,n^{2/5}/\sqrt{k}\}).$$
\end{lemma}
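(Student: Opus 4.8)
The plan is to adapt the greedy argument of Lemma~\ref{lem:graph} (and its proof) essentially verbatim, but running Algorithm~\ref{alg:DkSH3-neighbor} with size parameter $\hat k$ in place of $k$, and choosing $\hat k$ exactly so that the edge count telescopes to $\Omega(p)$.

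First I would fix the good vertex. Since we are in the case where $H' = H[V\setminus K_1]$ still contains at least half the hyperedges of the optimum $P$ on at most $k$ vertices, $H'$ contains a subhypergraph with $\leq k$ vertices, $\geq p/2$ hyperedges, and hence average degree $\geq d/2$; pick a vertex $v$ of at-least-average degree there, so the link of $v$ (the graph $W$ whose edges are the pairs $\{u,x\}$ with $\{v,u,x\}\in P\cap H'$) has $\leq k$ vertices and $\geq d/2$ edges, and $W$ is a subgraph of $G_v$. Because $K_1$ consists of the $kn^{2/5}$ highest-degree vertices, $H'$ has maximum degree $\leq\Delta$, so $|E(G_v)|\leq\Delta$. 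Then, exactly as in Lemma~\ref{lem:graph}, I would look at the iteration for this $v$ with threshold $\hat d = \Theta(d/k)$: the pruning kills fewer than $k\hat d\leq d/4$ of $W$'s edges (only the $\leq k$ vertices of $W$ can destroy $W$-edges, each killing fewer than $\hat d$), so $G^{\hat d}_v$ is nonempty, has minimum degree $\geq\hat d=\Theta(d/k)$, and since $|E(G^{\hat d}_v)|\leq\Delta$ it has $N_0 := |V(G^{\hat d}_v)| = O(\Delta k/d)$ vertices.

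Next I would run the two greedy phases with parameter $\hat k$. The key point is that, because $G^{\hat d}_v$ has minimum degree $\geq\hat d=\Theta(d/k)$, any $(\hat k-1)/2$ of its vertices — in particular $S^{\hat d}_v$ — are incident to $\Omega(\hat k\hat d)=\Omega(\hat k d/k)$ of its edges; then the greedy choice of $T^{\hat d}_v$ recovers, by averaging, an $\Omega\!\big(\tfrac{(\hat k-1)/2}{N_0}\big)=\Omega(\hat k/N_0)$ fraction of the $S^{\hat d}_v$-to-rest edges (the sub-case where most of those edges already lie inside $S^{\hat d}_v$ is easier — just take $\{v\}\cup S^{\hat d}_v$). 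Hence $\{v\}\cup S^{\hat d}_v\cup T^{\hat d}_v$ carries $\Omega\!\big(\tfrac{\hat k^2 d}{kN_0}\big)$ hyperedges on $\leq\hat k$ vertices. Substituting $N_0 = O(\Delta k/d)$ and then the defining identity $\hat k^2 = k^3\Delta/(3d)$ collapses this bound to $\Omega(kd)=\Omega(p)$ hyperedges, i.e.\ average degree $\Omega(kd/\hat k)=\Omega\!\big(d/(\hat k/k)\big)$. Finally, the bound $\Delta = O(dn^{4/5}/k^2)$ from Eq.~\eqref{eq:Delta-bound} (established just before the lemma) gives $\hat k/k = \sqrt{k\Delta/(3d)} = O(n^{2/5}/\sqrt k)$, so this branch yields the claimed guarantee with $f = O(n^{2/5}/\sqrt k)$.

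What remains — and what I expect to be the main obstacle — is handling the degenerate regimes where the clean argument above breaks: when $G^{\hat d}_v$ is so small that $(\hat k-1)/2\geq N_0$ (so $S^{\hat d}_v$ and $T^{\hat d}_v$ just grab all of $V(G^{\hat d}_v)$), and when the ideal threshold $\Theta(d/k)$ falls outside the algorithm's search range $[\hat k-1]$. In the first regime one checks, using the minimum-degree bound, that the returned subhypergraph already has average degree $\Omega(d/k)$ on $O(N_0)$ vertices, so $f = O(k)$ suffices once one verifies that $N_0$ is only $O(k^2)$ here; in the second, substituting $\hat k=\tfrac{k^2}{3}\sqrt{\Delta/p}$ into the range inequality, together with $p\leq\binom k3$, forces $\Delta$ (hence $|E(G_v)|$) down to $O(k)$, so even $\hat d=1$ already makes $G_v$ small and $f = O(k)$ again. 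Reconciling all branches so that in each the pair (vertex count $\leq kf$, average degree $\geq d/f$) holds for a single $f = O(\max\{k, n^{2/5}/\sqrt k\})$ — which is precisely why the statement carries the $\max$ rather than just $n^{2/5}/\sqrt k$ — is the delicate part; everything else is a transcription of the proof of Lemma~\ref{lem:graph}.
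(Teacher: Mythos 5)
Your proof follows the paper's proof essentially verbatim: pick a good $v$, set $\hat d=\Theta(d/k)$ so that $G_v^{\hat d}$ has $N_0=O(\Delta k/d)$ vertices and minimum degree $\Omega(d/k)$, observe that $S_v^{\hat d}$ is incident to $\Omega(\hat k d/k)$ edges, let the greedy $T_v^{\hat d}$ recover an $\Omega(\hat k/N_0)$ fraction, and substitute $\hat k^2=k^3\Delta/(3d)$ and \eqref{eq:Delta-bound} to get $\Omega(p)$ edges on $\hat k=O(kn^{2/5}/\sqrt k)$ vertices. The paper handles the degenerate branch exactly as you expect: if $G_v^{\hat d}$ has fewer than $\hat k$ vertices, the algorithm returns it whole, with $\leq\hat k$ vertices and average degree $\Omega(d/k)$.

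Two small remarks on the ``delicate'' branches you flag. First, you do not need to verify $N_0=O(k^2)$ in the small-graph regime: the vertex count is bounded by $\hat k=O(kn^{2/5}/\sqrt k)$ and the degree by $\Omega(d/k)$, and the single $f=O(\max\{k,n^{2/5}/\sqrt k\})$ already absorbs both bounds simultaneously (this is exactly why the $\max$ is in the statement). Second, the ``threshold out of range'' regime is vacuous: since $d\leq 2\Delta$ (the $H'$ subhypergraph has max degree $\leq\Delta$ and average degree $\geq d/2$) and $d<k^2/2$, a short calculation gives $d/(2k)\leq\hat k/2\leq\hat k-1$ for all $k\geq 3$, and the problem is trivial for $k\leq 2$ in $3$-uniform hypergraphs. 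So your fallback argument there, while not wrong in spirit, is never invoked.
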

\begin{proof}
As in the proof of Lemma~\ref{lem:graph}, we can deduce that for at least some choice of $v$ and $\hat d$, the graph $G_v^{\hat d}$ has at most $O(\Delta k/d)$ vertices and has minimum degree at least $\Omega(d/k)$. 

Note that we may not even have ${\hat k}$ vertices in $G^{\hat d}_v$. If we do have at least ${\hat k}$ vertices, then the greedy choice of $S^{\hat d}_v$ gives us $\Omega({\hat k}d/k)$ edges incident in the set (in fact, any choice of $\Omega({\hat k})$ vertices would do). The greedy choice of $T^{\hat d}_v$ then reduces the number of edges by (in the worst case) a ${\hat k}/(\Delta k/d)$-factor, giving us a total number of edges
$$\Omega\left(\frac{{\hat k}^{\,2}d^2}{\Delta k^2}\right)=\Omega(p).$$
Thus, in this case, we only need to bound the size of the subgraph. By~\eqref{eq:Delta-bound}, we can bound ${\hat k}$ as follows:
$${\hat k}=\sqrt{\frac{k^3\Delta}{3d}}=O\left(\sqrt{\frac{dn^{4/5}}{k^2}\cdot\frac{k^3}{d}}\right)=O\left(k\cdot\frac{n^{2/5}}{\sqrt{k}}\right),$$ which proves the lemma for this case.

If we do not have ${\hat k}$ vertices in $G^{\hat d}_v$, then the algorithm simply returns $G^{\hat d}_v$ itself, which has at most $\hat k=O(k\cdot n^{2/5}/\sqrt{k})$ vertices and average degree at least $\Omega(d/k)$, as required.

As noted in the proof of Lemma~\ref{lem:graph}, this corresponds to a subhypergraph of $H'$ with the same guarantee.
\end{proof}

We can now prove the main theorem.

\begin{proof}[Proof of Theorem~\ref{thm:MkU}]
By Lemma~\ref{lem:graph-min} and Lemma~\ref{lem:MkU-component}, it suffices to show that $\max\{k,n^{2/5}/\sqrt{k}\}=O(n^{2/5})$. 
 Since clearly $n^{2/5}/\sqrt{k}\leq n^{2/5}$, let us consider the parameter $k$. By definition of $d$ and $\Delta$, we clearly have $d\leq\Delta$, thus, by~\eqref{eq:Delta-bound} we have
 \begin{equation*}
d\leq\Delta=O\left(\frac{dn^{4/5}}{k^2}\right)
\end{equation*}
which implies $k=O(n^{2/5})$, and so the theorem follows.
\end{proof}

\section{Interval Hypergraphs} \label{sec:interval}
We show now that D$k$S and M$p$U can be solved in polynomial time on interval hypergraphs.
We only give an algorithm for M$p$U; a similar algorithm for D$k$S follows then from 
Observation~\ref{obs:poly-time-reduction}.

As defined in Section~\ref{sec:prelims}, a hypergraph $H = (V, E)$ is an interval hypergraph, if $V \subseteq \mathbb{N}$
and for each $e \in E$ there are integers $a_e, b_e$ such that $e = \{i \in V \, : \, a_e \le i \le b_e \}$.
Solving M$p$U on $H$ can be interpreted as finding $p$ intervals with minimum joint support.

\begin{theorem}
\emph{Minimum $p$-Union} is solvable in polynomial time on interval hypergraphs.
\end{theorem}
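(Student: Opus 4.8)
The plan is to solve M$p$U on an interval hypergraph $H = (V,E)$ by dynamic programming over the line. First I would normalize the instance: relabel the vertices of $V$ as $1, 2, \ldots, n$ in increasing order (this does not change which sets are intervals), and describe each hyperedge $e$ by its left endpoint $\ell(e)$ and right endpoint $r(e)$ in this relabeled coordinate system. The key structural observation is that once we decide which contiguous block of vertices $[i,j] = \{i, i+1, \ldots, j\}$ will be ``used'' (i.e.\ will form the union $\cup_{e \in E'} e$ of the selected hyperedges), the hyperedges available to us are exactly those $e$ with $\ell(e) \geq i$ and $r(e) \leq j$, and there is no reason not to take all of them; so M$p$U on an interval hypergraph reduces to: over all contiguous blocks $[i,j]$ such that at least $p$ hyperedges are contained in $[i,j]$, minimize $j - i + 1$. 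Actually one must be slightly careful — the union of the $p$ chosen hyperedges need not be all of $[i,j]$ — but we can simply range over all $O(n^2)$ pairs $(i,j)$, and for each count $c(i,j) = |\{e \in E : i \le \ell(e), r(e) \le j\}|$ (computable for all pairs in $O(n^2 + m)$ total time by a standard two-dimensional prefix-sum over endpoint pairs), and among all $(i,j)$ with $c(i,j) \ge p$ return the one minimizing $j-i+1$. This already gives a polynomial-time exact algorithm, and strictly speaking one should argue optimality: an optimal solution $P$ has $\cup_{e \in P} e$ equal to some set of vertices $W$; since $W$ is a union of intervals it is contained in the block $[\min W, \max W]$, and that block contains at least $|P| = p$ hyperedges (namely the ones in $P$), so $(i,j) = (\min W, \max W)$ is one of the candidates we consider, and $|W| \le j - i + 1$ with equality achievable — hence our minimum is at most $|W| = \mathrm{OPT}$, and conversely every candidate block with $c(i,j) \ge p$ yields a feasible solution of cost $\le j-i+1$ by selecting $p$ of its contained hyperedges (whose union is a subset of $[i,j]$).

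So in fact a full dynamic program is not even needed for M$p$U in its pure form; the ``brute force over blocks'' already works. I would nonetheless phrase it as a clean two-phase computation: Phase 1 computes $c(i,j)$ for all $1 \le i \le j \le n$; Phase 2 scans these and outputs the shortest block achieving count $\ge p$, then reports $p$ hyperedges inside it. This runs in $O(n^2 + nm)$ or better and is clearly polynomial. The analogous statement for D$k$SH then follows from Observation~\ref{obs:poly-time-reduction}, as the excerpt already notes, though one could equally run the symmetric ``brute force over blocks'' directly: over all $(i,j)$ with $j - i + 1 \le k$, maximize $c(i,j)$.

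The step I expect to require the most care — and the place where a reader will want a genuine argument rather than a wave of the hand — is the optimality claim: that restricting attention to solutions whose selected hyperedges are *all* hyperedges contained in some contiguous block $[i,j]$ loses nothing. The subtlety is that $|E'|$ must equal exactly $p$, not ``at least $p$,'' so when $c(i,j) > p$ we select an arbitrary $p$-subset, and then the union of that subset could be strictly smaller than $[i,j]$ — which only helps us, so the bound $\mathrm{cost} \le j-i+1$ is still valid, but one must state this carefully so that feasibility (exactly $p$ hyperedges) and the cost bound are both clearly maintained. A second minor point to handle cleanly is degenerate inputs: if no block $[1,n]$ itself contains $p$ hyperedges then the instance is infeasible, which should be detected and reported. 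Once these bookkeeping issues are dispatched, correctness and the polynomial running time are immediate.
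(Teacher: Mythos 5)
There is a genuine gap, and it is precisely at the spot you flagged as needing care. Your algorithm enumerates contiguous blocks $[i,j]$, computes $c(i,j)$, and among all blocks with $c(i,j)\ge p$ returns the one minimizing $j-i+1$. The optimality argument needs $\min_{(i,j):\,c(i,j)\ge p}(j-i+1)\le \mathrm{OPT}$, but the argument you give runs in the wrong direction: for an optimal $P$ with union $W$, you observe that the block $[\min W,\max W]$ is a feasible candidate and that $|W|\le \max W-\min W+1$. That inequality goes the wrong way --- it shows the candidate's cost is \emph{at least} $|W|$, not at most. The suppressed assumption ``with equality achievable'' is false in general: the union of $p$ intervals on a line need not itself be an interval, and the slack can be arbitrarily large. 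Concretely, take $V=\{1,\dots,n\}$, $e_1=\{1\}$, $e_2=\{n\}$, $e_3=\{1,\dots,n\}$, $p=2$. The optimum is $2$, achieved by $\{e_1,e_2\}$ with disconnected union $\{1,n\}$. The only block with $c(i,j)\ge 2$ is $[1,n]$, with span $n$. Your algorithm would report cost $n$ (and, since it selects an \emph{arbitrary} $p$-subset of hyperedges inside the block, it could actually output $\{e_1,e_3\}$, whose union genuinely has size $n$). So both the algorithm and its analysis fail, with an unbounded gap.

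The deeper point is that ``pick a block and take all hyperedges inside it'' cannot be the right reduction, because when $c(i,j)>p$ the choice of which $p$ hyperedges to keep materially affects the union size, and selecting those $p$ optimally is just a smaller instance of M$p$U again. The paper's dynamic program sidesteps this by never reasoning about the span of a window. Instead, it sorts hyperedges by right endpoint, defines $A[i,j]$ as the minimum union size among any $j$ hyperedges drawn from $e_1,\dots,e_i$ with $e_i$ forced in, and uses the recurrence $A[i,j]=\min_{i^*}\bigl(A[i^*,\,j-|C_i\setminus C_{i^*}|]+|e_i\setminus e_{i^*}|\bigr)$, where $C_i$ is the family of hyperedges contained in $e_i$. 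The increment $|e_i\setminus e_{i^*}|$ counts exactly the new vertices contributed by $e_i$ beyond the previous rightmost non-nested hyperedge $e_{i^*}$, so the DP tracks the true union size and correctly accumulates disconnected pieces. That per-step exact accounting of newly covered vertices is the missing idea in your write-up; without it, the block-span surrogate overcounts whenever the optimal union has a gap.
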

\begin{proof}
Let $b_1, ..., b_m$ be the largest elements in hyperedges $e_1, ..., e_m$ respectively, and assume that $b_i \leq b_j$ for any $i < j$.  Similarly let $a_1, ..., a_m$ be the smallest elements in $e_1, ..., e_m$ respectively.

We present a dynamic programming algorithm which calculates for each $j \leq i$ the optimal solution to an instance of Minimum $p$-Union on the hyperedges $e_1, ..., e_i$ with $p=j$ under the constraint that $e_i$ belongs to the solution.  Let $A[i, j]$ store the value of this optimal solution.  Assume that the values of $A$ have been computed for all  $i', j'$ with $j' \leq i' < i$.  We show how to compute $A[i, j]$ for any $j \leq i$.

We partition the hyperedges $e_1, ..., e_{i}$ in three sets $A_i, B_i, C_i$ with $A_i$ containing all hyperedges disjoint from $e_i$, $B_i$ containing all hyperedges intersecting but not included in $e_i$, and $C_i$ containing $e_i$ and all hyperedges included in $e_i$ (see Fig.\ \ref{fig:hyperedge_partition}).  Therefore we have: 
 \begin{enumerate}
	\item $b_{i'} < a_i$ for all $e_{i'} \in A_i$,
	\item $a_{i'} < a_i\leq b_{i'}$ for all $e_{i'} \in B_i$, and 
	\item $a_{i} \leq a_{i'} \leq b_{i'} \leq b_{i}$ for all $e_{i'} \in C_i$.
\end{enumerate}

Clearly, for every $j \leq |C_i|$ we have $A[i, j] = |e_i|$ since by definition of $A$, $e_i$ is included in the solution, and adding any other $j-1$ sets from $C_i$ to the solution does not increase the size of the union.  In the remainder of the proof, when we refer to an optimal solution corresponding to  $A[i', j']$ for some indices $i'$ and $j'$ we always mean a solution that uses the maximum number of sets in $C_{i'}$.

For any $t\geq 0$ and $j = t + |C_i|$, the optimal solution contains exactly $t$ sets in $A_i \cup B_i$.   Fix an optimal solution $OPT_i$ corresponding to $A[i, j]$ and let $e_{i^*}$ be the hyperedge with largest $b_{e_{i^*}}$ in $OPT_i$ that does not belong to $C_i$. We show that 
\begin{align}\label{eq:recursive}
A[i, j] = A[i^*, j - | C_i \setminus C_{i^*}| ] + |e_{i} \setminus e_{i^*}|.
\end{align}
Then, by considering every hyperedge with index $i' < i$ as the possible $i^*$ in Eq. (\ref{eq:recursive}) and taking the minimum value, one can compute $A[i, j]$ in linear time.

To complete the proof, we argue why Equation~\ref{eq:recursive} holds. First observe that a solution with 
value $A[i, j]$ exists. Indeed, by adding all elements of $C_i \setminus C_{i^*}$ to an optimal solution for  $A[i^*, j - | C_i \setminus C_{i^*}| ]$ we obtain a solution for $A[i, j]$ covering exactly $|e_{i} \setminus e_{i^*}|$ additional elements.  Next, assume that the value of $A[i, j]$ is less than that of Equation~\ref{eq:recursive}. Then we 
can obtain a solution for $A[i^*, j - | C_i \setminus C_{i^*}|]$ by removing from $OPT_i$ all the elements in $|C_i \setminus C_{i^*}|$ to obtain a solution with value at most $A[i, j] - |e_{i} \setminus e_{i^*}|$, contradicting the fact that $A[i^*, j - | C_i \setminus C_{i^*}| ]$ is the value of an optimal solution.
\end{proof}

 \begin{figure}
 \begin{center}
\definecolor{ffqqqq}{rgb}{1.0,0.0,0.0}
\definecolor{qqqqff}{rgb}{0.0,0.0,1.0}
\definecolor{qqffqq}{rgb}{0.0,1.0,0.0}
\begin{tikzpicture}[line cap=round,line join=round,>=triangle 45,x=1.0cm,y=1.0cm]
\clip(0.9,-0.6999999999999991) rectangle (12.940000000000005,5.240000000000003);
\draw [dotted] (1.0,3.0)-- (2.0,3.0);
\draw [dashed] (2.0,2.0)-- (8.0,2.0);
\draw [line width=2.0pt] (5.0,1.0)-- (12.0,1.0);
\draw [] (10.0,3.0)-- (11.0,3.0);
\draw [] (7.0,3.0)-- (9.0,3.0);
\draw [] (5.0,3.0)-- (6.0,3.0);
\draw [dotted] (3.0,3.0)-- (4.0,3.0);
\draw [rotate around={-1.4057492610463362:(4.530000000000001,3.16)},dashed] (4.530000000000001,3.16) ellipse (1.8960147050873561cm and 0.9676630415116041cm);
\draw (2.2,4.1) node[anchor=north west] {$C_{i'}$};
\draw (11.620000000000005,0.8200000000000014) node[anchor=north west] {$e_i$};
\draw (2.1800000000000006,2.0000000000000018) node[anchor=north west] {$e_{i'}$};
\begin{scriptsize}
\draw [fill=black] (1.0,3.0) circle (1.5pt);
\draw [fill=black] (2.0,3.0) circle (1.5pt);
\draw [fill=black] (2.0,2.0) circle (1.5pt);
\draw [fill=black] (8.0,2.0) circle (1.5pt);
\draw [fill=black] (5.0,1.0) circle (1.5pt);
\draw [fill=black] (12.0,1.0) circle (1.5pt);
\draw [fill=black] (10.0,3.0) circle (1.5pt);
\draw [fill=black] (11.0,3.0) circle (1.5pt);
\draw [fill=black] (7.0,3.0) circle (1.5pt);
\draw [fill=black] (9.0,3.0) circle (1.5pt);
\draw [fill=black] (5.0,3.0) circle (1.5pt);
\draw [fill=black] (6.0,3.0) circle (1.5pt);
\draw [fill=black] (3.0,3.0) circle (1.5pt);
\draw [fill=black] (4.0,3.0) circle (1.5pt);
\end{scriptsize}
\end{tikzpicture}
\vspace{-1.3cm}
\caption{Partitioning of hyperedges induced by $e_i$.  The dotted edges form set $A_i$, the dashed edge forms set $B_i$ and the elements of $C_i$ are represented by continuous edges.  The set $C_{i'}$ is also shown in dashed pattern.}
\label{fig:hyperedge_partition}
\end{center}
\end{figure}
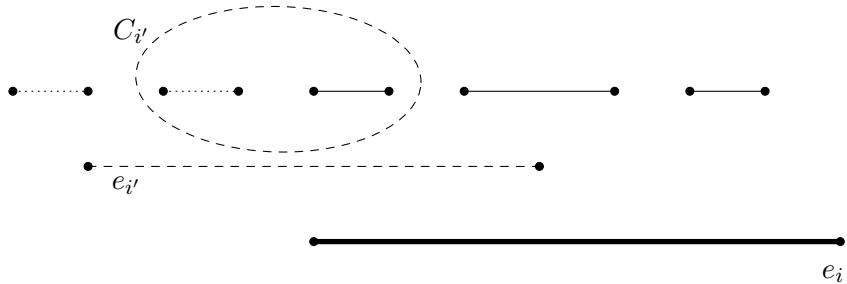

\section{Open problems} \label{sec:open}

While no tight hardness results are known for Densest $k$-Subgraph and Smallest $p$-Edge Subgraph, there are lower bounds given by the log-density framework~\cite{BCCFV10,CDK12}. In this framework, one considers the problem of distinguishing between a random graph and a graph which contains a planted dense subgraph. It has been conjectured that for certain parameters (namely, when the ``log-density" of the subgraph is smaller than that of the host graph), this task is impossible, thus giving lower bounds on the approximability of these problems. In the graph setting, the existing algorithm of~\cite{BCCFV10,CDK12} match these lower bounds.

However, in the hypergraph case, our current algorithms are still far from the corresponding lower bounds. In $c$-uniform hypergraphs, the lower bounds predicted by the log-density framework are $n^{(c-1)/4}$ for Densest $k$-Subhypergraph and $n^{1-2/(\sqrt{c}+1)}$ for Min $p$-Union. For $c=3$, for example, these lower bounds give $n^{1/2}$ and $n^{2-\sqrt{3}}=n^{0.2679\ldots}$, respectively (contrast with our current guarantees of $n^{0.6978\ldots}$ and $n^{0.4}$). The existing approach for the graph case does not seem to easily carry over to hypergraphs, and it remains a technical challenge to match the log-density based predictions for hypergraphs of bounded rank. 

For arbitrary rank, the lower bound given by the log-density framework is $m^{1/4}$ (note that we do not expect to achieve approximations that are sublinear in $n$ in this case), as opposed to our current guarantee of $\sqrt{m}$. In general hypergraphs, 
 one may also hope for hardness results which at the moment are elusive for the graph case or for bounded rank hypergraphs.

There is also an interesting connection between M$p$U/D$k$SH and the \emph{Small-Set Vertex Expansion} problem (SSVE)~\cite{AG11,LRV13,LM14}.  In Small-Set Vertex Expansion we are given a graph $G$ and a parameter $\delta$, and are asked to find the a set $V' \subseteq V$ with $|V'| \leq \delta n$ in order to minimize $\frac{|\{v \in V \setminus V' : v \in \Gamma(v)\}|}{|V'|}$.  Given a graph $G$, consider the collection of neighborhoods $\hat E = \{ \Gamma(v) : v \in V\}$ and the hypergraph $H = (V, \hat E)$.  If we let $p = \delta n$, the M$p$U problem (choosing $p$ hyperedges in $H$ to minimize their union) is quite similar to the SSVE problem.  The main 
difference is that 
 SSVE only ``counts" nodes that are in $V \setminus V'$, while M$p$U would also count nodes in $V'$.  It is known~\cite{Yury} that this special case of M$p$U reduces to SSVE, so it is no harder than SSVE, but it is not clear how much easier it is.  This motivates the study of M$p$U when hyperedges are neighborhoods in an underlying graph, and studying the approximability of this problem is an interesting future direction.

\bibliographystyle
{abbrv}
\bibliography{union}

%%%%%%%%%%%%%%%%%%%%%%%%%%%%%%%%%%%%%%%%%%%%%%%%%%%%%%%%%%%%%

\appendix

\section{Finding a Set of Minimum Expansion} \label{app:flow}
Given a bipartite graph $G=(E, V, F)$, the subroutine \textsc{Min-Exp}$(G)$ returns a subset of $E$ so that
$$\frac{|\textsc{Min-Exp}(G)|}{|\Gamma_G(\textsc{Min-Exp}(G))|} \ge \frac{|E'|}{|\Gamma_G(E')|}, $$
for every subset $E' \subseteq E$. 
Minimally expanding subsets of this kind have previously been used (e.g. in \cite{kr13,gkk12}) in communication
settings where computation time is disregarded. We therefore present a polynomial time implementation 
for \textsc{Min-Exp} using network flows. An alternative algorithm can be derived from a straightforward adaptation of a linear programming approach for the graph case due to Charikar~\cite{Charikar} to our setting (see Appendix~\ref{sec:charikar} for more details).

Let $N_q = (\tilde{G}, c_q, s, t)$ be a flow network with directed bipartite graph 
$\tilde{G} = (E \cup \{t \}, V \cup \{s \}, \tilde{F})$, capacities $c_q$ parameterized by a parameter $q$ 
with $\frac{m}{n} < q < m$, source $s$ and sink $t$ as follows (and as illustrated in Figure~\ref{fig:cut}):

\begin{enumerate}
 \item Vertex $s$ is connected to every $e \in E$ via directed edges (leaving $s$) with capacity $1$. 
 \item Every $v \in V$ is connected to $t$ via a directed edge (directed towards $t$) with capacity $q$.
 \item Edges from $F$ are included in $\tilde{F}$ and directed from $E$-vertex to $V$-vertex with capacity $\infty$.
\end{enumerate}

\begin{figure}
\begin{center}
  \includegraphics[height=6cm]{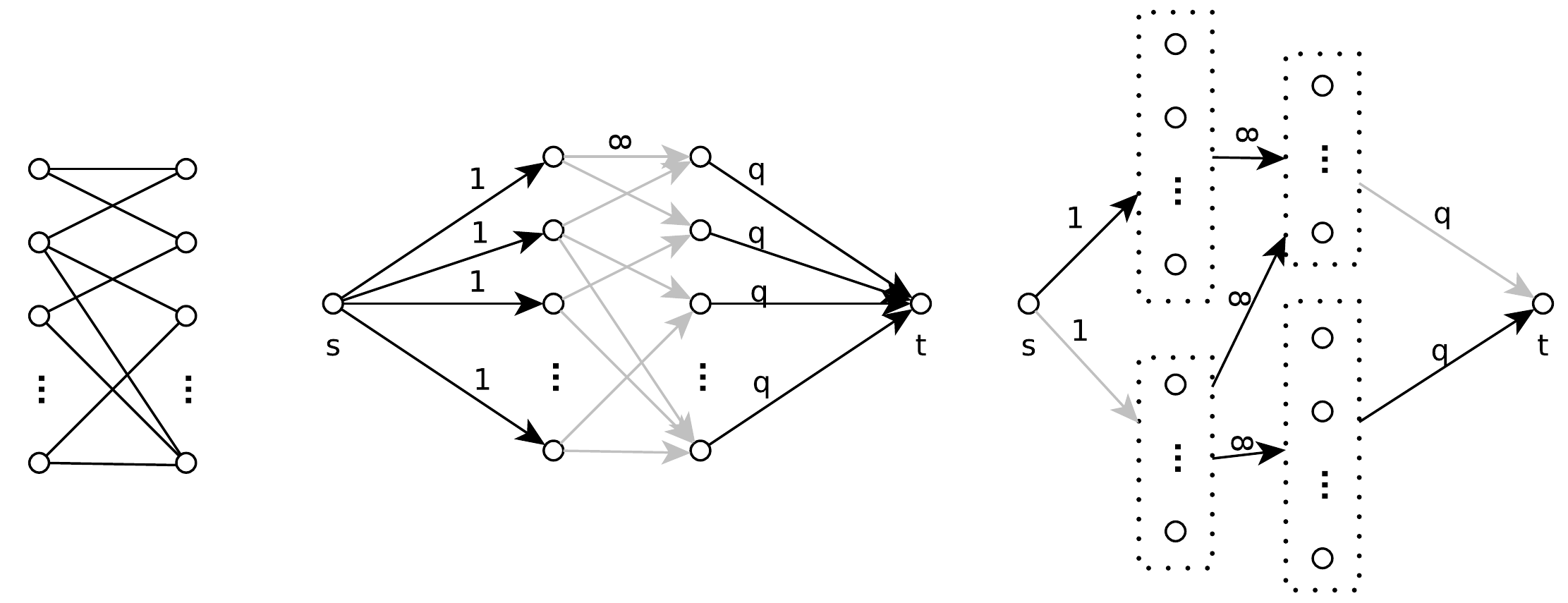}
  
  \vspace{-5.3cm}
  \hspace{9.35cm} $E_s \quad \quad \quad \quad \quad \quad \quad V_s$
  
  \vspace{3.8cm}
  
  Graph $G$ \hspace{3.3cm} Network $N_q$ \hspace{3.5cm} $E_t \quad \quad \quad \quad \quad \quad \quad V_t \quad \quad \quad$
    \vspace{0.2cm}
\end{center}
\caption{Left: Input graph $G$. Center: Network $N_q$. Right: Min-$s$-$t$-cut. Gray edges are cut edges.\label{fig:cut}}
\end{figure}

Denote by $C^*$ a minimum $s$-$t$ cut in $N_q$ and let $val(C^*)$ be the value of the cut. 
Since cutting all edges incident to vertex $s$ results in a cut of value $m$, the min-cut value is at most $m$ 
and thus finite, and, in particular, no edge connecting $E$ to $V$ is included in the min-cut. 
Denote by $E_s$ the set of $E$-vertices that, when removing the cut-edges from the graph, are incident to $s$,
and let $E_t = E \setminus E_s$. Let $V_s = \Gamma_G(E_s)$ and let $V_t = V \setminus V_s$. Since removing
$C^*$ from $\tilde{G}$ separates $s$ from $t$, all outgoing edges from $V_s$ are included in $C^*$. Furthermore,
since $C^*$ is a minimum cut, none of the edges leaving $V_t$ are contained in the cut. The resulting structure 
is illustrated on the right in Figure~\ref{fig:cut}. The value of the cut is computed as follows:
\begin{eqnarray}
 val(F^*) = |E_t| + q \cdot |V_s|. \label{eqn:392}
\end{eqnarray}

We prove now a property connecting the value of a minimum cut to the expansion of a subset of $E$.
This property allows us then to define an efficient algorithm for \textsc{Min-Exp}.

\begin{lemma}\label{lem:flow}
 Let $q$ be such that $\frac{m}{n} < q < m$. Then: 
 $$val(F^*) < m  \Leftrightarrow \exists E' \subseteq E: \frac{|E'|}{|\Gamma_G(E')|} > q.$$
\end{lemma}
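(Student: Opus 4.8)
The plan is to prove both directions of the biconditional by relating the cut value formula $val(F^*) = |E_t| + q\cdot|V_s|$ to the expansion ratio. The key identity to exploit is that the set $E_s$ (the $E$-vertices on the source side after removing the cut) satisfies $\Gamma_G(E_s) = V_s$ by construction, since no infinite-capacity edge can be in a finite cut, so any $e \in E_s$ has all its neighbors on the source side as well. Thus $|E_s|/|\Gamma_G(E_s)| = |E_s|/|V_s| = (m - |E_t|)/|V_s|$, which rearranges to $val(F^*) = m - |E_s| + q|V_s| = m + |V_s|(q - |E_s|/|V_s|)$.

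First I would prove the forward direction ($val(F^*) < m \Rightarrow \exists E'$). From the displayed identity, $val(F^*) < m$ forces $|V_s|(q - |E_s|/|V_s|) < 0$, hence $|V_s| > 0$ (so $E_s$ is nonempty) and $|E_s|/|V_s| > q$. Taking $E' = E_s$ gives $|E'|/|\Gamma_G(E')| > q$ as required. For the reverse direction, suppose some $E' \subseteq E$ has $|E'|/|\Gamma_G(E')| > q$. I would exhibit an explicit cut of value less than $m$: cut the edges from $s$ into $E \setminus E'$ (cost $m - |E'|$) and the edges from $\Gamma_G(E')$ into $t$ (cost $q\cdot|\Gamma_G(E')|$). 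This is a valid $s$–$t$ cut since any source-to-sink path goes $s \to e \to v \to t$, and if $e \in E'$ is not cut then $v \in \Gamma_G(E')$ so the $v\to t$ edge is cut. The value of this cut is $m - |E'| + q|\Gamma_G(E')| < m - |E'| + |E'| = m$ using the hypothesis, so $val(F^*) \le m - |E'| + q|\Gamma_G(E')| < m$.

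The only genuinely delicate point — and the step I expect to need the most care — is verifying that the structural claims about the min cut (namely $E_t = E \setminus E_s$, $V_s = \Gamma_G(E_s)$, no infinite edge in the cut, and the value formula~\eqref{eqn:392}) are used only where they actually hold: these come from the finiteness of the min cut, which in turn uses $q < m$ so that cutting all of $s$'s edges is a candidate of value exactly $m$. For the reverse direction I must also double-check the corner case $\Gamma_G(E') = \emptyset$, which cannot occur since every hyperedge is nonempty in the incidence graph, so $|\Gamma_G(E')| \ge 1$ whenever $E' \ne \emptyset$, and the hypothesis $|E'|/|\Gamma_G(E')| > q$ already forces $E' \ne \emptyset$. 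Everything else is arithmetic rearrangement of the cut-value expression.
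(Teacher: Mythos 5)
Your proposal is correct and follows essentially the same route as the paper: both directions use the cut-value formula $val(F^*) = |E_t| + q|V_s| = m - |E_s| + q|\Gamma_G(E_s)|$, take $E' = E_s$ for the forward implication, and exhibit the explicit cut (edges from $s$ to $E \setminus E'$ plus edges from $\Gamma_G(E')$ to $t$) for the reverse. You spell out a couple of sanity checks (validity of the candidate cut, the $\Gamma_G(E') = \emptyset$ corner case) that the paper leaves implicit, but the argument is the same.
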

\begin{proof}
Suppose that $val(F^*) < m$. We prove that $E' = E_s$ fulfills the claimed property. The value of the
cut $val(F^*)$ is computed according to Inequality~\ref{eqn:392} as follows:
\begin{eqnarray*}
 m > val(F^*) = |E_t| + q \cdot |V_s| = m - |E_s| +qp \cdot |V_s| = m - |E'| + q \cdot |\Gamma_G(E')|,
\end{eqnarray*}
which implies $\frac{|E'|}{|\Gamma_G(E')|} > q$ as desired.

 Suppose now that there is a $E' \subseteq E$ such that $\frac{|E'|}{|\Gamma_G(E')|} > q$. Then the set of edges
 $C$ consisting of those that connect $s$ to $E \setminus E'$ and those that connect $\Gamma_G(E')$ to $t$
 form a cut. We compute $val(C)$: 
 \begin{eqnarray*}
  val(C) & = & |E \setminus E'| + q |\Gamma_G(E')| = m - |E'| + q |\Gamma_G(E')| < m - |E'| + |E'| = m. 
 \end{eqnarray*}
The fact that $val(C^*) \le val(C)$ completes the proof. 
\end{proof}

Lemma~\ref{lem:flow} allows us to test whether there is a subset $E' \subseteq E$ such that 
$\frac{|E'|}{|\Gamma_G(E')|} > q$, for some value of $q$. For every set $E' \subseteq E$, we have
$\frac{|E'|}{|\Gamma_G(E')|} \in \{ \frac{a}{b} \, : \, a \in \{1, \dots, m \}, b \in \{1, \dots, n\} \}$. 
We could thus test all values $\frac{a}{b} - \epsilon$, for $a \in \{1, \dots, m \}, b \in \{1, \dots, n\}$ 
and a small enough $\epsilon$, in order to identify the desired set (or use a binary search to speed up the
process). Since computing a min-cut can be done in polynomial time, we obtain the following theorem:

\begin{theorem}
 Algorithm \textsc{Min-Exp} can be implemented in polynomial time.
\end{theorem}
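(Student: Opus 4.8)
The plan is to build a parametrized max-flow / min-cut network $N_q$ exactly as described above and use it as a decision oracle for the question ``does there exist $E'\subseteq E$ with $|E'|/|\Gamma_G(E')| > q$?''. First I would fix the network: add a source $s$ with unit-capacity arcs into each hyperedge-vertex $e\in E$, add a sink $t$ with capacity-$q$ arcs from each $v\in V$ into $t$, and orient every incidence $(e,v)\in F$ from $E$ to $V$ with capacity $\infty$. The first observation is that cutting all $m$ source arcs is always a finite cut of value $m$, so a minimum cut never uses an infinite-capacity incidence arc; hence any min cut is described by a ``closed'' set $E_s\subseteq E$ (those $E$-vertices still reachable from $s$) together with its neighborhood $V_s = \Gamma_G(E_s)$, which must be fully cut off from $t$. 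This gives the clean formula $val(C^\ast) = |E_t| + q\,|V_s| = m - |E_s| + q\,|\Gamma_G(E_s)|$ for the value of the minimum cut, which I would record as Equation~(\ref{eqn:392}).

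Next I would prove the key equivalence (Lemma~\ref{lem:flow}): $val(C^\ast) < m$ iff some $E'$ has $|E'|/|\Gamma_G(E')| > q$. The forward direction just plugs $E' = E_s$ into the cut-value formula: $m > m - |E'| + q|\Gamma_G(E')|$ rearranges to $|E'|/|\Gamma_G(E')| > q$. For the reverse direction, given a good $E'$, I would exhibit an explicit cut — sever the source arcs to $E\setminus E'$ and the sink arcs from $\Gamma_G(E')$ — which is a valid $s$–$t$ cut (no $E$-vertex of $E'$ can reach $t$ since all of $\Gamma_G(E')$ is disconnected from $t$) of value $m - |E'| + q|\Gamma_G(E')| < m$, and then invoke $val(C^\ast)\le val(C)$.

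Finally, to turn this decision oracle into \textsc{Min-Exp}, I would observe that the optimal ratio $\max_{E'} |E'|/|\Gamma_G(E')|$ lies in the finite set $\{a/b : 1\le a\le m,\ 1\le b\le n\}$, which has polynomially many values. So I would run the min-cut computation on $N_{q}$ for $q = a/b - \epsilon$ over all such $a/b$ (with $\epsilon$ smaller than the minimum gap between distinct such fractions, or equivalently use $q$ values interleaving the grid; a binary search over the sorted list of candidate ratios also works and is faster), pick the largest $q$ for which $val(C^\ast) < m$, and return the corresponding set $E_s = E'$. Since min-cut is polynomial-time and we make polynomially many calls, \textsc{Min-Exp} runs in polynomial time. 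The only genuinely delicate point is the structural claim that the min cut is captured by a downward-closed set $E_s$ with $V_s = \Gamma_G(E_s)$ (needed both for the value formula and to recover the witness set) — this is where the $\infty$ capacities on incidence arcs do the work, forcing $\Gamma_G(E_s)\subseteq V_s$, while minimality forces the reverse inclusion $V_s\subseteq\Gamma_G(E_s)$; I would make sure that argument is stated carefully rather than merely read off the figure.
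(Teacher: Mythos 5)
Your proposal is correct and takes essentially the same route as the paper: construct the parametrized network $N_q$, prove the equivalence $val(C^\ast) < m \Leftrightarrow \exists E' : |E'|/|\Gamma_G(E')| > q$ via the two cut-value computations, and then sweep (or binary-search) over the polynomially many candidate ratios $a/b$ with $q = a/b - \epsilon$ to recover the witness set $E_s$. Your extra attention to the structural claim that the min cut is determined by $E_s$ with $V_s = \Gamma_G(E_s)$ (forced by the infinite incidence capacities on one side and minimality on the other) is a point the paper states more tersely but relies on in the same way.
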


\section{An LP-based algorithm for Minimum Expansion}\label{sec:charikar}

We use hypergraph notation in this section.  So the goal is to find a set $E' \subseteq E$ which minimizes $|\cup_{e \in Ee'} e| / |E'|$ over all choices of $E'$ (so there is no requirement that $|E'| = p$).

We use the following LP relaxation, which is a straightforward adaptation of Charikar's~\cite{Charikar} algorithm for graphs. 

\begin{align*}
  \mathrm{LP} = \min\quad&\sum_{i\in V}x_i\\
                      \text{s.t.}\quad&\sum_{e\in E}y_e=1\\
                      &x_i\geq y_e&\forall e\in E, i\in e\\
                      &x_i\geq 0 &\forall i\in V\\
                      &y_e\geq 0 &\forall e\in E\\
\end{align*}

Consider the following simple rounding algorithm:
\begin{itemize}
  \item Pick $r\in_R[0,1]$ uniformly at random.
  \item Let $E'=\{e\in E\mid x_e\geq r\}$.
  \item Let $V'=\bigcup_{e\in E_{\mathrm{ALG}}}e$.
\end{itemize}
Clearly, for every vertex $e\in E$ we have $$\prob[e\in E']=y_e.$$ Also, for every vertex $i\in V$ we have $$\prob[i\in V']=\max_{e\ni i}y_e\leq x_i.$$
Therefore, by linearity of expectation, we have $$\expec[\mathrm{LP}\cdot|E'|-|V'|]\geq \mathrm{LP}\cdot 1-\mathrm{LP}=0,$$ and this is obviously still true when we condition the expectation on $|E'|>0$ (a positive probability event), so with positive probability, we get a pair $(V_0,E_0)$ such that $E_0\neq\emptyset$, $V_0=\bigcup_{e\in E_0}e$ and $|V_0|/|E_0|\leq\mathrm{LP}$. The rounding is trivially derandomized by trying $r=y_e$ for every vertex $e\in E$.

\end{document}